\newcommand{\A}{\ensuremath{\mathcal{A}}}
\newcommand{\B}{\ensuremath{\mathcal{B}}}
\newcommand{\I}{\ensuremath{\mathcal{I}}}
\renewcommand{\L}{\ensuremath{\mathcal{L}}}
\newcommand{\V}{\ensuremath{\mathcal{V}}}
\newcommand{\W}{\ensuremath{\mathcal{W}}}
\newcommand{\set}[1]{\{#1\}}                      
\newcommand{\tup}[1]{\langle #1\rangle}            
\newcommand{\rng}[1]{\textsc{rng}(#1)}
\newcommand{\mult}[1]{#1^\oplus}
\newcommand{\supp}[1]{\mathit{supp}(#1)}
\newcommand{\funsym}[1]{\mathtt{#1}}
\newcommand{\setsym}[1]{\mathit{#1}}
\newcommand{\restr}[2]{\left.{#1}\right|_{#2}}
\newcommand{\powerset}[1]{\raisebox{.15\baselineskip}{\Large\ensuremath{\wp}}(#1)}
\newcommand{\cname}[1]{\ensuremath{\mathtt{#1}}\xspace} 
\newcommand{\type}{\ensuremath {type}}
\newcommand{\id}[1]{\ensuremath \mathit{Id}(#1)}
\newcommand{\emptysequence}{\ensuremath \epsilon}
\newcommand{\sequence}[1]{\ensuremath \langle #1 \rangle}
\newcommand{\concat}{\cdot }
\newcommand{\proj}[2]{\ensuremath {#1}_{{\mid {#2}}}}
\newcommand{\tsys}[1]{\Gamma_{#1}}
\newcommand{\states}{S}
\newcommand{\hide}[1]{\funsym{\hat H}_{#1}}
\newcommand{\colset}{\funsym{C}}
\newcommand{\naturals}{\mathbb{N}}
\newcommand{\tpnid}{t-PNID\xspace}
\newcommand{\tpnids}{t-PNIDs\xspace}
\newcommand{\p}{\mathrm{p}\xspace}
\newcommand{\places}{P}
\newcommand{\transitions}{T}
\newcommand{\flow}{F}
\newcommand{\enabled}[2]{#1[#2\rangle}
\newcommand{\fire}[3]{\enabled{#1}{#2}#3}
\newcommand{\reachable}[2]{\mathcal{R}(#1,#2)}
\newcommand{\marking}{\mathit{M}}
\newcommand{\pre}[1]{{{^\bullet{#1}}}}
\newcommand{\post}[1]{{{#1^\bullet}}}
\newcommand{\invar}[1]{{\setsym{In}({#1})}}
\newcommand{\outvar}[1]{{\setsym{Out}({#1})}}
\newcommand{\var}[1]{{\setsym{Var}({#1})}}
\newcommand{\newvar}[1]{{\setsym{Emit}({#1})}}
\newcommand{\delvar}[1]{{\setsym{Collect}({#1})}}
\newcommand{\inp}{\mathit{in}}
\newcommand{\outp}{\mathit{out}}
\definecolor{golden}{rgb}{1.0, 0.84, 0.0}
\definecolor{indigo}{rgb}{0.0, 0.25, 0.42}
\definecolor{mgreen}{rgb}{0.128,0.428,0}
\definecolor{burntorange}{rgb}{0.8, 0.33, 0.0}
\definecolor{camouflagegreen}{rgb}{0.47, 0.53, 0.42}
\definecolor{copperrose}{rgb}{0.6, 0.4, 0.4}
\tikzstyle{placelem}=[draw,
\tikzstyle{place}=[circle,thick,draw=black,fill=white,minimum size=7mm,font=\fontsize{9}{144}\selectfont]
\tikzstyle{transition}=[rectangle,thick,draw=black,fill=gray!20,minimum size=7mm]
\tikzstyle{enabledtransition}=[rectangle,very thick,draw=green!75,fill=green!20,minimum size=7mm]
 \tikzstyle{container}=[rectangle,rounded corners,very thick,draw=black!75,fill=black!20,minimum height=7mm,minimum width=14mm]
\tikzstyle{rplace}=[circle,ultra thick,draw=violet!75,fill=violet!20,minimum size=7mm]
\tikzstyle{erbox}=[draw, fill=gray!20, minimum width=7em, text width=6.0em, text centered,
\tikzstyle{placelem}=[draw,
\tikzstyle{netelem}=[draw,
\tikzstyle{relationselem}=[placelem,fill=pink!20]
\tikzstyle{noopelem}=[placelem,fill=orange!20]
\tikzstyle{enteredplace}=[place,fill=yellow!20]
\tikzstyle{boundplace}=[place,fill=yellow!30]
\tikzstyle{guardokplace}=[place,fill=yellow!40]
\tikzstyle{updatedplace}=[place,fill=yellow!50]
\tikzstyle{violplace}=[place,fill=red!10]
\tikzstyle{constrokplace}=[place,fill=green!10]
\tikzstyle{docommitplace}=[place,fill=green!30]
\tikzstyle{dorollbackplace}=[place,fill=red!30]
\tikzstyle{arc}=[-stealth',thick]
\tikzstyle{readarc}=[-,thick]
\tikzset{
state/.style={
       rectangle,
       fill=yellow!5,
       rounded corners,
       draw=black,  thick,
       minimum height=2em,
       minimum width=1.5cm,
       inner sep=1pt,
       text centered,
       }
}
\begin{document}


\setcounter{page}{159}
\publyear{24}
\papernumber{2169}
\volume{190}
\issue{2-4}

\finalVersionForARXIV

\title{Correctness Notions for Petri Nets with Identifiers}

\author{Jan Martijn E.M. van der Werf \thanks{Address for correspondence: Utrecht University,  Princetonplein 5, 3584 CC Utrecht,
                                  The Netherlands.}
  \\
Utrecht University\\
Princetonplein 5, 3584 CC Utrecht\\
The Netherlands\\
j.m.e.m.vanderwerf@uu.nl
\and Andrey Rivkin\\
Technical University of Denmark\\
Richard Petersens Plads 321\\
 2800 Kgs.,  Lyngby, Denmark\\
ariv@dtu.dk
\and Marco Montali\\
Free University of Bozen-Bolzano\\
piazza Domenicani 3\\
 39100 Bolzano,  Italy\\
montali@inf.unibz.it
\and Artem Polyvyanyy\\
The University of Melbourne\\
Grattan Street, Parkville \\
Victoria, 3010, Australia\\
artem.polyvyanyy@unimelb.edu.au
}

\maketitle

\runninghead{Van~der~Werf et al.}{Correctness Notions for Petri Nets with Identifiers}

\vspace*{-4mm}
\begin{abstract}
A model of an information system describes its processes and how resources are involved in these processes to manipulate data objects. This paper presents an extension to the Petri nets formalism suitable for describing information systems in which states refer to object instances of predefined types and resources are identified as instances of special object types. Several correctness criteria for resource- and object-aware information systems models are proposed, supplemented with discussions on their decidability for interesting classes of systems. These new correctness criteria can be seen as generalizations of the classical soundness property of workflow models concerned with process control flow correctness.\vspace*{-1mm}
\end{abstract}

\begin{keywords}
Information System, Verification, Data Correctness, Resource Correctness
\end{keywords}

\section{Introduction}

Petri nets are widely used to describe distributed systems capable of expanding their resources indefinitely~\cite{Reisig2013}.
A \emph{Petri net} describes passive and active components of a system, modeled as places and transitions, respectively.
The active components of a Petri net communicate asynchronously with each other via local interfaces.
Thus, state changes in a Petri net system have local causes and effects and are modeled as tokens consumed, produced, or transferred by the transitions of the system.
A \emph{token} is often used to denote an \emph{object} in the physical world the system manipulates or a \emph{condition} that can cause a state change in the system.

\medskip
Petri nets with identifiers~\cite{HSVW09} extend classical Petri nets to provide formal means to relate tokens to objects.
Every token in such a Petri net is associated with a vector of identifiers, where each identifier uniquely identifies a data object.
Consequently, active components of a Petri net with identifiers model how groups of objects, either envisioned or those existing in the physical world, can be consumed, produced, or transferred by the system.

It is often desirable that modeled systems are correct.
Many criteria have been devised for assessing the correctness of systems captured as Petri nets.
Those criteria target models of systems that use tokens to represent conditions that control their state changes.
In other words, they can be used to verify the correctness of processes the systems can support and not of the object manipulations carried out within those processes.
Such widely-used criteria include boundedness~\cite{Karp1969}, liveness~\cite{Hack1974}, and soundness~\cite{Aalst1997}.
The latter one, for instance, ensures that a system modeled as a \emph{workflow net}, a special type of Petri nets used to encode workflows at organizations, has a terminal state that can be distinguished from other states of the system, the system can always reach the terminal state, and every transition of the system can in principle be enabled and, thus, used by the system.

Real-world systems, such as information systems~\cite{PolyvyanyyWOB19}, are characterized by processes that manipulate objects.
For instance, an online retailer system manipulates products, invoices, and customer records.
However, although tools allow designing such models~\cite{WerfP20}, initial use showed that correctness criteria addressing both aspects, that is, the processes and data, are understood less well~\cite{WerfP18}.
The paper at hand closes this gap.

\medskip
In this paper, we propose a correctness criterion for Petri nets with identifiers that combines the checks of the soundness of the system's processes with the soundness of object manipulations within those processes.
Intuitively, objects of a specific type are correctly manipulated by the system if every object instance of that type, characterized by a unique identifier, can ``leave'' the system, that is, a dedicated transition of the system can consume it, and once that happens, no references to that object instance remain in the system.
When a system achieves this harmony for its processes and all data object types, we say that the system is \emph{identifier sound}, or, alternatively, that the data and processes of the system are in \emph{resonance}.
Specifically, this paper makes these contributions:\medskip

\begin{compactitem}
\itemsep=1.5pt
\item
It motivates and defines the notion of \emph{identifier soundness} for checking correctness of data object manipulations in processes of a system;
\item
It proposes a resource-aware extension for systems and defines a suitable correctness criterion building on top of the one of identifier soundness and requiring that system resources are managed \emph{conservatively};
\item
It discusses aspects related to \emph{decidability of identifier soundness} in the general case and for certain restricted, but still useful, classes of systems;
\item
It establishes connections with existing results on verification of data-aware processes and shows which verification tasks are decidable for object-aware systems.
\end{compactitem}

\medskip
The paper proceeds as follows.
The next section introduces concepts and notions required to support subsequent discussions.
Section~\ref{sec:pnids} introduces typed Petri nets with identifiers, a model for modeling distributed systems whose state is defined by objects the system manipulates.
Section~\ref{sec:soundness} presents various correctness notions for typed Petri nets with identifiers, including identifier soundness, and demonstrates a proof that the notion is in general undecidable. Moreover, the section discusses the connection to existing verification results and shows which verification tasks are decidable for typed Petri nets with identifiers.
Section~\ref{sec:soundness:by:construction} discusses several classes of systems for which identifier soundness is guaranteed by construction.
Section~\ref{sec:soundness:with:resources} presents the formalism extension with resource management capabilities and discusses a series of results, including resource-aware soundness, that is deemed to be undecidable.
Finally, the paper concludes with a discussion of related work and future work.

\makeatletter
\newcommand{\xdasharrow}[2][->,>=angle 90]{
\tikz[baseline=-\the\dimexpr\fontdimen22\textfont2\relax]{
\node[anchor=south,font=\scriptsize, inner ysep=1.5pt,outer xsep=2.5pt](x){\ensuremath{#2}};
\draw[shorten <=3.4pt,shorten >=3.4pt,dashed,#1](x.south west)--(x.south east);
}
}

\section{Preliminaries}
Let $S$ and $T$ be sets.
The powerset of $S$ is denoted by $\powerset{S} = \{ S' \mid S' \subseteq S\}$ and $|S|$ denotes the cardinality of $S$. 
Given a relation $R \subseteq S \times T$,
its range is defined by $\rng{R} = \{y \in T \mid \exists x \in S : (x,y)\in R\}$. \footnote{Notice that $R$ can be also seen as a function $R:X\to T$.}
A \emph{multiset} $m$ over $S$ is a mapping of the form $m:S\rightarrow \naturals$, where $\naturals = \{0, 1, 2, \ldots\}$ denotes the set of natural numbers.
For $s \in S$, $m(s) \in \mathbb{N}$ denotes the number of times $s$ appears in the multiset.
For $x \not\in S$, $m(x) = 0$.
We write $s^n$ if $m(s)=n$.
We use $\mult S$ to denote the set of all finite multisets over $S$ and $\emptyset$ to denote the \emph{empty multiset}.
The support of $m\in\mult S$ is the set of elements that appear in $m$ at least once: $\supp{m} = \set{s\in S\mid m(s) > 0}$.
Given two multisets $m_1$ and $m_2$ over $S$, we consider the following standard multiset operations:\smallskip

\begin{compactitem}
\itemsep=1.2pt
\item $m_1 \leq m_2$ iff $m_1(s) \leq m_2(s)$ for each $s \in S$;
\item $m_1 + m_2 = \set{s^n\mid s\in S, n =m_1(s) + m_2(s)}$;
\item if $m_1 \leq m_2$, $m_2 - m_1 = \set{s^n \mid s\in S, n=m_2(s) - m_1(s)}$. \smallskip
\end{compactitem}

\noindent We also write $|m|=\sum_{s\in S}m(s)$ to denote the cardinality of $m$.
%
A \emph{sequence} over $S$ of length $n \in \naturals$ is a function $\sigma : \{1,\ldots,n\} \to S$. If $n > 0$ and $\sigma(i) = a_i$, for $1\leq i \leq n$, we write $\sigma = \sequence{a_1, \ldots, a_n}$.
The length of $\sigma$ is denoted by $|\sigma|$ and is equal to $n$. The sequence of length $0$ is called the \emph{empty sequence}, and is denoted by $\emptysequence$. The set of all finite sequences over $S$ is denoted by $S^*$.
We write $a \in \sigma$ if there is $1 \leq i \leq |\sigma|$ such that $\sigma(i) = a$.
\emph{Concatenation} of two sequences $\nu,\gamma \in S^*$, denoted by $\sigma = \nu \concat \gamma$, is a sequence defined by $\sigma : \{ 1, \ldots, |\nu|+|\gamma|\}\rightarrow S$, such that $\sigma(i) = \nu(i)$ for $1 \leq i \leq |\nu|$, and $\sigma(i) = \gamma(i - |\nu|)$ for $|\nu|+1 \leq i \leq |\nu|+|\gamma|$.
We define the projection of sequences on a set $T$ by induction as follows:
\begin{inparaenum}[\it (i)]
\item $\proj{\emptysequence}{T} = \emptysequence$;
\item $\proj{(\sequence{a}\concat\sigma)}{T} = \sequence{a}\concat\proj{\sigma}{T}$, if $a \in T$;
\item $\proj{(\sequence{a}\concat\sigma)}{T} = \proj{\sigma}{T}$, if $a\not\in T$.
\end{inparaenum}
Renaming sequence $\sigma$ with an injective function $r : S \rightarrow T$ is defined inductively by $\rho_r(\emptysequence) = \emptysequence$, and  $\rho_r(\sequence{a}\concat\sigma) = \sequence{r(a)}\concat\rho_r(\sigma)$. Renaming is extended to multisets of sequences as follows: given a multiset $m \in \mult{(S^*)}$, we define $\rho_r(m) = \sum_{\sigma\in\supp{m}} \sigma(m)\cdot\rho_r(\sigma)$. For example, $\rho_{\{x\mapsto a, y \mapsto b\}}(\sequence{x,y}^3) = \sequence{a,b}^3$.

\medskip\noindent
\textbf{Labeled Transition Systems.}\quad
To model the behavior of a system, we use \emph{labeled transition systems}.
Given a finite set $A$ of (action) labels, a \emph{(labeled) transition system} (LTS) over $A$ is a tuple \linebreak $\tsys{}{} = (S,A,s_0, \to)$, where $S$ is the (possibly infinite) set of \emph{states}, $s_0$ is the \emph{initial state} and \linebreak
$\to\ \subseteq (S\times (A \cup \{\tau\}) \times S)$ is the \emph{transition relation}, where $\tau\not\in A$ denotes the silent action~\cite{glabbeekoverviewii}. 
In what follows, we write $s \xrightarrow{a} s'$ for $(s,a,s') \in \to$.
Let $r : A \to (A' \cup \{\tau\})$ be a total function.
Renaming $\tsys{}{}$ with $r$ is defined as $\rho_{r}(\Gamma) = (S,  A'\cup\set{\tau}, s_0, \to')$ with $(s,r(a),s') \in \to'$ iff $(s,a,s') \in \to$.
Given a set $T$, hiding is defined as $\hide{T}(\Gamma) = \rho_{h}(\Gamma)$ with $h : A \rightarrow A \cup \{\tau\}$ such that $h(t) = \tau$ if $t \in T$ and $h(t)=t$ otherwise.
Given $a\in A$, $p \xdasharrow{~a~} q$ denotes a \emph{weak transition relation} that is defined as follows:
\begin{inparaenum}[\it (i)]
	\item $p \xdasharrow[->]{~a~} q$ iff $p (\xrightarrow{\tau})^* q_1\xrightarrow{a}q_2 (\xrightarrow{\tau})^* q$;
	\item $p \xdasharrow[->]{\ensuremath{~\tau~}} q$ iff $p (\xrightarrow{\tau})^* q$.
\end{inparaenum}
Here, $(\xrightarrow{\tau})^*$ denotes the reflexive and transitive closure of $\xrightarrow{\tau}$.

\begin{definition}[Strong and weak bisimulation]
\label{def:bisimulation}\label{def:strong-bisimulation}
Let $\tsys{1} = (\states_1,A,s_{01},\to_1)$ and $\tsys{2} = (\states_2,A,s_{02},\to_2)$ be two LTSs.
A relation $R\subseteq(\states_1 \times \states_2)$ is called a \emph{strong simulation}, denoted as $\tsys{1} \prec_R \tsys{2}$, if for every pair $(p,q)\in R$ and $a\in A \cup \{\tau\}$, it holds that if $p\xrightarrow{a}_1 p'$, then there exists $q'\in\states_2$ such that $q \xrightarrow{a}_2 q'$ and $(p',q')\in R$. Relation $R$ is a \emph{weak simulation}, denoted by $\tsys{1} \preccurlyeq_R \tsys{2}$, iff  for every pair $(p,q)\in R$ and $a\in A \cup \{\tau\}$ it holds that if $p\xrightarrow{a}_1 p'$, then either $a = \tau$ and $(p', q) \in R$, or there exists $q'\in\states_2$ such that $q \xdasharrow{~a~}_{\hspace{-.5ex} 2} ~q'$ and $(p',q')\in R$.

$R$ is called a strong (weak) \emph{bisimulation}, denoted by $\tsys{1} \sim_R \tsys{2}$ ($\tsys{1} \approx_R \tsys{2}$) if both $\tsys{1} \prec \tsys{2}$ ($\tsys{1} \preccurlyeq_R \tsys{2}$) and $\tsys{2} \prec_{R^{-1}} \tsys{1}$ ($\tsys{2} \preccurlyeq_{R^{-1}} \tsys{1}$).
The relation is called \emph{rooted} iff $(s_{01}, s_{02}) \in R$. A rooted relation is indicated with a superscript ${}^r$.
\end{definition}

\medskip\noindent
\textbf{Petri nets.}\quad
A weighted Petri net is a 4-tuple $(\places,\transitions,\flow,W)$ where
$\places$ and $\transitions$ are two disjoint sets of \emph{places} and \emph{transitions}, respectively, $F \subseteq ((P \times T) \cup (T\times P))$ is the \emph{flow relation}, and $W : ((P \times T) \cup (T\times P)) \rightarrow \naturals$ is a \emph{weight function} such that $W(f) > 0$ iff $f\in F$.
For $x\in\places\cup\transitions$, we write $\pre{x}=\set{y\mid (y,x)\in\flow}$ to denote the \emph{preset} of $x$ and $\post{x}=\set{y\mid (x,y)\in\flow}$ to denote the \emph{postset} of $x$.
We lift the notation of preset and postset to sets element-wise.
If for a Petri net no weight function is explicitly defined, we assume $W(f) = 1$ for all $f \in F$.
A \emph{marking} of $N$ is a multiset $m \in \mult P$, where $m(p)$ denotes the number of \emph{tokens} in place $p \in P$. If $m(p) > 0$, place $p$ is called \emph{marked} in marking $m$.
A \emph{marked Petri net} is a tuple $(N, m)$ with $N$ a weighted Petri net with marking $m$.
A transition $t\in T$ is enabled in $(N,m)$, denoted by $\enabled{(N,m)}{t}$ iff $W((p,t)) \leq m(p)$ for all $p \in \pre{t}$. An enabled transition can \emph{fire}, resulting in marking $m'$ iff $m'(p) + W((p,t)) = m(p) + W((t,p))$, for all $p \in P$, and is denoted by $\fire{(N,m)}{t}{(N,m')}$. We lift the notation of firings to sequences. A sequence $\sigma \in T^*$ is a \emph{firing sequence of $(N,m_0)$} iff $\sigma = \emptysequence$, or markings $m_0, \ldots, m_n$ exist such that $\fire{(N,m_{i-1})}{\sigma(i)}{(N,m_{i})}$ for $1 \leq i \leq |\sigma| = n$, and is denoted by $\fire{(N,m_0)}{\sigma}{(N,m_n)}$.
If the context is clear, we omit $N$, and just write $\fire{m_0}{\sigma}{m_n}$.
The set of reachable markings of $(N,m)$ is defined by $\reachable{N}{m} = \{ m' \mid \exists \sigma \in \transitions^* : \fire{m}{\sigma}{m'} \}$.
The semantics of a marked Petri net $(N,m_0)$ with $N = (\places, \transitions, \flow, W)$ is defined by the LTS $\Gamma_{N,m_0} = (\mult P, T, m_0, \to)$ with $(m,t,m') \in \to$ iff $\fire{m}{t}{m'}$.

\medskip
\noindent
\textbf{Workflow Nets.}\quad
A \emph{workflow net} (WF-net for short) is a tuple $N=(\places, \transitions, \flow,W,\inp,\outp)$ such that:
\begin{inparaenum}[\it (i)]
	\item $(\places, \transitions, \flow, W)$ is a weighted Petri net;
	\item $\inp,\outp\in\places$ are the source and sink place, respectively, with $\pre{\inp} =\post{\outp}= \emptyset$;
	\item every node in $\places \cup \transitions$ is on a directed path from $\inp$ to $\outp$.
\end{inparaenum}
$N$ is called \emph{$k$-sound} for some $k \in \naturals$ iff
\begin{inparaenum}[\it (i)]
\item it is proper completing, i.e., for all reachable markings $m \in \reachable{N}{[\inp^k]}$, if $[\outp^k]\leq m$, then $m = [\outp^k]$;
\item it is weakly terminating, i.e., for any reachable marking $m \in \reachable{N}{[\inp^k]}$, the final marking is reachable, i.e., $[\outp^k] \in \reachable{N}{m}$; and
\item it is quasi-live, i.e., for all transitions $t\in\transitions$, there is a marking $m\in\reachable{N}{[\inp]}$ such that $\enabled{m}{t}$.
\end{inparaenum}
The net is called \emph{sound} if it is $1$-sound.
If it is $k$-sound for all $k\in\naturals$, it is called \emph{generalized sound}~\cite{HeeSV03}.

\section{Typed Petri nets with identifiers}
\label{sec:pnids}
Processes and data are highly intertwined: processes manipulate data objects while objects govern processes.
For example, consider a retail shop with three types of objects: \emph{products}  sold through the shop, \emph{customers} that can order these products, and \emph{orders} that track products bought by customers.
This example already involves many-to-many relations between objects, e.g., a product can be ordered by many customers, while a customer can order many products.
Relations between objects can also be one-to-many, e.g., an order is always for a single customer, but a customer can have many orders.
In addition, objects may have life cycles, which themselves can be considered as processes.
Figure~\ref{fig:lifecycles} shows three life cycles of objects in the retail shop.
A product may be temporarily unavailable, while customers may be blocked by the shop, disallowing them to order products.
These life cycles are inherently intertwined. For instance, customers should not be allowed to order products that are unavailable.
Similarly, blocked customers should not be able to create new orders.

\begin{figure}[!h]
	\centering
	\includegraphics[width=\textwidth]{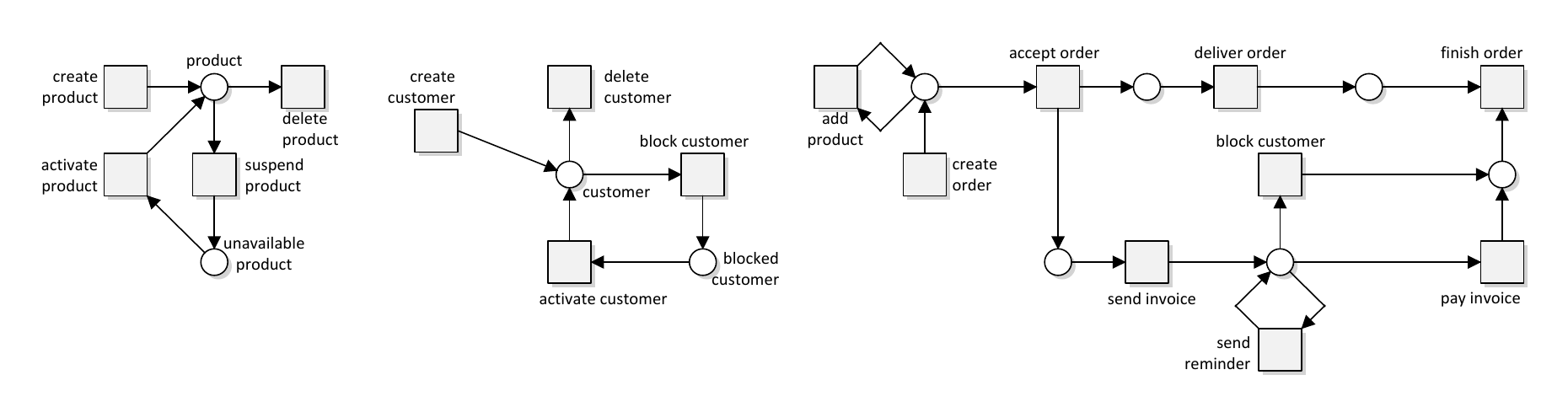}
	\caption{The life cycles of products, customers and orders in the retail shop.}\label{fig:lifecycles}
\end{figure}

Several approaches have been studied to model and analyze models that combine objects and processes.
For example, data-aware Proclets~\cite{Fahland19} allow describing the behavior of individual artifacts and their interactions.
Another approach is followed in $\nu$-PN~\cite{RVFE11}, in which a token can carry a single identifier~\cite{RosaMF06}.
In this formalism, markings map each place to a bag of identifiers, indicating how many tokens in each place carry the same identifier.
These identifiers can be used to reference entities in an information model.
However, referencing a fact composed of multiple entities is not possible in $\nu$-PNs.
In this paper, we study \emph{typed Petri nets with identifiers} (\tpnids), which build upon $\nu$-PNs~\cite{RVFE11} by extending tokens to carry vectors of  identifiers~\cite{PolyvyanyyWOB19,WerfP20}.
Vectors, represented by sequences, have the advantage that a single token can refer to multiple objects or entities that compose (a part of) a \emph{fact}, such as an order is for a specific customer.
Identifiers are typed, i.e., the countable, infinite set of identifiers is partitioned into a set of types, such that each type contains a countable, infinite set of identifiers.
Identifier types should not overlap, i.e., each identifier has a unique type.
Variables can take values of identifiers, and, thus, are typed as well and can only refer to identifiers of the associated type.
For example, the product, customer and order objects from the retail shop example make three object types.

\begin{definition}[Identifier Types]
\label{def:id-types}
Let $\I$, $\Lambda$, and $\V$ denote countable, infinite sets of  identifiers, type labels, and variables, respectively. We define:
	\begin{itemize}
\itemsep=0.9pt
		\item the \emph{domain assignment} function $I : \Lambda \rightarrow \powerset{\I}$, such that $I(\lambda_1)$ is an infinite set, and $I(\lambda_1) \cap I(\lambda_2) \neq \emptyset$ implies $\lambda_1 = \lambda_2$ for all $\lambda_1, \lambda_2 \in \Lambda$;
		\item the \emph{id typing} function $\type_{I}:\I\to\Lambda$ s.t. if $\type_{I}(\cname{id})=\lambda$, then $\cname{id}\in I(\lambda)$;
		\item a \emph{variable typing} function $\type_{\V}:\V\to\Lambda$, prescribing that $x\in\V$ can be substituted only by values from $I(\type_{\V}(x))$.
	\end{itemize}
	When clear from the context, we omit the subscripts of $type$.
\end{definition}
For ease of presentation, we assume the natural extension of the above typing functions to the cases of sets and vectors, for example, $\type_{\V}(x_1\cdots x_n)=\type_{\V}(x_1)\cdots\type_{\V}(x_n)$.

\medskip
In a \tpnid, each place is annotated with a \emph{place type}, which is a vector of types, indicating types of identifier tokens the place can carry.
A place with the empty place type, represented by the empty vector, is a classical Petri net place carrying indistinguishable (black) tokens.
Each arc of a \tpnid is inscribed with a multiset of vectors of variables, such that the types of the variables in the vector coincide with the place types.
This approach allows modeling situations where a transition may require multiple tokens with different identifiers from the same place.

\begin{figure}[!b]
\vspace*{-3mm}
	\centering
	\includegraphics[scale=.8]{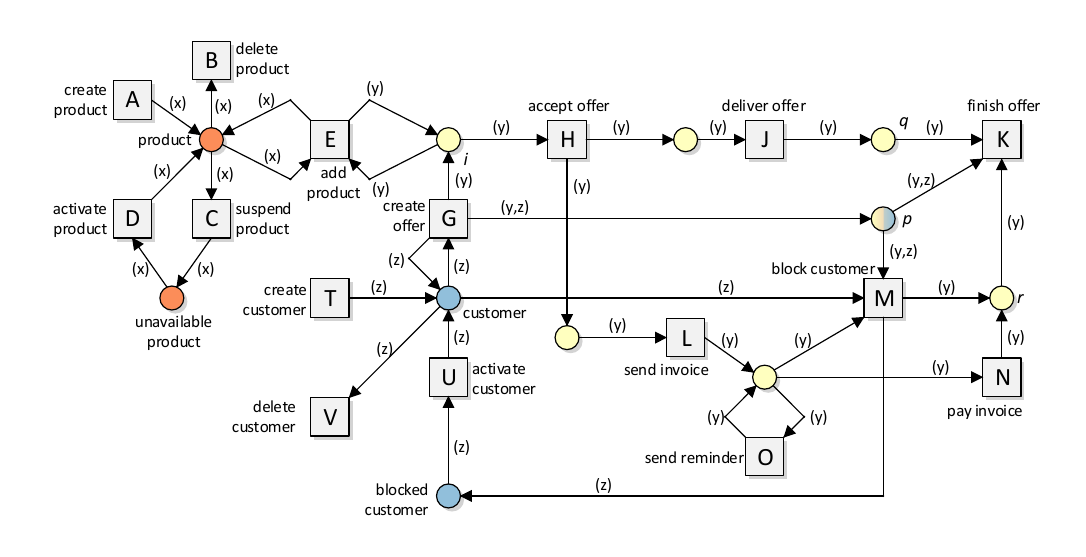}\vspace*{-4mm}
	\caption{\tpnid $N_{\mathit{rs}}$ for a retail shop that manipulates products, customers and orders. Each place is colored according to its type.
Place $p$ carries pairs of identifiers: an order and a customer.}\label{fig:overallModel}
\end{figure}

\begin{definition}[Typed Petri nets with identifiers]
\label{def:tpnid}
A \emph{Typed Petri net with identifiers} (\tpnid) $N$ is a tuple
$(\places,\transitions,\flow,\alpha,\beta)$, where:
\begin{itemize}
\itemsep=0.9pt
\item $(\places,\transitions,\flow)$ is a Petri net;
\item $\alpha:\places\to\Lambda^*$ is the \emph{place typing function};
\item $\beta : \flow\to \mult{(\V^*)}$ defines for each flow a multiset of \emph{variable vectors} such that
$\alpha(p) = \type(\vec{x})$ for any $\vec{x} \in \supp{\beta((p,t))}$ and $\type(\vec{y})=\alpha(p')$ for any $\vec{y} \in \supp{\beta((t,p'))}$ where $t\in\transitions$, $p\in\pre{t}$, $p'\in\post{t}$;
\end{itemize}
\end{definition}

Figure~\ref{fig:overallModel} shows a \tpnid, $N_{\mathit{rs}}$, of a retail shop.
Each place is colored according to its type.
The net intertwines the life cycles of Fig.~\ref{fig:lifecycles} and weakly simulates each of these life cycles.
In $N_{\mathit{rs}}$, places \emph{product} and \emph{unavailable product} are annotated with a vector $\langle\mathit{product}\rangle$, i.e., these places contain tokens that carry only a single identifier of type $\mathit{product}$.
Places \emph{customer} and \emph{blocked customer} have type $\langle\mathit{customer}\rangle$.
All other places, except for place $p$, are labeled with type $\langle\mathit{order}\rangle$.
Place $p$ maintains the relation between orders and customers, and is typed $\langle\mathit{order},\mathit{customer}\rangle$, i.e., tokens in this place are identifier vectors of size 2. 
$N_{\mathit{rs}}$ uses three variables: $x$ for $\mathit{product}$, $y$ for $\mathit{order}$ and $z$ for $\mathit{customer}$.

\smallskip
A marking of a \tpnid $N$ is a configuration of tokens over its places.
The set of all possible markings of $N$ is denoted by $\mathbb{M}(N)$.
Each token in a place should be of the correct type, i.e., the vector of identifiers carried by a token in a place should match the corresponding place type.
All possible vectors of identifiers a place $q$ may carry is defined by the set $\colset(q)$. 

\begin{definition}[Marking]
	Given a \tpnid $N = (\places, \transitions, \flow, \alpha, \beta)$,
	and place $p \in \places$, its \emph{id set} is 
	$\colset(p) = \prod_{1 \leq i \leq |\alpha(p)|} I(\alpha(p)(i))$.
	A \emph{marking} is a function $M \in \mathbb{M}(N)$, with $\mathbb{M}(N) = P \to \mult{(\I^*)}$, such that $M(p) \in \mult {\colset(p)}$, for each place $p \in P$.
	The set of identifiers used in $m$ is denoted by
	$\id{M} = \{ \cname{id} \mid \exists \vec{\cname{id}} \in \colset(p),$ $p \in P: \cname{id} \in \vec{\cname{id}} \wedge M(p)(\vec{\cname{id}}) > 0 \}$.
	The pair $(N, M)$ is called a \emph{marked \tpnid}.
\end{definition}

To define the semantics of a \tpnid, the variables need to be valuated with identifiers.
Variables may be used differently by transitions.
In Fig.~\ref{fig:overallModel}, transition $G$ uses variable $y$ to create an identifier of type $\mathit{order}$.
Transition $K$ uses the same variable $y$ to remove identifiers of type $\mathit{order}$ from the marking, as it has no outgoing arcs, and thus only consumes tokens.
We, therefore, first introduce some notation to work with variables and types in a \tpnid.
Variables used on the input arcs, i.e., variables on arcs from a place to a transition $t$ are called the input variables of $t$.
Similarly, variables on arcs from transition $t$ to a place are called the output variables of $t$.
A variable that only occurs in the set of output variables of a transition, is an \emph{emitting variable}.
Similarly, if a variable only appears as an input variable of a transition, it is called a \emph{collecting variable}.
As variables are typed, an emitting variable creates a new identifier of a corresponding type upon transition firing, whereas a collecting variable removes the identifier.

\begin{definition}[Variable sets, emitter and collector transitions, object types]
\label{def:notations}
Given a \tpnid $N=(\places,\transitions,\flow,\alpha,\beta)$, $t\in \transitions$ and $\lambda \in \Lambda$, we define the following sets of variables:
\begin{itemize}
\itemsep=0.8pt
	\item \emph{input variables} as $\smash{\invar{t} = \bigcup_{\vec x \in \supp{\beta((p,t))}, p\in\pre{t}} \bigcup_{x\in\vec x}x}$;
	\item \emph{output variables} as $\smash{\outvar{t} = \bigcup_{\vec x \in \supp{\beta((t,p))},p \in\post{t}} \bigcup_{x\in\vec x}x}$;
	\item \emph{variables} as $\var{t} = \invar{t} \cup \outvar{t}$;
	\item \emph{emitting variables} as $\newvar{t} = \outvar{t}\setminus\invar{t}$;
	\item \emph{collecting variables} as $\delvar{t} = \invar{t} \setminus \outvar{t}$.
\end{itemize}
Using the above notions, we introduce the sets of:
\begin{itemize}
\itemsep=0.9pt
	\item \emph{emitting transitions} (or simply referred to as emitters) as $E_N(\lambda) = \{ t \mid \exists x \in \newvar{t} \wedge \type(x) = \lambda \}$;
	\item \emph{collecting transitions} (or simply referred to as collectors) as $C_N(\lambda) = \{ t \mid \exists x \in \delvar{t} \wedge \type(x) = \lambda \}$.
	\end{itemize}
To properly account for \emph{place types used in $N$}, we introduce $\type_\places(N) = \{ \vec\lambda \mid \exists p \in P : \vec\lambda\in\alpha(p)\}$. Similarly, for objects, we introduce the set of \emph{object types in $N$} $\type_\Lambda(N) = \{ \lambda \mid \lambda\in \vec\lambda, \vec\lambda\in\type_\places(N)\}$.
\end{definition}


A firing of a transition requires a \emph{binding} that valuates variables to identifiers.
The binding is used to inject new fresh data into the net via variables that emit identifiers.
We require bindings to be an injection, i.e., no two variables within a binding may refer to the same identifier.
Note that in this definition, the freshness of identifiers is local to the marking, i.e., disappeared identifiers may be reused, as it does not hamper the semantics of the \tpnid.
Our semantics allow the use of well-ordered sets of identifiers, such as the natural numbers, as used in~\cite{PolyvyanyyWOB19,RosaMF06} to ensure that identifiers are globally new.
Here we assume local freshness over global freshness.

\begin{definition}[Firing rule]
Given a marked \tpnid $(N, M)$ with $N=(\places,\transitions,\flow,\alpha,\beta)$, a \emph{binding} for  transition $t\in T$ is an injective function  $\psi:\V\rightarrow \I$ such that
$\type(v) =\type(\psi(v))$ and
$\psi(v)\not\in \id{M}$ iff $v\in\newvar{t}$.
Transition $t$ is \emph{enabled} in $(N, M)$ under binding
$\psi$, denoted by $\enabled{(N, M)}{t,\psi}$ iff $\rho_\psi(\beta(p,t)) \leq M(p)$ for all $p\in\pre{t}$.
Its firing results in marking $M'$, denoted by $\fire{(N, M)}{t,\psi}{(N, M')}$, such that $M'(p) + \rho_\psi(\beta(p,t)) = M(p) + \rho_\psi(\beta(t,p))$. 
\end{definition}

Again, the firing rule is inductively extended to sequences $\eta\in (\transitions\times(\V\rightarrow \I))^*$.
A marking $M'$ is \emph{reachable} from $M$ if there exists $\eta\in (\transitions\times(\V\rightarrow \I))^*$ s.t. $\fire{(N,M)}{\eta}{(N,M')}$.
We denote with $\reachable{N}{M}$ the set of all markings reachable from $(N, M)$.

The execution semantics of a \tpnid is defined as an LTS that accounts for all possible executions starting from a given initial marking.

\begin{definition}[Induced transition system]
\label{def:induced-ts}
Given a marked \tpnid $(N, M_0)$ with $N=(P,T,F,\alpha,\beta)$, its \emph{induced transition system}  is $\tsys{N, M_0} = (\mathbb{M}(N),(T\times(\V\to\I)),M_0, \to)$ with $M\xrightarrow{(t,\psi)} M^\prime$ iff  $\fire{(N,M)}{t,\psi}{(N,M^\prime)}$.

\end{definition}

\tpnids are a vector-based extension of $\nu$-PNs~\cite{RVFE11}.
In other words, a $\nu$-PN can be translated into a strongly bisimilar \tpnid with a single type, and all place types are of length of at most $1$, which follows directly from the definition of the firing rule~\cite{RVFE11}.

\begin{corollary}
	\label{lemma:nupn}
	For any $\nu$-PN there exists a single-typed \tpnid  such that the two nets are strongly rooted bisimilar.
\end{corollary}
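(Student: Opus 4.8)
The plan is to exhibit an explicit translation from a given $\nu$-PN into a \tpnid and then verify that the identity-like correspondence between their markings is a rooted strong bisimulation between the two induced transition systems. Since a $\nu$-PN is precisely the special case in which every token carries exactly one identifier drawn from a single domain, the translation should be essentially structural: it reinterprets each single identifier as a length-one vector and leaves the net graph untouched.

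Concretely, first I would fix one type label $\lambda$ with $I(\lambda) = \I$, so that there is a single object type and every identifier is of type $\lambda$. Given a $\nu$-PN with places $\places$, transitions $\transitions$ and flow $\flow$, I set $\alpha(p) = \langle\lambda\rangle$ for every place that carries identifiers (and $\alpha(p)=\langle\rangle$ for any black-token place), so that all place types have length at most $1$. The arc inscriptions of the $\nu$-PN, which are (multisets of) ordinary variables together with the fresh variables $\nu$, are turned into the inscription $\beta$ by viewing each such variable $x$ as the length-one vector $\langle x\rangle$; the fresh variables of the $\nu$-PN become exactly the emitting variables $\newvar{t}$ of the resulting \tpnid, because they occur only on output arcs. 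This yields a \tpnid $N'$ whose typing constraints in Definition~\ref{def:tpnid} are satisfied by construction.

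Next I would define the candidate relation $R$ on states by relating a $\nu$-PN marking $M$ to the \tpnid marking $M'$ obtained by replacing each identifier $\cname{id}$ in place $p$ with the vector $\langle\cname{id}\rangle$; this is the graph of a bijection between $\mult\I$ and $\mult{(\colset(p))}$ for each $p$. Because both formalisms draw bindings from the same $\V\to\I$ and use the same transition names, the action labels $(t,\psi)$ of the two induced transition systems can be identified directly, so no renaming is needed and the bisimulation can be taken over a common label set. The core of the argument is then a one-step matching: I would show that $\enabled{(N,M)}{t,\psi}$ holds in the $\nu$-PN exactly when $\enabled{(N',M')}{t,\psi}$ holds in $N'$, and that the two firings lead to markings again related by $R$. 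Since enabledness and the token-update equation are stated place-wise and the wrapping $\cname{id}\mapsto\langle\cname{id}\rangle$ commutes with multiset addition and subtraction, this reduces to a direct comparison of $\rho_\psi(\beta(p,t))$ with the corresponding $\nu$-PN consumption and production. Rootedness follows since the translation sends the initial $\nu$-PN marking to its $R$-image, chosen as the initial marking of $N'$.

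The step I expect to require the most care is the treatment of freshness of the $\nu$-variables. In the \tpnid firing rule a binding must satisfy $\psi(v)\notin\id{M}$ iff $v\in\newvar{t}$, i.e.\ freshness is \emph{local} to the current marking, whereas the reference semantics of $\nu$-PNs in~\cite{RVFE11} must be checked to impose the same local-freshness condition on $\nu$ (and the same injectivity on the binding). Verifying that these two conditions coincide under the translation — so that a fresh choice available on one side is exactly a fresh choice available on the other — is the crux that makes the one-step matching symmetric and hence upgrades the two simulations into a genuine bisimulation $\sim_R$.
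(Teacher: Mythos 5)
Your proposal is correct and takes essentially the same route as the paper: the paper offers no separate proof of this corollary, asserting that the obvious structural translation (a single type, place types of length at most one, arc variables read as length-one vectors, $\nu$-variables becoming emitting variables) yields a strong rooted bisimulation ``directly from the definition of the firing rule.'' Your explicit wrapping relation $\cname{id}\mapsto\langle\cname{id}\rangle$, the place-wise one-step matching, and the attention to local freshness and injectivity of bindings simply spell out the details the paper leaves implicit.
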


As a result, the decidability of reachability for $\nu$-PNs transfers to \tpnids~\cite{RVFE11}.

\begin{proposition}\label{prop:reachabilityundecidable}
Reachability is undecidable for \tpnids.
\end{proposition}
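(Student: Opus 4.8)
The plan is to obtain undecidability by reduction from the reachability problem for $\nu$-PNs, which is known to be undecidable~\cite{RVFE11}, exploiting the translation that underlies Corollary~\ref{lemma:nupn}. Since every $\nu$-PN admits a strongly rooted bisimilar single-typed \tpnid, a decision procedure for reachability on \tpnids would yield one for $\nu$-PNs, contradicting the cited result.

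First I would make the translation of Corollary~\ref{lemma:nupn} explicit and observe that it is effective: given a $\nu$-PN \N{} with an initial marking, the construction produces, in a computable way, a single-typed \tpnid $N$ (all place types of length at most $1$) together with a strong rooted bisimulation $R$ relating the two induced transition systems at their initial markings. Because every place type has length at most one, markings of \N{} (places mapped to bags of identifiers) and markings of $N$ (places mapped to multisets of length-one identifier vectors, with empty-typed places carrying black tokens) are in a canonical, computable correspondence $\phi$, and one checks directly from the construction that $(m,\phi(m)) \in R$ for every marking $m$ of \N.

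Next I would transfer reachability across the bisimulation. By induction on the length of firing sequences, using that $R$ is rooted and that each step of one system is matched by an identically labelled step of the other, I would establish that $m \in \reachable{\N}{m_0}$ iff $\phi(m) \in \reachable{N}{\phi(m_0)}$. This turns the query ``is $m$ reachable in \N?'' into the query ``is $\phi(m)$ reachable in $N$?'', yielding the desired many-one reduction, so undecidability of the source problem propagates to \tpnid reachability.

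The main obstacle is that a bisimulation, on its own, only guarantees reachability of states \emph{up to} the relation $R$, whereas the reachability problem concerns a single designated marking. I therefore have to pin $R$ down to the concrete correspondence $\phi$ and, crucially, treat reachability modulo renaming of identifiers: in both formalisms the concrete fresh names emitted along a run are immaterial, and the firing rule for \tpnids fixes freshness only locally to the current marking. The delicate point is thus to define $\phi$ (and the two reachability questions) up to injective renaming of identifiers, verify that $\phi$ commutes with such renamings, and confirm that the emitter transitions produced by the translation reproduce exactly the $\nu$-name creation of \N, so that the reachable sets correspond precisely and the reduction is sound in both directions.
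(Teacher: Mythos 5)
Your proposal is correct and follows essentially the same route as the paper, which obtains the result by transferring the known undecidability of reachability for $\nu$-PNs across the strongly rooted bisimilar embedding of Corollary~\ref{lemma:nupn}. The paper states this transfer in a single sentence without proof, so your additional care about the effective marking correspondence and reachability modulo injective renaming of identifiers is a legitimate filling-in of details the paper leaves implicit, not a different argument.
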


\section{Correctness criteria for \tpnids}
\label{sec:soundness}
Many criteria have been devised for assessing the correctness of systems captured as Petri nets.
Traditionally, Petri net-based criteria focus on the correctness of processes the systems can support.
Enriching the formalism with ability to capture object manipulation
while keeping analyzability is a delicate balancing act.

\medskip
For \tpnids, correctness criteria can be categorized as 
system-level and object-level.
Criteria at the system-level (Section~\ref{sec:systemlevel}) focus on traditional Petri net-based criteria to assess the system as a whole, whereas criteria at the object-level (Section~\ref{sec:objectlevel}) address the correctness of individual objects represented by identifiers.

\subsection{System-level correctness criteria}\label{sec:systemlevel}
Liveness is an example of a system-level correctness property.
It expresses that any transition is always eventually enabled again.
As such, a live system guarantees that its activities cannot eventually become unavailable.

\begin{definition}[Liveness]
	A marked \tpnid $(N, M_0)$ with $N = (\places, \transitions,\flow,\alpha,\beta)$ is \emph{live} iff for every marking $M\in\reachable{N}{M_0}$ and every transition $t \in \transitions$, there exists a marking $M'\in\reachable{N}{M}$ and a binding $\psi:\V\rightarrow \I$  such that $\enabled{M'}{t,\psi}$.
\end{definition}

Boundedness expresses that the reachability graph of a system is finite, i.e., that the system has finitely many possible states and state transitions.
Hence, boundedness is another example of a system-level correctness property.
Many systems can support an arbitrary number of simultaneously active objects; they are unbounded by design.
Similar to $\nu$-PN, we differentiate between various types of boundedness~\cite{RosaF10}.
Specifically, \emph{boundedness} expresses that the number of tokens in any reachable place does not exceed a given bound.
\emph{Width-boundedness} expresses that the modeled system has a bound on the number of simultaneously active objects. 

\begin{definition}[Bounded, width-bounded]
\label{def:boundedness}
Let $(N,M_0)$ be a marked \tpnid with $N=(\places, \transitions,\flow,\alpha,\beta)$. A place $p \in P$ is called:
\begin{itemize}
\itemsep=0.9pt
	\item \emph{bounded} if there is $k \in \naturals$ such that $|M(p)| \leq k$ for all $M \in \reachable{N}{M_0}$;
	\item \emph{width-bounded} if there is $k \in \naturals$ such that $|\supp{M(p)}|\! \leq\! k$ for all $M \in \reachable{N}{M_0}$;
\end{itemize}
If all places in $(N,M_0)$ are (width-) bounded, then $(N,M_0)$ is called (width-) bounded.
\end{definition}

As transitions $A$ and $T$ in Fig.~\ref{fig:overallModel} have no input places, these transitions are always enabled.
Consequently, places \emph{product} and \emph{customer} are not bounded, and thus no place in $N_{\mathit{rs}}$ is bounded.
Upon each firing of transition $A$ or $T$, a new identifier is created.
Hence, these places are also not \emph{width-bounded}.
In other words, the number of objects in the system represented by $N_{\mathit{rs}}$ is dynamic, without an upper bound.


\subsection{Object-level correctness criteria}\label{sec:objectlevel}
An object-level property assesses the correctness of individual objects.
In \tpnids, identifiers can be seen as references to objects: if two tokens carry the same identifier, they refer to the same object.
The projection of an identifier on the reachability graph of a marked \tpnid represents the life-cycle of the referenced object.
Boundedness of a system implies that the number of states of the reachability graph is finite.
\emph{Depth-boundedness} captures this idea for identifiers: in any marking, the number of tokens that refer to a single identifier is bounded.
In other words, if a marked \tpnid is depth-bounded, the complete system may still be unbounded, but the life-cycle of each object is finite.

\begin{definition}[Depth-boundedness]
	Let $(N,m_0)$ be a marked \tpnid with $N=(\places, \transitions,\flow,\alpha,\beta)$. A place $p \in P$ is called \emph{depth-bounded} if for each identifier $\cname{id}\in\I$ there is $k\in\naturals$ such that $m(p)(\vec{\cname{id}}) \leq k$
	for all $m \in \reachable{N}{m_0}$ and
	$\vec{\cname{id}}\in \colset(p)$ with $\cname{id}\in\vec{\cname{id}}$.
	If all places in $\places$ are depth-bounded, $(N,m_0)$ is called depth-bounded.
\end{definition}

Depth-boundedness is undecidable for $\nu$-PNs~\cite{RVFE11} and, thus, also for \tpnids.

\begin{proposition}
	Depth-boundedness is undecidable for \tpnids.
\end{proposition}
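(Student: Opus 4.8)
The plan is to prove undecidability by a reduction from depth-boundedness of $\nu$-PNs, which is stated to be undecidable in~\cite{RVFE11}. The reduction rides on Corollary~\ref{lemma:nupn}: every $\nu$-PN can be turned into a strongly rooted bisimilar single-typed \tpnid whose place types all have length at most $1$. First I would make precise that this translation is in fact a structural embedding. A $\nu$-PN place carrying a bag of single identifiers becomes a \tpnid place of type $\langle\lambda\rangle$ (or the empty type, for black tokens), and each $\nu$-PN token carrying identifier $\cname{id}$ becomes a \tpnid token carrying the length-one vector $\langle\cname{id}\rangle$. Under this correspondence the reachable markings of the two nets are in bijection, and, crucially, for every identifier $\cname{id}$ and every place $p$ the multiplicity $m(p)(\cname{id})$ in the $\nu$-PN equals $M(p)(\langle\cname{id}\rangle)$ in the translated \tpnid, since $\colset(p) = I(\lambda)$ when $\alpha(p) = \langle\lambda\rangle$.

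Next I would observe that this per-identifier, per-place multiplicity is exactly the quantity constrained by depth-boundedness. Hence the $\nu$-PN is depth-bounded at a place (and thus overall) if and only if its translated \tpnid is depth-bounded, because the witnessing bound $k\in\naturals$ transfers verbatim in both directions along the marking bijection. This equivalence is what upgrades the purely behavioural strong rooted bisimulation of Corollary~\ref{lemma:nupn} into a reduction that also preserves the marking-level property we care about.

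Finally, I would close the argument by contradiction. Suppose depth-boundedness were decidable for \tpnids. Given any $\nu$-PN, effectively construct the bisimilar single-typed \tpnid, run the assumed decision procedure on it, and, by the equivalence just established, read off whether the original $\nu$-PN is depth-bounded. This would decide depth-boundedness for $\nu$-PNs, contradicting~\cite{RVFE11}; therefore depth-boundedness is undecidable for \tpnids.

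\textbf{Main obstacle.}
I expect the delicate point to be that a strong rooted bisimulation on the induced transition systems does not by itself say anything about depth-boundedness, because bisimulation equates \emph{behaviour} (matching labelled firings and preserving the branching structure of the LTS) whereas depth-boundedness is a statement about the \emph{internal token content} of individual markings. The real work therefore lies in verifying that the particular construction behind Corollary~\ref{lemma:nupn} is a marking-level isomorphism --- essentially a token relabelling $\cname{id}\mapsto\langle\cname{id}\rangle$ together with the identity on places --- so that the bound-witnessing data, and not merely the firing behaviour, can be transported across the reduction in both directions.
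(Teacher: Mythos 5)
Your proof follows essentially the same route as the paper, which justifies the proposition in one line by transferring the known undecidability of depth-boundedness for $\nu$-PNs~\cite{RVFE11} through the embedding of Corollary~\ref{lemma:nupn}. Your additional care in noting that the bisimulation must in fact be a marking-level correspondence (so that per-identifier multiplicities, not just behaviour, transfer) makes explicit a point the paper leaves implicit, but it is the same argument.
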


The idea of depth-boundedness is to consider a single identifier in isolation, and study its reachability graph.
Intuitively, an object of a given type ``enters'' the system via an emitter that creates a unique identifier that refers to the object.
The identifier remains in the system until the object ``leaves'' the system by firing a collecting transition (that binds to the identifier and consumes the last token in the net that refers to it).
In other words, if a type has emitters and collectors, it has a life-cycle, which can be represented as a process.
The process of a type is the model describing all possible paths for the type.
It can be derived by taking the projection of the \tpnid on all transitions and places that are ``involved'' in the type. Notably, the net obtained after the projection is just a regular Petri net.

\begin{definition}[Type projection]
Let $\lambda \in \Lambda$ be a type.
Given a \tpnid $N = (P_N, T_N, F_N, \alpha_N, \beta_N)$, its $\lambda$-projection $\pi_\lambda(N) = (P, T, F, W)$ is a Petri net defined by:
\begin{itemize}
\itemsep=0.85pt
	\item $P = \{ p \in P_N \mid \lambda \in \alpha_N(p) \}$;
	\item $T = \{ t \in T_N \mid \exists p \in P_N :
	\lambda \in \type_{\V}\bigl(\supp{\beta_N((p,t))}\bigr)
	\lor
	\lambda \in \type_{\V}\bigl(\supp{\beta_N((t,p))}\bigr)\}$
	\item $F = F_N \cap ((P \times T) \cup (T \times P))$;
	\item $W(f) = |\beta_N(f)|$ for all $f \in F$.
\end{itemize}
Give a marking $M \in \mathbb{M}(N)$, its $\lambda$-projection $\pi_\lambda(M)$ is defined by $\pi_\lambda(M)(p) = |M(p)|$.
\end{definition}

\begin{figure}[!h]
 \vspace*{-3mm}
	\centering
\hspace*{-5mm}	\begin{subfigure}{.48\textwidth}
		\centering
		\includegraphics[scale=.66]{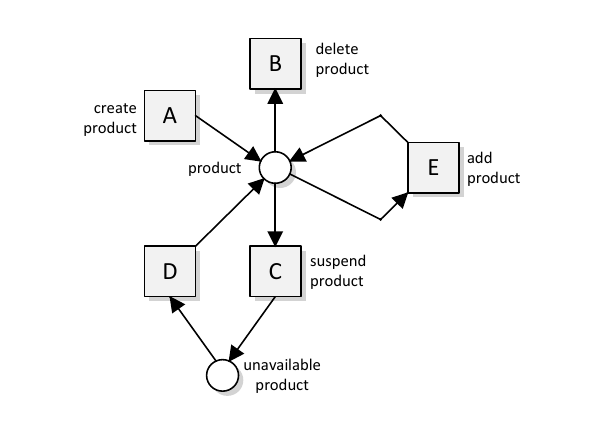}\vspace*{-2mm}
		\caption{Product projection}
	\end{subfigure}\hfil
\hspace*{-12mm}	\begin{subfigure}{.48\textwidth}
		\centering
		\includegraphics[scale=.66]{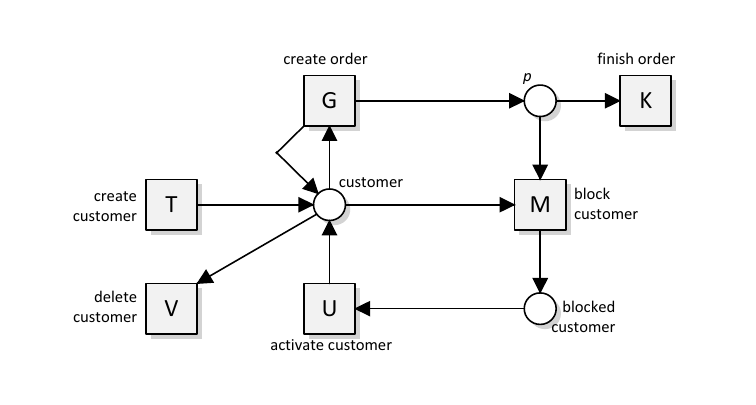}\vspace*{-2mm}
		\caption{Customer projection}\label{fig:cust-projection}
	\end{subfigure}
\hspace*{-5mm} \begin{subfigure}{\textwidth}
		\centering
		\includegraphics[scale=.66]{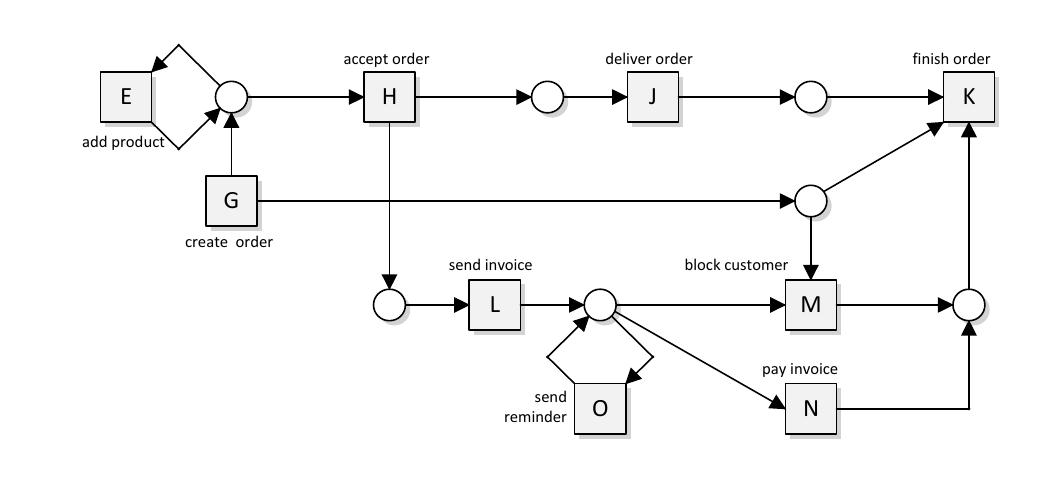}\vspace*{-2mm}
		\caption{Order projection}
	\end{subfigure}\vspace*{-2mm}
	\caption{Type projections of Figure~\ref{fig:overallModel}. The customer projection is not sound, as place $p$ is not
   bounded.}\label{fig:projections}\vspace*{-4mm}
\end{figure}

\begin{figure}[!h]
\vspace*{1mm}
	\centering
	\begin{subfigure}{.4\textwidth}
		\centering
		\includegraphics[scale=.5]{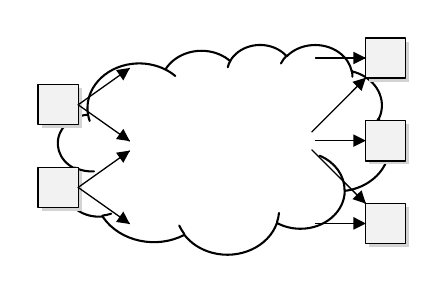}\vspace*{-1mm}
		\caption{}
	\end{subfigure}
	\begin{subfigure}{.4\textwidth}
	\centering
	\includegraphics[scale=.5]{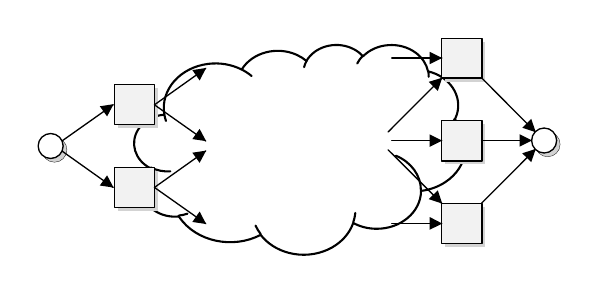}\vspace*{-1mm}
	\caption{}
 \end{subfigure}\vspace*{-3mm}
 \caption{A transition-bordered WF-Net (a) and its closure (b)~\cite{HeeSW13}.}\label{fig:tBorderedNet}\vspace*{-3mm}
\end{figure}

Figure~\ref{fig:projections} shows the three type projections of $N_{\mathit{rs}}$ from Figure~\ref{fig:overallModel}.
As an emitter of a type creates a new identifier, and a collector removes the created identifier, each type with emitters and collectors can be represented as a transition-bordered WF-net~\cite{HeeSW13}.
Instead of a source and a sink place, a transition-bordered WF-net has dedicated transitions that represent the start and finish of a process.
A transition-bordered WF-net is sound if its closure is sound~\cite{HeeSW13}.
As shown in Fig.~\ref{fig:tBorderedNet}, the closure is constructed by creating a new source place 
so that each emitting transition consumes from it, and a new sink place so that each collecting transition produces in it.
In the remainder of this section, we develop this intuition of soundness of type projections into the concept of identifier soundness of \tpnid{}s.

Many soundness definitions comprise two properties: proper completion and weak termination.
\emph{Proper completion} states that once a marking that has a token in the final marking is reached, it is actually the final marking.
For example, for the proper completion to hold in a WF-net, as soon as a token is produced in the final place, all other places should be empty.
Following the idea of transition-bordered WF-nets, identifiers should have a similar  property: once a collector consumes one or more identifiers, then no further tokens carrying those identifiers should persist in the marking obtained after the consumption.

\begin{definition}[Proper type completion]
\label{def:completion}
Given a type $\lambda\in\Lambda$, a marked \tpnid $(N, m_0)$ is called \emph{properly $\lambda$-completing} iff for all $t \in C_N(\lambda)$, bindings $\psi : \V\to\I$ and markings $m, m'\in\reachable{N}{m_0}$, if $\fire{m}{t,\psi}{m'}$, then for all identifiers $\cname{id} \in \rng{\restr{\psi}{\delvar{t}}} \cap \id{m}$ with $\type(\cname{id})=\lambda$, it holds that $\cname{id}\not\in\id{m'}$.\footnote{Here, we constrain $\psi$ to objects of type $\lambda$ that are  consumed.}
\end{definition}
Intuitively, from the perspective of a single identifier of type $\lambda$,
if a \tpnid $N$ that generated it is properly $\lambda$-completing,
then the points of consumption for this identifier are mutually exclusive (that is, it can be consumed from the net only by one of the collectors from $C_N(\lambda)$).

\medskip
As an example, consider \tpnid $N_{\mathit{rs}}$ in Fig.~\ref{fig:overallModel}.
For type $\mathit{customer}$, 
we have $C_{N_{\mathit{rs}}}(\mathit{customer}) = \{K, V\}$.
In the current -- empty -- marking,
transition $T$ is enabled with binding  $\psi = \{ z\mapsto \cname{c}\}$, which results in marking $m$ with $m(\textit{customer}) = [\cname{c}]$.
We can then create an offer by firing $G$ with binding $\psi = \{y\mapsto \cname{o}, z\mapsto \cname{c}\}$.
Next, transitions $H$, $J$, $L$ and $N$ can fire, all using the same binding,
producing marking $m'$ with $m'(p) = [\cname{o},\cname{c}]$, $m'(\mathit{customer}) =[\cname{c}]$ and $m'(q) = m'(r) = [\cname{c}]$.
Hence, transition $K$ is enabled with binding $\psi$.
However, firing $K$ with $\psi$ results in marking $m''$ with $m''(\mathit{customer}) = [\cname{c}]$.
Since for the proper type completion on type $\mathit{customer}$  we would like to achieve that all
tokens containing  $\cname{c}$ are removed,  $N_{\mathit{rs}}$ is not properly $\mathit{customer}$-completing.

\medskip
\emph{Weak termination} signifies that the final marking can be reached from any reachable marking.
Translated to identifiers, removing an identifier from a marking should always eventually be possible.

\begin{definition}[Weak type termination]
\label{def:termination}
Given a type $\lambda\in\Lambda$, a marked \tpnid $(N, m_0)$ is called \emph{weakly $\lambda$-terminating} iff
for every $m \in \reachable{N}{m_0}$ and identifier $\cname{id} \in I(\lambda)$ such that $\cname{id} \in \id{m}$, there exists a marking $m' \in \reachable{N}{m}$ with $\cname{id} \not\in \id{m'}$.
\end{definition}

\emph{Identifier soundness} combines the properties of proper type completion and weak type termination: the former ensures that as soon a collector fires for an identifier, the identifier is removed, whereas the latter ensures that it is always eventually possible to remove that identifier.

\begin{definition}[Identifier soundness]
\label{def:soundness}
A marked \tpnid $(N,m_0)$ is \emph{$\lambda$-sound} iff it is properly $\lambda$-completing and weakly $\lambda$-terminating.
It is \emph{identifier sound} iff it is $\lambda$-sound for every $\lambda\in \type_\Lambda(N)$.
\end{definition}

Two interesting observations can be made about the identifier soundness property. First, identifier soundness does not imply soundness in the classical sense: any classical net $N$ without types, i.e., $\type_\Lambda(N) = \emptyset$, is identifier sound, independently of the properties of $N$.
Second, identifier soundness implies depth-boundedness. In other words, if a marked \tpnid is identifier sound, it cannot accumulate infinitely many tokens carrying the same identifier.

\begin{lemma}\label{lemma:depth}
	If a \tpnid $(N, m_0)$ is identifier sound, then it is depth-bounded.
\end{lemma}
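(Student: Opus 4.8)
The plan is to argue by contraposition: assuming $(N,m_0)$ is identifier sound but \emph{not} depth-bounded, I will exhibit a reachable marking from which some collector fires on an identifier without erasing it, contradicting proper type completion. First I would localise the failure of depth-boundedness: there is a place $p$, an identifier $\cname{id}$ of some type $\lambda\in\type_\Lambda(N)$, and, for every $k$, a reachable marking in which a vector through $\cname{id}$ occurs more than $k$ times; hence the total number $n_{\cname{id}}(m):=\sum_{q\in P}\sum_{\vec{\cname{id}}\ni\cname{id}}m(q)(\vec{\cname{id}})$ of tokens carrying $\cname{id}$ is unbounded over $\reachable{N}{m_0}$. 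Since $\cname{id}$ does occur and $(N,m_0)$ is weakly $\lambda$-terminating, $\lambda$ must have at least one collector (otherwise $\cname{id}$ could never be removed); let $K:=\max_{t\in C_N(\lambda)}\sum_{q\in\pre{t}}|\beta(q,t)|$ be the finite largest number of tokens a single $\lambda$-collector firing removes.

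Next I would pin down \emph{how} an identifier leaves the net. If $\fire{m}{t,\psi}{m'}$ turns $\cname{id}\in\id{m}$ into $\cname{id}\notin\id{m'}$, then this step consumes every token carrying $\cname{id}$ and produces none; as bindings are injective, a single variable $v$ satisfies $\psi(v)=\cname{id}$, and since no output token may then carry $\cname{id}$ we must have $v\in\invar{t}\setminus\outvar{t}=\delvar{t}$, i.e. $t\in C_N(\lambda)$ collects $\cname{id}$. Combining this with weak $\lambda$-termination and proper $\lambda$-completion, from any reachable $m$ with $\cname{id}\in\id{m}$ one reaches a collector step $\fire{\mu}{t,\psi}{\mu'}$ with $t\in C_N(\lambda)$, $\cname{id}\in\rng{\restr{\psi}{\delvar{t}}}$ and $\cname{id}\notin\id{\mu'}$; because this one firing erases all tokens carrying $\cname{id}$ while consuming at most $K$ tokens, $n_{\cname{id}}(\mu)\le K$. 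So clean removal only ever happens from markings whose $\cname{id}$-count is already at most $K$.

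The main obstacle is lifting this \emph{local} bound to a \emph{global} one: the count may legitimately exceed $K$ between emission and removal, since a transition may consume several tokens bound to $\cname{id}$ through a non-collecting variable and re-emit fewer, so $n_{\cname{id}}$ is not monotone and a pure counting argument does not close. To overcome this I would use monotonicity of the \tpnid firing relation, inherited from $\nu$-PNs via Corollary~\ref{lemma:nupn}: adding tokens preserves enabledness (for an emitter one re-picks a fresh identifier), so $\fire{\mu}{t,\psi}{\mu'}$ and $\bar\mu\ge\mu$ give $\fire{\bar\mu}{t,\psi}{\bar\mu'}$ with $\bar\mu'\ge\mu'$. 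Since $n_{\cname{id}}$ is unbounded, the reachable markings -- well-quasi-ordered up to renaming of same-type identifiers (Dickson's lemma) -- contain a strictly increasing, pumpable pair whose repetition accumulates tokens carrying (a renaming of) $\cname{id}$; composing a sufficiently pumped run with the collector step of the previous paragraph yields a reachable marking $\bar\mu\ge\mu$ with $n_{\cname{id}}(\bar\mu)>K$ at which $t\in C_N(\lambda)$ is still enabled under $\psi$ and still binds $\cname{id}$ to a collecting variable. Firing $t$ there deletes at most $K<n_{\cname{id}}(\bar\mu)$ tokens, so $\cname{id}\in\id{\bar\mu'}$, contradicting proper $\lambda$-completion. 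I expect this monotonicity-plus-pumping step to be the delicate part, precisely because it must produce a \emph{single} reachable marking that simultaneously carries an over-capacity count and enables a collector on $\cname{id}$ -- the one configuration proper completion rules out.
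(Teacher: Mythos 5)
Your proposal takes a genuinely different route from the paper's. The paper's own proof is a three-line argument that never touches proper completion: it extracts from depth-unboundedness an infinite sequence of reachable markings with strictly increasing counts of a fixed identifier vector and claims this directly contradicts weak $\lambda$-termination. You instead aim to violate \emph{proper} $\lambda$-completion at an over-capacity collector-enabled marking, and your first two steps are correct and carefully argued: since bindings are injective, the step that makes $\cname{id}$ disappear must bind $\cname{id}$ to a collecting variable of some $t \in C_N(\lambda)$, so every clean removal happens from a marking carrying at most $K$ tokens with $\cname{id}$. The problem sits exactly where you flag it: the ``monotonicity-plus-pumping'' step.

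The gap is that the well-quasi-order on reachable markings does not deliver what your pumping needs. Applied to the reachable set, it gives pairs $m_i, m_j$ and an injective renaming $h$ with $h(m_i) \le m_j$, but \emph{(a)} it does not give that $m_j$ is reachable \emph{from} $m_i$ -- comparable reachable markings need not lie on a common run -- and without that you cannot ``repeat'' a segment to accumulate tokens; and \emph{(b)} it does not give that $h$ tracks the exploding identifier with a strictly increasing count: the excess $m_j - h(m_i)$ may consist entirely of tokens carrying identifiers fresh for $h(m_i)$, in which case repetition (even when available) grows \emph{width}, not depth -- each round creates a new identifier rather than piling tokens onto one. What you are really asserting is that depth-unboundedness yields an identifier-preserving self-covering run, i.e.\ a Karp--Miller-type characterization for $\nu$-PN-like nets; that is a substantial claim that cannot be invoked as ``Dickson's lemma,'' and its delicacy is underscored by the undecidability of depth-boundedness for $\nu$-PNs. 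The argument can, however, be repaired without self-covering runs: apply the wqo to \emph{pointed} markings (markings with a distinguished identifier, embeddings required to respect the point) along a subsequence of witnesses whose distinguished counts strictly increase; this yields reachable $m_i, m_j$ and $h$ with $h(m_i)\le m_j$, $h(\cname{id}_i)=\cname{id}_j$, and the count of $\cname{id}_j$ strictly larger in $m_j$. Then \emph{replay} rather than pump: genericity (the paper's remark on generic transition systems) transports the weak-termination drain run of $\cname{id}_i$ from $m_i$ to a drain run of $\cname{id}_j$ from $h(m_i)$, and monotonicity replays it from $m_j \ge h(m_i)$; the surplus token carrying $\cname{id}_j$ survives the replay, so the final collector firing leaves it behind, violating proper $\lambda$-completion. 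Replay needs only domination, never reachability of $m_j$ from $m_i$ -- exactly what wqo arguments can actually deliver.
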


\begin{proof}
Suppose that $(N,m_0)$ is identifier sound, but not depth-bounded.
Then, at least for one place $p\in P$ and identifier $\vec{\cname{id}}\in\colset (p)$ of type $\vec\lambda$ there exists an infinite sequence of increasing markings $m_i$, all reachable in $(N,m_0)$, such that $m_i(p)(\vec{\cname{id}})<m_{i+1}(p)(\vec{\cname{id}})$.
Let $\lambda\in\vec\lambda$ and let $\cname{id}\in\vec{\cname{id}}$ be such that $\type(\cname{id})=\lambda$.
From the above assumption it follows that there are no such markings $m_i$ and $m_{i+1}$ in the infinite sequence of increasing markings for which it holds that $m_{i+1}\in\reachable{N}{m_i}$, $\cname{id}\in m_i(p)$ and $\cname{id}\not\in m_{i+1}(p)$.
Since $N$ is properly type completing, it must be possible to reach from $m_i$ a marking $m_i'$ (via some firing sequence $\sigma$) such that
$\fire{m_i'}{t,\psi}{m_i''}$, for a binding $\psi : \V\to\I$, $m_i''\in\reachable{N}{m_0}$, $\cname{id}\not\in\id{m_i''}$ and  $t \in C_N(\lambda)$.
Since marking $m_{i+1}$ contains at least one more $\cname{id}$, then for the same $t \in C_N(\lambda)$ we cannot apply the same reasoning from above. Specifically, we can reach a marking $m_{i+1}'$ from $m_{i+1}$ using the same firing sequence $\eta$ and, although it still holds that $\fire{m_{i+1}'}{t,\psi}{m_{i+1}''}$ (and $m_{i+1}''$ differs from  $m_{i}''$ by having one extra $\cname{id}$ in $p$), we have that $\cname{id}\in\id{m_i''}$. This contradicts the proper type completion. Hence, $(N,m_0)$ is depth-bounded.
\end{proof}

As identifier soundness relies on reachability, it is undecidable.
This also naturally follows from the fact that all non-trivial decision problems are undecidable for Petri nets in which tokens carry pairs of data values (taken from unordered domains) and in which element-wise equality comparisons are allowed over such pairs in transition guards~\cite{Lasota16}.

\tikzset{
	config/.style={
		circle,
        rounded corners=5pt,
        draw,
		very thick,
		minimum height=5mm,
		minimum width=5mm,
	},
    link/.style={
        -latex,
        thick,
    }
}

\begin{theorem}
\label{thm:id-soundness-undecidable}
Identifier soundness is undecidable for \tpnids.
\end{theorem}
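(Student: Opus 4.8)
The plan is to prove undecidability by reduction from the reachability problem for \tpnids, which is undecidable by Proposition~\ref{prop:reachabilityundecidable}. Given an instance consisting of a marked \tpnid $(N,M_0)$ together with a target marking $M_f$, I would construct, by an effective procedure, a marked \tpnid $(N',M_0')$ such that $(N',M_0')$ is identifier sound if and only if $M_f\notin\reachable{N}{M_0}$. Since the class of decidable sets is closed under complement, a decision procedure for identifier soundness would then decide unreachability and hence reachability, contradicting Proposition~\ref{prop:reachabilityundecidable}.

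The construction keeps $N$ as a sub-\tpnid and adds a small \emph{control layer} built over a fresh type $\lambda_0\notin\type_\Lambda(N)$. A dedicated emitter creates a single control identifier $c_0$ of type $\lambda_0$ in a place $r$ and simultaneously deposits a token in an auxiliary \emph{enabler} place $g$; the initial marking $M_0'$ consists of $M_0$ together with these two control tokens. A single collector $\mathit{exit}$ is the only transition able to remove $c_0$ from $r$, and it is guarded so as to also consume the token from $g$. Finally, a \emph{detector} transition $\mathit{trap}$ is added that becomes enabled exactly when the $N$-part of the marking equals $M_f$, and whose firing empties $g$ without touching $r$. Thus, once $\mathit{trap}$ fires, $c_0$ is permanently stranded in $r$ with no way to be consumed.

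For correctness, suppose first that $M_f\notin\reachable{N}{M_0}$. Then $\mathit{trap}$ never becomes enabled, so along every run the enabler token in $g$ survives until $\mathit{exit}$ is taken, and from every reachable marking $c_0$ can still be removed via $\mathit{exit}$; hence weak $\lambda_0$-termination and proper $\lambda_0$-completion hold, and $(N',M_0')$ is identifier sound. Conversely, if $M_f\in\reachable{N}{M_0}$, then $N'$ can reach a marking whose $N$-part is $M_f$ while $c_0$ still sits in $r$ and $g$ is full; firing $\mathit{trap}$ there yields a reachable marking in which $g$ is empty and $c_0$ can never again be consumed, so weak $\lambda_0$-termination fails and $(N',M_0')$ is not identifier sound.

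The step I expect to demand the most care is twofold. First, the detector $\mathit{trap}$ must fire \emph{precisely} at $M_f$, which amounts to zero-testing the places outside the support of $M_f$; I would obtain this using the very expressive power that underlies Proposition~\ref{prop:reachabilityundecidable} (identifier comparison on tokens), carrying out the reduction from the halting-style formulation of reachability in which the target configuration leaves the remainder of $N$ empty, so that covering $M_f$ coincides with reaching it exactly. Second, and more delicately, the auxiliary types of $N$ must not spuriously break identifier soundness: for an arbitrary $N$ a type $\lambda\in\type_\Lambda(N)$ may itself be non-sound, and proper completion cannot in general be restored by ad hoc flush collectors, since a single firing cannot purge an unbounded number of co-identified tokens. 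I would sidestep this by performing the reduction from the counter-machine encoding that proves Proposition~\ref{prop:reachabilityundecidable}, whose simulation I control and can design so that every identifier it creates is transient and is removed together with all tokens carrying it, making each $\lambda\in\type_\Lambda(N)$ identifier sound by construction and leaving $\lambda_0$ as the sole possible spoiler. As a slicker alternative I would invoke~\cite{Lasota16}: identifier soundness is a non-trivial property (the empty net is identifier sound, whereas the control gadget above is not), and since \tpnids subsume Petri nets whose tokens carry tuples of data from unordered domains with equality tests, every non-trivial property of such nets is undecidable.
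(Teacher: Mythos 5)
Your overall strategy---reduce from an undecidable halting/reachability problem via a counter-machine encoding, adding a fresh control type whose collectability hinges on reaching the target configuration---is the same family of argument as the paper's, which reduces from halting of a 2-counter Minsky machine using the ring-gadget encoding of \cite{GGMR22}. (Your retreat from ``trap fires exactly at $M_f$'' to the counter-machine formulation is also necessary, since \tpnids cannot zero-test, and it is handled correctly.) However, your reduction has the polarity inverted, and this opens a genuine gap that your proposed repair does not close. You arrange matters so that $(N',M_0')$ is identifier sound iff the target configuration is \emph{not} reachable, i.e., iff the machine does \emph{not} halt. But identifier soundness quantifies over \emph{all} types of the net, including the type used by the counter encoding, and in the non-halting case you cannot guarantee soundness of that type. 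Concretely, take the machine $1:\mathtt{inc}~c_1;\;\mathtt{goto}~1$, $2:\mathsf{HALT}$: it never halts, yet in its encoding the linked pairs populating the place for $c_1$ can never be removed (the only collectors for them are $\mathtt{dec}$-components, absent here, and the final collector, never enabled), so weak termination fails for the counter type and the net is unsound even though the machine does not halt. Thus ``does not halt $\Rightarrow$ sound'' is false, and the reduction breaks.

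Your acknowledged fix---design the simulation so that every identifier it creates is transient and removable together with all tokens carrying it---is precisely what a faithful zero-testing encoding cannot provide. Proper completion is a universal property of every firing of every collector, so any auxiliary ``flush'' collector must never leave a collected identifier behind; but a counter identifier occurs in two linked pairs, so flushing one pair leaves the identifier in the other (violating proper completion), whereas a $\mathtt{dec}$-like flush of two adjacent pairs that may fire at any time lets the net spuriously shrink a ring to a self-pair and pass zero-tests, making the halt location reachable for non-halting machines (breaking faithfulness of the simulation). The paper sidesteps this tension entirely with the opposite polarity: its net is $\lambda$-sound iff the machine \emph{halts}. In the unsound (non-halting) case nothing needs to be controlled, and in the sound (halting) case every identifier---the control identifier and all ring identifiers---is removed simply by running the simulation to completion, discharging both universal requirements at once. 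Your construction is salvageable by making the control identifier collectable \emph{only} at the halt location, rather than strandable there; that is exactly the paper's construction. Finally, your ``slicker alternative'' of citing \cite{Lasota16} mirrors an informal remark the paper itself makes just before the theorem, but it is not a self-contained proof: the cited result is not a Rice-style theorem applicable verbatim to an arbitrary property such as identifier soundness, which is why the paper still supplies the explicit Minsky-machine reduction.
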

\begin{proof}
We prove this result by reduction from the reachability problem for a 2-counter Minsky machine by following ideas of the proof of Theorem 4 in~\cite{GGMR22}.

\medskip
A 2-counter Minsky machine with two non-negative counters $c_1$ and $c_2$ is a finite sequence of numbered instructions $1:\mathtt{ins_1},\ldots, n:\mathtt{ins_n}$, where $\mathtt{ins_n}=\mathsf{HALT}$ and for every $1\leq i < n$ we have that $\mathtt{ins_i}$ has one of the following forms:\smallskip

\begin{compactitem}
	\item $\mathtt{inc}~~ c_j;\; \mathtt{goto}~~ k$
	\item $\mathtt{if}~~ c_j=0~~ \mathtt{then~~goto}~~ k ~~ \mathtt{else} ~~(\mathtt{dec}~~c_j;\; \mathtt{goto}~~ l)$
\end{compactitem}\smallskip

\noindent Here, $j\in\set{1,2}$ and $1\leq k,l\leq n$, and $\mathtt{inc}$ (resp., $\mathtt{dec}$) is an operation used to increment (resp., decrement) the content of counter $c_j$.
It is well-known that, for a Minksy $2$-counter machine that starts with both counters set to 0, checking whether it eventually reaches the instruction $\mathsf{HALT}$ is undecidable.

We then largely rely on the encoding of Minksy $2$-counter machines presented in~\cite{GGMR22}.
In a nutshell, that encoding shows how so called OA-nets (we rely on them in the proof of Proposition~\ref{prop:verification-safety}) can simulate an arbitrary $n$-counter Minsky machine.
Borrowing an idea from~\cite{Las16}, each counter is encoded using a ``ring gadget'', where counter value $m$ is represented via sets of $m+1$ linked pairs of identifiers $S=\set{(a_1,a_2),(a_2,a_3),\ldots,(a_{m+1},a_{1})}$ such that an identifier appears exactly twice in $S$. At the level of the net marking, there always must be only one ring. For more detail on this approach we refer to~\cite{GGMR22}.

Without loss of generality, we assume that the machine halts only when both $c_1$ and $c_2$ are zero. Notice that an arbitrary 2-counter machine can be transformed into a corresponding machine that only halts with counter zero by appending, at the end of the original machine, a final set of instructions that decrements both counters, finally halting when they both test to zero.

\medskip
Using the \tpnid components from Figure~\ref{fig:minsky}, we can construct a \tpnid faithfully simulating a $2$-counter Minsky machine. Counter operations are defined as in \cite{GGMR22}. The \tpnid has one special object type $\lambda$, which works as follows:\smallskip

\begin{compactitem}
\item a new instance for $\lambda$ can be only created when a black token is contained in the distinguished \emph{init} place;
\item the emission of an object for $\lambda$ consumes the black token from the \emph{init} place, and inserts it in the place $\p_{q_0}$ that corresponds to the first instruction of the 2-counter machine;
\item when such a 2-counter machine halts, such a black token is finally transferred into the place $\p_{q_n}$, which in turn enables the last collector transition for $\lambda$.\footnote{Notice that transitions of net components simulating instructions with $\mathtt{dec}$, according to Definition~\ref{def:notations}, are also collectors for $\lambda$.}
\end{compactitem}\smallskip

This implies that the \tpnid is $\lambda$-sound if and only if the 2-counter machine halts.
\end{proof}

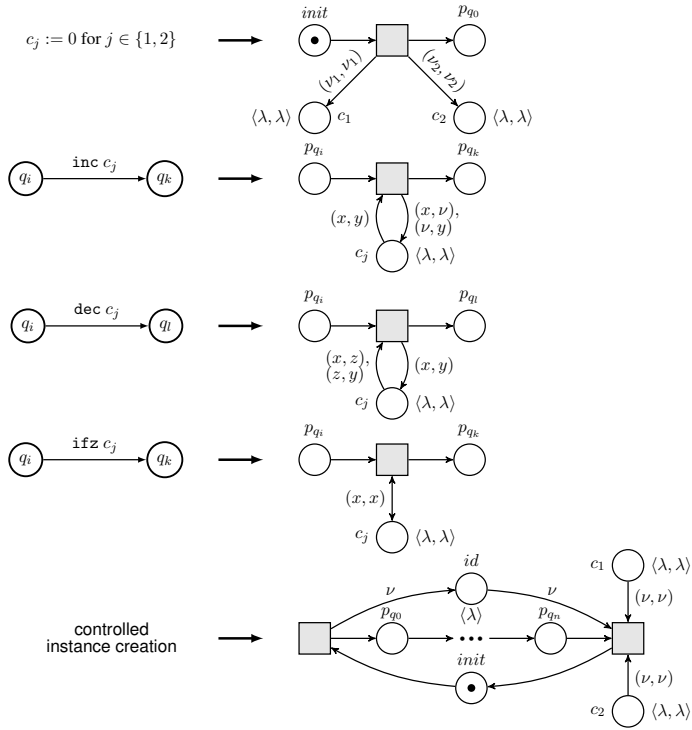
\begin{figure}[h!]
\vspace*{-2mm}
\centering
\resizebox{.59\textwidth}{!}{
\begin{tikzpicture}[->,>=stealth',auto,x=5mm,y=5mm,node distance=10mm and 10mm,thick]

	\node (a1src) {};
	\node[right=10mm of a1src] (a1tgt) {};
	\draw[ultra thick,-latex] (a1src) -- (a1tgt);

	\node[left= 5mm of a1src] (initialization) {$c_j:=0$ for $j \in \set{1,2}$};

	\node[place,tokens=1, right=5mm of a1tgt] (p0) {};
	\node[above=0mm of p0] {$\mathit{init}$};
	\node[transition,right=of p0] (init) {};
	\node[place,right=of init] (pqI) {};
	\node[above=0mm of pqI] {$p_{q_0}$};
	\node[place,below=of p0] (c1) {};
	\node[right=0mm of c1] {$c_1$};
	\node[left=0mm of c1] {$\tup{\lambda,\lambda}$};
	\node[place,below=of pqI] (cn) {};
	\node[left=0mm of cn] {$c_2$};
	\node[right=0mm of cn] {$\tup{\lambda,\lambda}$};
	
	\draw[->] (p0) -- (init);
	\draw[->] (init) -- (pqI);
	\draw[->] (init) -- node[above,sloped]{$(\nu_1,\nu_1)$} (c1);
	\draw[->] (init) -- node[above,sloped]{$(\nu_2,\nu_2)$} (cn);

	\node[below=28mm of a1src] (a2src) {};
	\node[right=10mm of a2src] (a2tgt) {};
	\draw[ultra thick,-latex] (a2src) -- (a2tgt);

	\node[config,left=5mm of a2src] (q2) {$q_k$};
	\node[config,left=23mm of q2] (q1) {$q_i$};
	\draw[link] (q1) --node {$\mathtt{inc}~c_j$} (q2);	
	
	\node[place,right=5mm of a2tgt] (pq1) {};
	\node[above=0mm of pq1] {$p_{q_i}$};
	\node[transition,right=of pq1] (t) {};
	\node[place,right=of t] (pq2) {};
	\node[above=0mm of pq2] {$p_{q_k}$};
	\node[place,below=of t] (pci) {};
	\node[left=0mm of pci] {$c_j$};
	\node[right=0mm of pci] {$\tup{\lambda, \lambda}$};
	\draw[->] (pq1) -- (t);
	\draw[->] (t) -- (pq2);
	\draw[->,out=120,in=-120]
		(pci)
		edge node[left] {$(x,y)$}
		(t);
	\draw[->,out=-60,in=60]
		(t)
		edge node[right] {$\begin{array}{@{}l@{}}(x,\nu),\\[-6pt](\nu,y)\end{array}$}
		(pci);

	\node[below=30mm of a2src] (a3src) {};
	\node[right=10mm of a3src] (a3tgt) {};
	\draw[ultra thick,-latex] (a3src) -- (a3tgt);

	\node[config,left=5mm of a3src] (q2) {$q_l$};
	\node[config,left=23mm of q2] (q1) {$q_i$};
	\draw[link] (q1) --node {$\mathtt{dec}~c_j$} (q2);	
	
	\node[place,right=5mm of a3tgt] (pq1) {};
	\node[above=0mm of pq1] {$p_{q_i}$};
	\node[transition,right=of pq1] (t) {};
	\node[place,right=of t] (pq2) {};
	\node[above=0mm of pq2] {$p_{q_l}$};
	\node[place,below=of t] (pci) {};
	\node[left=0mm of pci] {$c_j$};
	\node[right=0mm of pci] {$\tup{\lambda, \lambda}$};
	\draw[->] (pq1) -- (t);
	\draw[->] (t) -- (pq2);
	\draw[->,out=120,in=-120]
		(pci)
		edge node[left] {$\begin{array}{@{}l@{}}(x,z),\\[-6pt](z,y)\end{array}$}
		(t);
	\draw[->,out=-60,in=60]
		(t)
		edge node[right] {$(x,y)$}
		(pci);
				
	\node[below=27mm of a3src] (a4src) {};
	\node[right=10mm of a4src] (a4tgt) {};
	\draw[ultra thick,-latex] (a4src) -- (a4tgt);

	\node[config,left=5mm of a4src] (q2) {$q_k$};
	\node[config,left=23mm of q2] (q1) {$q_i$};
	\draw[link] (q1) --node {$\mathtt{ifz}~c_j$} (q2);	
	
	\node[place,right=5mm of a4tgt] (pq1) {};
	\node[above=0mm of pq1] {$p_{q_i}$};
	\node[transition,right=of pq1] (t) {};
	\node[place,right=of t] (pq2) {};
	\node[above=0mm of pq2] {$p_{q_k}$};
	\node[place,below=of t] (pci) {};
	\node[left=0mm of pci] {$c_j$};
	\node[right=0mm of pci] {$\tup{\lambda, \lambda}$};
	\draw[->] (pq1) -- (t);
	\draw[->] (t) -- (pq2);
	\draw[<->]
		(pci)
		edge node[left] {$(x,x)$}
		(t);
		
	\node[below=37mm of a4src] (a5src) {};
	\node[right=10mm of a5src] (a5tgt) {};
	\draw[ultra thick,-latex] (a5src) -- (a5tgt);
	
	\node[left=5mm of a5src] (q2) {$\begin{array}{@{}c@{}}\textsf{controlled}\\[-6pt]\textsf{instance creation}\end{array}$};
	\node[transition,right=5mm of a5tgt] (t1) {};
	\node[place,right=of t1] (pq0) {};
	\node[above=-1mm of pq0] {$p_{q_0}$};

	\node[right=of pq0] (dots) {\tiny{$\bullet \bullet \bullet$}};
	\node[place,right=of dots] (pqn) {};
	\node[above=-1mm of pqn] {$p_{q_n}$};
	\node[transition,right=of pqn] (t2) {};
	\node[place,above=9mm of t2] (c1){};
	\node[left=0mm of c1] {$c_1$};
	\node[right=0mm of c1] {$\tup{\lambda, \lambda}$};

	\node[place,below=9mm of t2] (c2) {};
	\node[left=0mm of c2] {$c_2$};
	\node[right=0mm of c2] {$\tup{\lambda, \lambda}$};

	\node[place,above=5.5mm of dots] (p1) {};
	\node[above=0mm of p1] {$id$};
	\node[below=-1mm of p1] {$\tup{\lambda}$};

	\node[place,below=5.5mm of dots,tokens=1] (init) {};
	\node[above=0mm of init] {$init$};
	\draw[->] (t1) -- (pq0);
	\draw[->] (pq0) -- ($(dots)-(0.8,0)$);
	\draw[->] ($(dots)+(0.8,0)$) -- (pqn);
	\draw[->] (pqn) -- (t2);
	\draw[->] (init) edge[bend left=12] (t1);
	\draw[->] (t2) edge[bend left=13] (init);
	\draw[->] (t1) edge[bend left=13] node[above] {$\nu$}  (p1);
	\draw[->] (p1) edge[bend left=13] node[above] {$\nu$}  (t2);		
	\draw[->] (c1) edge node[right,pos=.4] {$(\nu,\nu)$}  (t2);		
	\draw[->] (c2) edge node[right,pos=.4] {$(\nu,\nu)$}  (t2);		
\end{tikzpicture}
}
	\caption{Simulation of a Minksy $2$-counter machine via \tpnids. Here, $q_i$, $q_k$ and $q_l$ correspond to control states of the machine.}
	\label{fig:minsky}
\end{figure}

The above theorem shows that the identifier soundness is already undecidable for nets carrying identifier tuples of size $2$.
One may wonder whether the same result holds for \tpnids with singleton identifiers only.
To obtain this result one could, for example, study how the identifier soundness in this particular case relates to the notion of dynamic soundness -- an undecidable property of $\nu$-Petri nets studied in~\cite{MaV11}.

\begin{figure}[!h]
\vspace*{-2mm}
\centering
\begin{subfigure}{0.4\textwidth}
	\includegraphics[scale=.66]{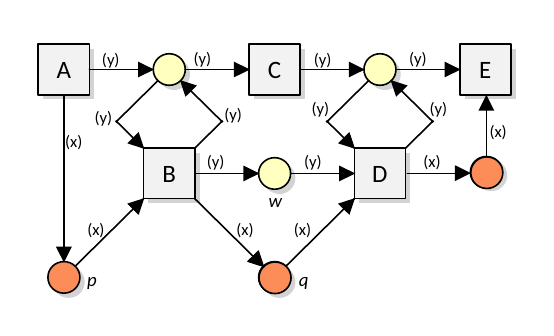}\vspace*{-1mm}
	\caption{$N_1$}\label{fig:projectionNotSufficientN1}
\end{subfigure}
\begin{subfigure}{0.4\textwidth}
	\includegraphics[scale=.66]{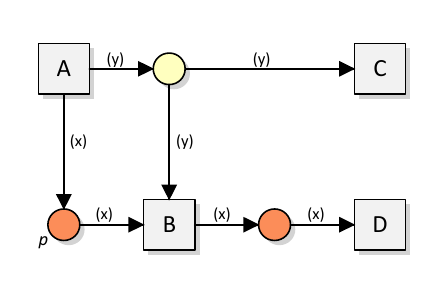}\vspace*{-1mm}
	\caption{$N_2$}\label{fig:projectionNotSufficientN2}
\end{subfigure}\vspace*{-1mm}
\caption{Two \tpnid{}s. Net $N_1$ is identifier sound, whereas net $N_2$ is not identifier sound.}\label{fig:projectionNotSufficient}
\end{figure}

The underlying idea of identifier soundness is that each type projection should behave well, i.e., each type projection should be sound.
Consider in \tpnid $N_{\mathit{rs}}$ of Fig.~\ref{fig:overallModel} and its $\mathit{customer}$-typed projection from Fig.~\ref{fig:cust-projection}.
The life cycle starts with transition $T$.
Transitions $K$ and~$V$ are two transitions that may remove the last reference to a $\mathit{customer}$.
Soundness of a transition-bordered WF-net would require that firing transition $K$ or transition $V$ would result in the final marking.
However, this is not necessarily the case. Consider the firing of transitions $T$, $G$ and $K$. Then, the token in place \emph{customer} remains, while the final transition $K$ already fired.
Hence, the \emph{customer} life cycle is not sound.
This raises the question whether we may conclude from this observation that $N_{\mathit{rs}}$ is not identifier sound.
Unfortunately, identifier soundness is not compositional, i.e., identifier soundness does not imply soundness of the type projections, and vice versa.
Consider the example \tpnids in Fig.~\ref{fig:projectionNotSufficient}.
The first net, $N_1$, is identifier sound.
However, taking the $\type_{\V}(y)$ projection of $N_1$ results in a bordered transition WF-net that is unsound: if transition $D$ fires fewer times than transition $B$, tokens will remain in place $w$.
The reverse is false as well.
Consider net $N_2$ in Fig.~\ref{fig:projectionNotSufficient}.
Each of the two projections are sound.
However, the resulting net reaches a deadlock after firing transitions $A$ and $C$ as a token generated by $A$ remains in place $p$. Hence, $N_2$ is not weakly $\type_{\V}(x)$-completing.

\begin{theorem}
	Let $\lambda\in \Lambda$ be a type, and let $(N, m_0)$ be a marked \tpnid.
	Then:
	\begin{enumerate}
\itemsep=0.85pt
		\item identifier soundness of $(N,m_0)$ does not imply soundness of $(\pi_\lambda(N),\pi_\lambda^N(m))$, and
		\item soundness of $(\pi_\lambda(N), \pi_\lambda^N(m))$ does not imply that $(N,m_0)$ is identifier sound.
	\end{enumerate}
\end{theorem}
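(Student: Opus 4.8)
The plan is to prove both non-implications by exhibiting the two explicit counterexamples $N_1$ and $N_2$ already depicted in Fig.~\ref{fig:projectionNotSufficient}; since each claim is negative, a single witnessing \tpnid suffices in each case.

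First, for claim~(1), I would take $N_1$ from Fig.~\ref{fig:projectionNotSufficientN1} and establish two things. I would verify that $(N_1, m_0)$ is identifier sound by checking, for every type $\lambda \in \type_\Lambda(N_1)$, both proper $\lambda$-completion and weak $\lambda$-termination directly against Definitions~\ref{def:completion} and~\ref{def:termination}: for each identifier ever emitted I would argue that it can always eventually be fully collected (weak termination) and that the collector removing it leaves no residual token carrying it (proper completion). I would then show that the $\type_{\V}(y)$-projection fails soundness: in $\pi_{\type_{\V}(y)}(N_1)$ the transitions obtained from $B$ and $D$ fire independently of one another, so whenever $D$ fires strictly fewer times than $B$, black tokens accumulate in the projection of place $w$; hence the closure of the associated transition-bordered WF-net is not weakly terminating, and the projection is unsound.

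For claim~(2), I would dually take $N_2$ from Fig.~\ref{fig:projectionNotSufficientN2}. I would first check that each of its two type projections is sound, again by passing to the closure of the associated transition-bordered WF-net and verifying proper completion, weak termination and quasi-liveness on that finite net. I would then exhibit the firing of $A$ followed by $C$ that reaches a deadlock in which the token produced by $A$ remains trapped in place $p$: the identifier it carries can never subsequently be consumed, so no reachable marking removes it. This violates weak $\type_{\V}(x)$-termination (equivalently, $N_2$ fails proper $\type_{\V}(x)$-completion), so $N_2$ is not identifier sound even though its projections are.

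The routine part is the deadlock and accumulation witnesses, each a short finite firing argument. The main obstacle is the \emph{positive} direction in each case --- proving that $N_1$ is genuinely identifier sound, and that both projections of $N_2$ are genuinely sound --- because these are universally quantified over all reachable markings and, in $N_1$, over unboundedly many emitted identifiers. Here I would exploit the fact that identifier soundness reasons about the fate of a single identifier in isolation (Definitions~\ref{def:completion}--\ref{def:termination}), whereas projection soundness collapses all identifiers into indistinguishable tokens via $|M(p)|$; making this mismatch precise is exactly what renders the two properties incomparable, and it is the crux of both arguments.
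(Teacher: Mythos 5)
Your proposal is correct and follows essentially the same route as the paper: the paper's proof also exhibits $N_1$ (identifier sound, but with an unsound $\type_{\V}(y)$-projection due to token accumulation in place $w$) and $N_2$ (sound projections, but a deadlock after firing $A$ and $C$ leaves an identifier trapped in place $p$, violating weak $\type_{\V}(x)$-termination). The only difference is that you explicitly flag the positive verifications (identifier soundness of $N_1$, soundness of $N_2$'s projections) as the nontrivial part, which the paper leaves as an inspection of the figures.
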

\begin{proof}
We prove both statements by contradiction.
For the first statement, consider \tpnid $N_1$ depicted in Fig.~\ref{fig:projectionNotSufficientN1}. Though $N_1$ is identifier sound, its $\type_{\V}(y)$-projection is not sound.
Similarly, the $\type_{\V}(x)$-projection and $\type_{\V}(y)$-projection of $N_2$, depicted in Fig.~\ref{fig:projectionNotSufficientN2}, are sound, but $N_2$ is not identifier sound, as $N_2$ is not weakly $\type_{\V}(x)$-completing.
\end{proof}

Consequently, compositional verification of soundness of each of the projections is not sufficient to conclude anything about identifier soundness of the complete net, and vice versa.

In general, weak bisimulation does not guarantee identifier soundness, as it does not impose any relation on the identifiers in the nets.
However, if the bisimulation relation takes into account and preserves identifiers, then identifier soundness is preserved.
We formally demonstrate this property below.

\begin{lemma}[Weak bisimulation preserves proper type completion]\label{lm:weakbisimpreservespropertypecompletion}
	Let $(N_1,m_0^1)$ and $(N_2,m_0^2)$ be two marked \tpnids.
	Let $\lambda \in \Lambda$ be some type such that $C_{N_1}(\lambda) = C_{N_2}(\lambda)$, and let $Q \subseteq \mathbb{M}(N_1) \times \mathbb{M}(N_2)$ be a relation such that $I(\lambda) \cap \id{m_1} = I(\lambda) \cap \id{m_2}$ for all $(m_1,m_2) \in Q$ and
	$\tsys{N_1, m_0^1} \approx_Q \tsys{N_2, m_0^2}$.
	Then $N_1$ is properly $\lambda$-completing iff $N_2$ is properly $\lambda$-completing.
\end{lemma}
\begin{proof}
($\Rightarrow$) Suppose $N_1$ is properly $\lambda$-completing.
We need to show that $N_2$ is properly $\lambda$-completing.
Let $t \in C_{N_2}(\lambda)$ be a transition, let $\psi : \V\to\I$ be a binding and let $m_2, m'_2 \in \reachable{N_2}{m_0^2}$, such that $\fire{(N_2,m_2)}{t,\psi}{(N_2,m'_2)}$.
Let $\cname{id} \in \rng{\restr{\psi}{\delvar{t}}} \cap \id{m_2}$ with $\type(\cname{id})=\lambda$.
As $Q$ is a weak bisimulation relation, some markings $m_1,m'_1 \in \mathbb{M}{(N_1)}$ exist such that $(m_1,m_2) \in Q$ and $\fire{(N_1,m_1)}{t,\psi}{(N_1,m'_1)}$.
Then, by the definition of $Q$, it must be that $\cname{id} \in \rng{\restr{\psi}{\delvar{t}}} \cap \id{m_1}$.
As $C_{N_1}(\lambda) = C_{N_2}(\lambda)$ and $N_1$ is properly $\lambda$-completing, $\cname{id} \not\in \id{m'_1}$.
Since $Q$ is a weak bisimulation relation, $(m'_1,m'_2) \in Q$.
Thus, $\cname{id}\not\in\id{m'_2}$.
Hence, $N_2$ is properly $\lambda$-completing.

\smallskip\noindent
($\Leftarrow$) Follows from the commutativity of weak bisimulation.
\end{proof}

\begin{lemma}[Weak bisimulation preserves weak type termination]\label{lm:weakbisimpreservesweaktypetermination}
Let $(N_1,m_0^1)$ and $(N_2,m_0^2)$ be two marked \tpnids.
Let $\lambda \in \Lambda$ be some type, and let $Q \subseteq \mathbb{M}(N_1) \times \mathbb{M}(N_2)$ be a relation such that $I(\lambda) \cap \id{m_1} = I(\lambda) \cap \id{m_2}$ for all $(m_1,m_2) \in Q$ and
$\tsys{N_1, m_0^1} \approx_Q \tsys{N_2, m_0^2}$.
Then $N_1$ is weakly $\lambda$-terminating iff $N_2$ is weakly $\lambda$-terminating.
\end{lemma}
\begin{proof}
($\Rightarrow$) Suppose $N_1$ is weakly $\lambda$-terminating.
We need to show that $N_2$ is weakly $\lambda$-terminating.
Let $m_2 \in \reachable{N_2}{m_0^2}$ be some reachable marking and $\cname{id} \in I(\lambda)$ such that $\cname{id} \in \id{m_2}$.
As $Q$ is a weak bisimulation relation, some marking $m_1\in \mathbb{M}(N_1)$ exists with $(m_1,m_2) \in Q$.
Then, by the definition of $Q$, $\cname{id} \in \id{m_1}$.
As $N_1$ is weakly $\lambda$-terminating, a marking $m'_1 \in \reachable{N_1}{m_1}$ and firing sequence $\eta$ exist such that $\fire{(N_1,m_1)}{\eta}{(N_1,m'_1)}$ and $\cname{id} \not\in \id{m'_1}$.
As $Q$ is a weak bisimulation relation, a marking $m'_2 \in \mathbb{M}(N_2)$ exists such that $(m'_1, m'_2) \in Q$ and $\fire{(N_2,m_2)}{\eta}{(N_2,m'_2)}$.
As $I(\lambda) \cap \id{m'_1}= I(\lambda) \cap \id{m'_2}$, we have that $\cname{id} \not \in \id{m'_2}$, which proves the statement.

\smallskip\noindent
($\Leftarrow$) Follows from the commutativity of weak bisimulation.
\end{proof}

The two above lemmas are combined together to prove the following result.
\begin{theorem}[Weak bisimulation preserves $\lambda$-soundness]\label{thm:weakbisimpreservesidentifiersoundness}
Let $(N_1,m_0^1)$ and $(N_2,m_0^2)$ be two marked \tpnids.
Let $\lambda \in \Lambda$ be some type such that $C_{N_1}(\lambda) = C_{N_2}(\lambda)$, and let $Q \subseteq \mathbb{M}(N_1) \times \mathbb{M}(N_2)$ be a relation such that $I(\lambda) \cap \id{m_1} = I(\lambda) \cap \id{m_2}$ for all $(m_1,m_2) \in Q$ and
$\tsys{N_1, m_0^1} \approx_Q \tsys{N_2, m_0^2}$.
Then $N_1$ is  $\lambda$ sound iff $N_2$ is $\lambda$ sound.
\end{theorem}
\begin{proof}
Follows directly from Lm~\ref{lm:weakbisimpreservespropertypecompletion} and Lm~\ref{lm:weakbisimpreservesweaktypetermination}.
\end{proof}

\subsection{Towards verification of logical criteria}%
\label{sec:verification}

We now describe how existing results on the verification of safety and temporal properties over variants of Petri nets \cite{MontaliR16,GGMR22} and transition systems operating over relational structures \cite{CalvaneseGMP18,CalvaneseGMP22} can be lifted to the case of \emph{bounded} \tpnids. Boundedness is a sufficient requirement to make the verification of such properties decidable.

We start by considering safety checking of \tpnids, considering the recent results presented in~\cite{GGMR22}. A safety property is a property that must hold globally, that is, in every marking of the net.
Such a property is usually checked by formulating its unsafety dual and verifying whether a marking satisfying that unsafety property is reachable.

\begin{definition}[Unsafety property]
\label{def:property}
An \emph{unsafety property} over a \tpnid $N$
is a formula $\exists y_1,\ldots, y_k. \psi(y_1,\ldots, y_k)$, where $\psi$ is defined by the following grammar:
\[\psi::=p(x_1\cdots x_n)\geq c\,|\,x=y\,| x\neq y\,|\,p\geq c\,|\,\psi\land\psi,\]
Here:
\begin{inparaenum}[\it (i)]
\item $p$ is a place name from $N$,
\item  $x_i\in \V$ (for every $1\leq i\leq n$),
\item $x,y\in \V\cup\I$,
\item  $p\geq c$ and $p(x_1\cdots x_n)\geq c$ are atomic predicates defined over place markings with $c\in\naturals$.
\end{inparaenum}
\end{definition}
Given a marked \tpnid $(N,m_0)$, each atomic predicate is interpreted on all possible markings covering those from  $\reachable{N}{m_0}$. Like that, $p\geq c$ specifies that in place $p$ there are at least $c$ tokens, whereas $p(x_1\cdots x_n)\geq c$ indicates that in place $p$ there are at least $c$ tokens carrying an identifier vector that can valuate $x_1 \cdots x_n$.
We use elements from $x_1,\ldots, x_n$ as a filter selecting matching tokens in $p$, and use the variables from the same sequence to inspect different places by creating implicit joins between tokens stored therein. 
As it has been established in~\cite{GGMR22}, such unsafety properties can be used for expressing object-aware \emph{coverability} properties of \tpnids.

\smallskip
For example, as a property we may write that $\exists z,y. \mathit{created\_offer}(z,y)\geq 1 \land \mathit{customer}(y)\ge 1$ captures the (undesired) situation in which an offer has been made to customer $z$, but that customer is still available for receiving other offers.

\smallskip
The verification problem for checking unsafety properties is specified as follows: given an unsafety property $\psi$, a marked \tpnid $(N,m_0)$ is \emph{unsafe} w.r.t. $\psi$ if
$(N,m_0)$ can reach a marking in which $\psi$ holds.
If this is not the case, then we say that $(N, m_0)$ is \emph{safe} w.r.t.~$\psi$.
We show below that such verification problem is actually decidable.
\begin{proposition}\label{prop:verification-safety}
Verification of unsafety properties over bounded, marked \tpnids is decidable.
\end{proposition}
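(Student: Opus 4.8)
The plan is to show that, although the induced transition system $\tsys{N,M_0}$ is in general infinite (identifiers range over infinite domains), boundedness collapses it, \emph{up to renaming of identifiers}, to a finite symbolic reachability graph on which both reachability and the unsafety property are decidable. An alternative route, matching the machinery already used in the proof of Theorem~\ref{thm:id-soundness-undecidable}, is to encode a bounded \tpnid into an OA-net of~\cite{GGMR22} and invoke the decidability of safety/coverability for bounded OA-nets established there; below I describe the self-contained symmetry argument, which also explains \emph{why} such a reduction works.

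First I would fix the group $G$ of type-preserving bijections $\phi:\I\to\I$ (i.e.\ $\type(\phi(\cname{id}))=\type(\cname{id})$) and let it act on markings by $\phi\cdot M=\rho_\phi(M)$. The firing rule is \emph{equivariant}: whenever $\fire{(N,M)}{t,\psi}{(N,M')}$ holds and $\phi\in G$, then $\fire{(N,\rho_\phi(M))}{t,\phi\circ\psi}{(N,\rho_\phi(M'))}$ holds as well. This follows directly from the definition of $\rho$ and of the firing rule, using that the freshness condition $\psi(v)\notin\id{M}$ for $v\in\newvar{t}$ is preserved by $\phi$ because $\id{\rho_\phi(M)}=\phi(\id{M})$. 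Hence the orbit equivalence $M\equiv M'$ (meaning $M'=\rho_\phi(M)$ for some $\phi\in G$) is preserved by firing in both directions, so the transition relation descends to a well-defined relation on $\equiv$-orbits.

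Next I would invoke boundedness. Set $B=\sum_{p\in\places}k_p$, where $k_p$ bounds $|M(p)|$ over $\reachable{N}{M_0}$ (such $k_p$ exist by assumption), and let $L=\max_{p}|\alpha(p)|$. Every reachable marking then carries at most $B$ tokens and hence mentions at most $B\cdot L$ distinct identifiers; up to a type-preserving bijection, such a marking is completely described by the multiset of its tokens together with the equality/inequality pattern among their identifier positions, which is a finite combinatorial datum. Consequently the set of $\equiv$-orbits of reachable markings is \emph{finite}. I would then build the quotient reachability graph by a standard worklist search from $[M_0]$: given a representative $M$, its successor orbits are computed by enumerating the finitely many transitions and the finitely many relevant bindings, where each emitting variable is instantiated by a single \emph{canonical} fresh identifier per type (any two fresh choices are related by a $\phi\in G$ fixing $M$, so one representative per orbit suffices). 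The search terminates because the reachable-orbit set is finite; crucially, we need not know $B$ in advance.

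Finally, the unsafety property $\exists y_1,\dots,y_k.\,\psi$ is itself invariant under $G$: the counts $p\geq c$ and the filtered counts $p(x_1\cdots x_n)\geq c$, as well as the equalities $x=y$ and inequalities $x\neq y$, are all preserved by renaming. Thus whether the property holds in $M$ depends only on $[M]$, and on a finite representative it is decidable by guessing assignments of $y_1,\dots,y_k$ to identifiers occurring in $M$ (plus a bounded supply of fresh fillers to witness the $x\neq y$ conjuncts) and evaluating the quantifier-free matrix; the algorithm reports ``unsafe'' iff some reachable orbit satisfies $\psi$. The main obstacle is exactly the finiteness-and-effectiveness core: establishing equivariance together with the claim that one canonical fresh identifier per emitting variable faithfully represents all fresh choices, and deriving the finiteness of the reachable-orbit set from place-wise boundedness. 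This is precisely where full boundedness (rather than mere width- or depth-boundedness) is essential, since without a global token bound the equality patterns, and hence the quotient, could be infinite.
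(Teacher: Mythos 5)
Your proof is correct in substance but takes a genuinely different route from the paper's. The paper proves Proposition~\ref{prop:verification-safety} by a short reduction: it recalls the definition of OA-nets from~\cite{GGMR22}, observes that every \tpnid is an OA-net without guards (with $\newvar{t}$ landing in the OA-net fresh-variable set), and then invokes Theorem 3 of~\cite{GGMR22} for bounded OA-nets --- exactly the ``alternative route'' you mention in passing and then set aside. What you develop instead is the self-contained symmetry argument: equivariance of the firing rule under type-preserving bijections of $\I$, finiteness of the orbit quotient of $\reachable{N}{M_0}$ under global boundedness, an effective worklist exploration using canonical fresh identifiers for emitting variables, and invariance of the property under renaming. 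This is essentially the mathematics that underlies the cited external theorem, and it is the same genericity phenomenon the paper itself formalizes later (Definition~\ref{def:generic-ts} and the remark following it) to obtain the model-checking result of Theorem~\ref{thm:model-checking} via~\cite{CalvaneseGMP18}; your equivariance claim is precisely that remark. What the paper's route buys is brevity and transfer: the OA-net embedding situates \tpnids in a framework where further results (guards, SMT-based verification) come for free, and the same encoding is reused in the undecidability proof of Theorem~\ref{thm:id-soundness-undecidable}. What your route buys is an explicit, elementary decision procedure and a transparent account of why full boundedness --- rather than width- or depth-boundedness alone --- is the right hypothesis, a point the paper never argues directly.

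One imprecision needs patching. Definition~\ref{def:property} allows constants: in atoms $x = y$ and $x \neq y$ one may have $x,y \in \V \cup \I$. A property mentioning a concrete identifier is \emph{not} invariant under the full group $G$ of type-preserving bijections, so your last step fails as stated. The fix is routine: quotient instead by the subgroup of $G$ fixing pointwise the finitely many identifiers occurring in the property. Equivariance of firing holds a fortiori for this subgroup, the orbit set remains finite (the shape of a marking now additionally records which positions carry which of these finitely many constants), and the property becomes invariant under the restricted group. With that adjustment your argument goes through.
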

\begin{proof}
To prove this statement, we introduce the class of OA-nets~\cite{GGMR22} and establish their relation to \tpnids. To do so, we provide a modified version of Definition 1 from~\cite{GGMR22}.
Essentially, an OA-net is a tuple $(P,T,F_{in}, F_{out},\mathtt{color},\mathtt{guard})$, where:
\begin{itemize}
\itemsep=0.9pt
\item $P$ and $T$ are finite sets of places and transitions, s.t. $P\cap T =\emptyset$;
\item $\mathtt{color}:P\rightarrow\Lambda^*$ is a place typing function;
\item $F_{in} : P \times T \to \mult{\Omega_\V}$ is an input flow s.t. $\type_{\V}(F_{in}(p,t))=\mathtt{color}(p)$ for every $(p,t)\in P\times T$;\footnote{We denote by $\Omega_A$ the set of all possible  tuples of variables and identifiers over a set $A$.}\footnote{Without loss of generality, we assume that $\type_\V$ naturally extends to cartesian products.}
\item $F_{out} : T \times P \to \mult{\Omega_{\mathcal{X}}}$ is an output flow s.t. $\type_{\V}(F_{out}(t,p))=\mathtt{color}(p)$ for every $(t,p)\in T\times P$ and $\mathcal{X}=\V\uplus \hat\V$, where $\hat\V$ is the countably infinite set of fresh variables (i.e., variables used to provide fresh inputs only);\footnote{The original definition from~\cite{GGMR22} also allows for constants from $\I$ to appear in the output flow vectors. However, for simplicity's sake, we removed it here form the definition of $F_{out}$.}
\item $\mathtt{guard}:T\not\to \Phi$ is a partial guard assignment function, s.t. $\Phi$ is a set of conditions $\phi ::= y_1 = y_2 \,|\, y_1\neq y_2 \,|\, \phi\land\phi$, where $y_i\in \V\cup\I$, and
for each $\phi=\mathtt{guard}(t)$ and $t\in T$ it holds that $\var{\phi}\subseteq\invar{t}$.\footnote{Here, $\var{\phi}$ provides the set of all variables in $\phi$.}
\end{itemize}
It is easy to see that a \tpnid is an OA-net without guards. Moreover, using the notions from Definition~\ref{def:notations}, we get that $\newvar{t}\subset\hat\V$, for each $t\in T$.

Given the relation between \tpnids and OA-nets, the proof of the decidability follows immediately from Theorem 3 in~\cite{GGMR22}.
\end{proof}
\smallskip
One may wonder whether it is possible to go beyond safety and check
other properties expressible on top of \tpnids using more sophisticated temporal logics.
We answer to this question affirmatively, by proving that bounded \tpnids
induce transition systems that enjoy the so-called \emph{genericity} property~\cite{CalvaneseGMP18}. Such property, combined with \tpnid boundedness (which corresponds to the notion of state-boundedness used in~\cite{CalvaneseGMP18}) guarantees decidability of model checking for sophisticated variants of first-order temporal logics~\cite{HaririCGDM13,CalvaneseGMP18,CalvaneseGMP22}.

\smallskip
In generic transition systems, the behaviour does not depend on the actual
data present in the states, but only on how they relate to each other. This essentially reconstructs the well-known notion of genericity in databases, which expresses that isomorphic databases return the same answers to the same query, modulo renaming of individuals~\cite{AbiteboulHV95}.

We lift the notion of genericity to the case of transition systems induced by
\tpnids. To proceed, we first need to define a suitable notion of isomorphism
between two markings of a net.
\begin{definition}[Marking isomorphism]
\label{def:marking-isomorphism}
Given a \tpnid $N$ and two markings $m_1,m_2 \in \mathbb{M}(N)$,
we say that $m_1$ and $m_2$ are \emph{isomorphic},
written $m_1\sim_h m_2$,
if there exits a bijection (called \emph{isomorphism}) $h:Id(m_1)\to Id(m_2)$ such that for every $p\in \places$, it holds that $(\cname{id_1}\cdots\cname{id_n})^k\in m_1$ iff $(h(\cname{id_1})\cdots h(\cname{id_n}))^k\in m_2$, for every $k \in \naturals$.
\end{definition}
Intuitively, two markings are called isomorphic if they have the same amount of tokens and tokens correspond to each other modulo consistent renaming of identifiers.

\begin{definition}[Generic transition system]
\label{def:generic-ts}
Let $\tsys{} = (S,A,s_0, \to)$ 
be the transition system induced by some \tpnid. Then $\tsys{}$ is \emph{generic} if
for every markings $m_1,m_1',m_2\in S$ and every bijection $h:\I\to \I$,
if $m_1\sim_h m_2$ and $\fire{m_1}{t,\psi}{m_1'}$
(for some $t\in\transitions$ and binding $\psi:\V\to\I$),
then there exists $m_2'\in S$ and $\psi':\V\to\I$ such that
$\fire{m_2}{t,\psi'}{m_2'}$, $m_1'\sim_h m_2'$ and $\psi(v)=h(\psi'(v))$, for every $v\in\V$.
\end{definition}
As one can see from the definition, genericity requires that if two marking are isomorphic, then they induce the same transitions modulo isomorphism (i.e., the transition names are the same, and the variable assignments are equivalent modulo renaming). This implies that they induce isomorphic successors.

\begin{remark}
Let $(N,m_0)$ be a marked \tpnid. Then its induced transition system $\tsys{N,m_0}$ is generic.
\end{remark}
The above result can be easily shown by considering the transition system construction
described in Definition~\ref{def:induced-ts}, by considering isomorphism between its markings given by a simple renaming function and checking the conditions of Definition~\ref{def:generic-ts}.

\medskip
As it has been demonstrated in \cite{CalvaneseGMP18,CalvaneseGMP22},
model checking of sophisticated first-order variants of $\mu$-calculus and LTL
becomes decidable for so-called state-bounded generic transition systems.
Since the transition systems induced by \tpnids are also generic, and boundedness of a \tpnid correspond to state-boundedness of its induced transition system, we directly obtain decidability of
model checking for the same logics considered there, extended in our case over atomic predicates of the form $p\odot c$ and $p(x_1\cdots x_n)\odot c$, where $\odot\in\set{<,>,=,\leq,\geq}$, in the style of
similar logics introduced in~\cite{MontaliR16}.
We shall refer to such logics as $\mu\L{\text{-}FO}^{\text{PNID}}$ and $LTL\text{-}FO_p^{\text{PNID}}$, but in this work we omit their \linebreak definition.
\begin{theorem}
\label{thm:model-checking}
Model checking of $\mu\L{\text{-}FO}^{\text{PNID}}$ and $LTL\text{-}FO_p^{\text{PNID}}$ formulae is decidable for bounded, marked \tpnids.
\end{theorem}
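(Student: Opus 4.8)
The plan is to reduce model checking over a bounded, marked \tpnid to the already-decidable model-checking problem for state-bounded generic transition systems established in~\cite{CalvaneseGMP18,CalvaneseGMP22}. First I would fix a bounded marked \tpnid $(N,m_0)$ together with its induced transition system $\tsys{N,m_0}$ from Definition~\ref{def:induced-ts}, and read each reachable marking as a relational state whose active domain is the set $\id{M}$ of identifiers it uses. The two hypotheses required by the cited decidability results are genericity and state-boundedness. Genericity is supplied directly by the Remark preceding the theorem, which guarantees that $\tsys{N,m_0}$ satisfies Definition~\ref{def:generic-ts}. For state-boundedness, I would observe that boundedness in the sense of Definition~\ref{def:boundedness} bounds, uniformly over $\reachable{N}{m_0}$, the number of tokens in every place; since every token carries an identifier vector of fixed length determined by the place type, this bounds $|\id{M}|$ uniformly, which is exactly the state-boundedness condition of~\cite{CalvaneseGMP18}.

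Next I would address the only genuine gap: our logics $\mu\L{\text{-}FO}^{\text{PNID}}$ and $LTL\text{-}FO_p^{\text{PNID}}$ extend the first-order temporal logics of~\cite{CalvaneseGMP18,CalvaneseGMP22} with the counting atoms $p\odot c$ and $p(x_1\cdots x_n)\odot c$, where $\odot\in\set{<,>,=,\leq,\geq}$. Because markings are multisets rather than plain relational structures, these atoms speak about token multiplicities, which the purely relational framework does not evaluate natively. The key step is to show that, under boundedness, every such atom is equivalent to a condition the decidable framework can check. Since each place is bounded by some $k$, the multiplicity of any identifier vector in a place never exceeds $k$, so each atom ranges over a finite set of admissible values; I would encode multiplicities up to the bound as auxiliary relational information attached to the state, in the style of the counting logics of~\cite{MontaliR16}, and then verify that this encoding preserves both genericity and state-boundedness.

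Finally, with genericity and state-boundedness established and the counting atoms faithfully encoded, I would invoke the decidability theorems of~\cite{CalvaneseGMP18,CalvaneseGMP22} essentially verbatim to conclude that model checking of $\mu\L{\text{-}FO}^{\text{PNID}}$ and $LTL\text{-}FO_p^{\text{PNID}}$ is decidable for bounded, marked \tpnids. I expect the main obstacle to be precisely the treatment of the counting atoms: one must confirm that the bounded multiset semantics can be simulated within the relational, set-based setting of the cited results without breaking genericity, and that the auxiliary multiplicity information neither enlarges the active domain nor otherwise violates state-boundedness.
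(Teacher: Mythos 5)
Your proposal follows essentially the same route as the paper: the paper establishes this theorem directly from its preceding discussion, namely genericity of the induced transition system (the Remark before the theorem), the correspondence between \tpnid boundedness and the state-boundedness notion of~\cite{CalvaneseGMP18}, and a direct appeal to the decidability results of~\cite{CalvaneseGMP18,CalvaneseGMP22}, with the counting atoms $p\odot c$ and $p(x_1\cdots x_n)\odot c$ handled ``in the style of''~\cite{MontaliR16}. Your only divergence is that you spell out the encoding of the bounded multiplicity atoms and the verification that it preserves genericity and state-boundedness, a step the paper leaves implicit, so your write-up is a more explicit version of the same argument rather than a different one.
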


Whereas the boundedness condition may appear restrictive at the first sight, we recall that according to Definition~\ref{def:boundedness}, a \tpnid is $k$-bounded if every marking reachable from the initial one does not assign more than $k$ tokens to every place of the net.
This however does not impede the net at hand from reaching infinitely many states as tokens may, along its run, carry infinitely many distinct objects. Notice also that this condition is less restrictive than identifier boundedness (which essentially forces the identifier domain to be finite) made use of in~\cite{PolyvyanyyWOB19}.
In Section~\ref{sec:soundness:with:resources} we  discuss a class of \tpnids for which boundedness still allows to explore lifecycles of potentially infinitely many objects.

\section{Correctness by construction}
\label{sec:soundness:by:construction}

As shown in the previous section, identifier soundness is undecidable.
However, we are still interested in ensuring correctness criteria over the modeled system.
In this section, we propose a structural approach to taming the undecidability
and study  sub-classes of \tpnids that are identifier sound by construction.

\subsection{EC-closed workflow nets}
\label{sec:ec-wf-closure}

\begin{figure}[t]
	\centering
	\includegraphics[scale=.8]{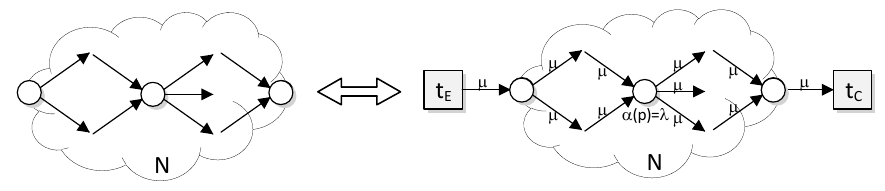}
	\caption{EC-closure of a WF-net $N$.}\label{fig:EC-closure}
\end{figure}

WF-nets are widely used to model business processes.
The initial place of the WF-net signifies the start of a \emph{case}, the final place represents the goal state, i.e., the process case completion. A firing sequence from initial state to final state represents the activities that are performed for a single case.
Thus, a WF-net describes all possible sequences of a single case.
Process engines, like Yasper~\cite{HeeOPSW06} simulate the execution of multiple cases in parallel by coloring the tokens with the case identifier (a similar idea is used for resource-constrained WF-net variants of $\nu$-PNs in~\cite{MonR16}).
In other words, they label each place with a case type, and inscribe each arc with a variable.
To execute it, the WF-net is closed with an emitter and a collector, as shown in Figure~\ref{fig:EC-closure}.
We generalize this idea to any place label, i.e., any finite sequence of types may be used to represent a case. We use the technical results obtained in this section further on in Section~\ref{sec:wf-refinement}.

\begin{definition}[EC-Closure]
\label{def:ec-closure}
Given a WF-net $N$, place type $\vec{\lambda} \in \Lambda^*$ and a variable vector $\vec{v}\in\V^*$ such that $\type_{\V}(\vec{v}) = \vec{\lambda}$,
its \emph{EC-closure} is a \tpnid $\mathcal{W}(N, \vec{\lambda}, \vec v) = (\places_N, \transitions_N \cup \{t_E, t_C\}, \flow_N \cup \{(t_E, \inp), (\outp, t_C)\}, \alpha, \beta)$, with:
\begin{itemize}
\itemsep=0.9pt
	\item $\alpha(p) =\vec{\lambda}$ for all places $p \in \places_N$;
	\item $\beta(f) = \vec{v}\,{}^{W(f)}$ for all flows $f \in \flow_N$, and $\beta((t_e,\inp))=\beta((\outp, t_c)) = [\,\vec{v}\,]$.
\end{itemize}
\end{definition}

The EC-closure of a WF-net describes all cases that run simultaneously at any given time.
In other words, any reachable marking of the EC-closure is the ``sum'' of all simultaneous cases.
Lemma~\ref{lemma:weak-bisim-wrappednet} formalizes this idea by establishing  weak bisimulation  between the projection on a single case and the original net. 

\begin{lemma}[Weak bisimulation for each identifier]\label{lemma:weak-bisim-wrappednet}
Let $N$ be a WF-net, $\vec\lambda \in \Lambda^*$ be a place type and $\vec{v} \in \V^*$ be a variable vector s.t. $\type_{\V}(\vec{v}) = \vec\lambda$.
Then, for any $\vec{\cname{id}} \in \I^{|\vec\lambda|}$,  $\rho_{r}(\Gamma_{\mathcal{W}(N,\vec\lambda,\vec{v}),\emptyset}) \approx \Gamma_{N,[\inp]}$, where $r$ in renaming $\rho_{r}$ is such that $r(t,\psi) = t$, if $\psi(\vec{v}) = \vec{\cname{id}}$, and $r((t,\psi))=\tau$, otherwise.
\end{lemma}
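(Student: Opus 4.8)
We need to show a weak bisimulation between two labeled transition systems. On one side is $\Gamma_{N,[\inp]}$, the LTS of the original WF-net $N$ started with one token in the source place. On the other side is $\rho_r(\Gamma_{\mathcal{W}(N,\vec\lambda,\vec v),\emptyset})$, the LTS of the EC-closure started from the empty marking, but with all transition firings that do \emph{not} assign $\vec v \mapsto \vec{\cname{id}}$ relabeled to $\tau$. Intuitively, the EC-closure runs arbitrarily many cases in parallel, each tagged by a distinct identifier vector; hiding every firing except those touching the chosen $\vec{\cname{id}}$ should leave behind exactly the behaviour of a single case, which is the original WF-net.

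**The plan.** The natural approach is to exhibit an explicit relation $R$ and verify the weak-bisimulation clauses of Definition~\ref{def:bisimulation} directly. First I would define $R$ to relate a marking $M$ of the EC-closure to the marking $m$ of $N$ obtained by projecting $M$ onto the chosen identifier vector: concretely, set $m(p) = M(p)(\vec{\cname{id}})$ for each place $p \in \places_N$, counting only those tokens in $M$ carrying exactly $\vec{\cname{id}}$. Since $\mathcal{W}(N,\vec\lambda,\vec v)$ types every place with $\vec\lambda$ and inscribes every internal arc with $\vec{v}\,^{W(f)}$, a firing under a binding $\psi$ with $\psi(\vec v)=\vec{\cname{id}}$ moves tokens carrying $\vec{\cname{id}}$ exactly as the corresponding transition of $N$ moves black tokens, with the same weights $W(f)$; the emitter $t_E$ with $\psi(\vec v)=\vec{\cname{id}}$ corresponds to placing the initial token in $\inp$, and the collector $t_C$ corresponds to consuming the token from $\outp$. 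I would take the rooted pair to be $(\emptyset,[\inp])$ and check it lies in $R$ (the empty marking projects to the empty marking, which is not quite $[\inp]$ — so care is needed: the root pairing must instead account for the fact that before the chosen instance is created, the $N$-side should be at $[\inp]$; this is handled because in the weak setting the $\tau$-labelled creation of the instance bridges $\emptyset$ and the projection carrying $[\inp]$, so I would actually include both $\emptyset$ and the post-emission markings in the relation with $[\inp]$).

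**Matching moves.** For the forward direction, suppose $M \mathrel{R} m$ and $M \xrightarrow{(t,\psi)} M'$ in the EC-closure. The relabeling $r$ sends this to either $t$ (when $\psi(\vec v)=\vec{\cname{id}}$) or $\tau$ (otherwise). In the $\tau$ case, the firing only rearranges tokens carrying identifier vectors other than $\vec{\cname{id}}$ — because a transition either binds $\vec v$ to $\vec{\cname{id}}$ or it does not touch the $\vec{\cname{id}}$-tokens at all — so the projection is unchanged, $M' \mathrel{R} m$, matching the weak clause that a $\tau$-move may be answered by staying put. In the non-$\tau$ case, $\psi(\vec v)=\vec{\cname{id}}$, and I would show the projected marking makes the corresponding move $m \xrightarrow{t} m'$ in $N$ with $M' \mathrel{R} m'$, using that enabledness of $t$ under $\psi$ in $\mathcal W$ restricted to $\vec{\cname{id}}$-tokens coincides with enabledness of $t$ in $N$ on black tokens. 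For the backward direction, given $m \xrightarrow{t} m'$ in $N$, I would construct a binding $\psi$ with $\psi(\vec v)=\vec{\cname{id}}$ (choosing any injective completion on the remaining variables, respecting freshness for $\newvar{t}$) firing $t$ in $\mathcal W$ and landing in $R$; here I must exhibit a \emph{weak} transition $M \xdasharrow{~t~} M'$, but since no intermediate $\tau$'s are needed it is in fact a strong step.

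**The main obstacle.** The delicate point is the treatment of the emitter and freshness. On the original-net side the source token is simply present, whereas on the closure side the chosen instance is born by firing $t_E$, and the definition of a binding demands that the emitted vector be fresh relative to the current marking $\id{M}$. I must argue that for the \emph{fixed} target $\vec{\cname{id}}$ this freshness can always be arranged — which requires that $\vec{\cname{id}}$ is not already live in $M$ at the moment of emission; this forces the relation $R$ to encode that the instance $\vec{\cname{id}}$ has at most one ``active life'' at a time, tying the root pairing to the state before emission and after collection. Establishing that $R$ is genuinely a bisimulation (both $\preccurlyeq_R$ and $\preccurlyeq_{R^{-1}}$) and that it is rooted therefore hinges on carefully packaging these pre-birth/post-death configurations of $\vec{\cname{id}}$ into the $N$-side markings $[\inp]$ and $\emptyset$; I expect verifying the freshness-compatibility of bindings in the backward direction to be the step demanding the most attention, while the structural token-pushing correspondence is routine given the uniform typing and inscriptions of the EC-closure.
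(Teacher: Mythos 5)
Your proposal is correct and follows essentially the same route as the paper's proof: the paper uses exactly your projection relation, $R = \{(M,m) \mid \forall p \in P: M(p)(\vec{\cname{id}}) = m(p)\}$, and verifies the two weak-transfer clauses with the same case split (firings with $\psi(\vec{v}) \neq \vec{\cname{id}}$ are $\tau$ and leave the projection unchanged; firings with $\psi(\vec{v}) = \vec{\cname{id}}$ move $\vec{\cname{id}}$-tokens exactly as $N$ moves black tokens, and conversely every firing of $N$ is matched by completing a binding with $\psi(\vec{v}) = \vec{\cname{id}}$). The one place you go further---the root pairing and the freshness of the emitted $\vec{\cname{id}}$---is not a divergence but extra care: the paper's statement is deliberately non-rooted (it claims $\approx$, not $\approx^{r}$), and its printed case analysis only ever treats $t \in \transitions_N$, never the $t_E$/$t_C$ firings bound to $\vec{\cname{id}}$, so it silently sidesteps exactly the obstacle you flag.
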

\begin{proof}
Let $N' = \mathcal{W}(N,\vec\lambda,\vec{v})$.
Define $R = \{(M, m) \mid \forall p \in P: M(p)(\vec{\cname{id}}) = m(p) \}$. We need to show that $R$ is a weak bisimulation.

\smallskip\noindent
($\Rightarrow$) Let $M, M'$ and $m$ be such markings that $(M,m) \in R$ and $\fire{(N',M)}{t,\psi}{(N',M')}$, with $t \in \transitions$ and $\psi : \V \to \I$.
By Definition~\ref{def:ec-closure}, $\psi(\vec{v}) = \vec{\cname{u}}$, for some $\vec{\cname{u}}\in\I^{|\vec\lambda|}$.
From the firing rule, we obtain $M'(p) + [\vec{\cname{u}}^{W((p,t))}] = M(p) + [\vec{\cname{u}}^{W((t,p))}]$, for any $p \in P$.
If $\vec{\cname{u}} \neq \vec{\cname {id}}$, then $r(t,\psi) = \tau$, and $(M', m) \in R$.
If $\vec{\cname{u}} = \vec{\cname {id}}$,  there exists such marking $m'$ that $\fire{m}{t}{m'}$ (since $m(p) = M(p)( \vec{\cname {id}})$ and thus $m(p)\geq W((p,t))$) and $m'(p) + W((p,t))= M(p)( \vec{\cname {id}}) + W((t,p))$.
Then, by construction, $m'(p)=M'(p)(\vec{\cname {id}})$ and $(M',m') \in R$.

\smallskip\noindent
($\Leftarrow$)
Let $M$, $m$, and $m'$ be markings that $(M,m) \in R$ and $\fire{(N,m)}{t}{(N,m')}$ with $t \in T$.
We choose binding $\psi$ such that $\psi(\vec v) = \vec{\cname{id}}$.
Then $\rho_\psi(\beta(p,t)) = [\vec{\cname{id}}^{W((p,t))}] \leq M(p)$, since $W((p,t))\leq m(p) = M(p)(\vec{\cname{id}})$.
Thus, a marking $M'$ exists such that  $\fire{(N',M)}{t,\psi}{(N',M')}$.
Then $M'(p) + [\vec{\cname{id}}^{W((p,t))}] = M(p) + [\vec{\cname{id}}^{W((t,p))}]$.
Hence, $M'(p)(id) = m'(p)$ and thus $(M',m') \in R$.
\end{proof}

A natural consequence of this weak bisimulation result is that any EC-closure of a WF-net is identifier sound if the underlying WF-net is sound.

\begin{theorem}
Given a WF-Net $N$, if $N$ is sound, then $\mathcal{W}(N, \vec \lambda, \vec v)$ is identifier sound and live, for any place type $\vec \lambda \in \Lambda^*$ and variable vector $\vec{v} \in \V^*$ with $type(\vec{v}) = \vec \lambda$.
\end{theorem}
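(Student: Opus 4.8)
The plan is to prove the two required properties of $\mathcal{W}(N, \vec\lambda, \vec v)$ separately: identifier soundness, and liveness. The key leverage is Lemma~\ref{lemma:weak-bisim-wrappednet}, which establishes that for any fixed identifier vector $\vec{\cname{id}} \in \I^{|\vec\lambda|}$, the renamed induced transition system $\rho_r(\Gamma_{\mathcal{W}(N,\vec\lambda,\vec v),\emptyset})$ is weakly bisimilar to $\Gamma_{N,[\inp]}$, where $r$ keeps exactly the transitions whose binding sends $\vec v$ to $\vec{\cname{id}}$ and hides all others as $\tau$. Intuitively, this says the life cycle of every single object instance, viewed in isolation, behaves exactly like one case of the original sound WF-net $N$. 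So the whole proof reduces to transporting soundness of $N$ through this bisimulation, once per identifier.

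First I would handle the object type $\lambda'$ appearing in $\vec\lambda$. Since $\type_\Lambda(\mathcal{W}(N,\vec\lambda,\vec v))$ consists exactly of the types occurring in $\vec\lambda$, it suffices to show $\lambda'$-soundness for each such $\lambda'$, i.e.\ proper $\lambda'$-completion and weak $\lambda'$-termination (Definition~\ref{def:soundness}). For \emph{weak $\lambda'$-termination}, I fix an identifier $\cname{id}$ of type $\lambda'$ present in a reachable marking $M$; its vector-components $\vec{\cname{id}}$ carrying $\cname{id}$ correspond, under the bisimulation, to reachable markings of $N$ that are nonempty on the relevant case. Since $N$ is sound, it is weakly terminating, so $[\outp]$ is reachable in $N$ from that marking; pulling this firing sequence back across the weak bisimulation $R$ gives a reachable $M'$ in which the corresponding tokens are consumed, hence $\cname{id} \notin \id{M'}$. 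For \emph{proper $\lambda'$-completion}, the only collector for $\lambda'$ is $t_C$ (the EC-closure adds $t_C$ consuming from $\outp$, and by Definition~\ref{def:ec-closure} no original transition collects $\vec v$); firing $t_C$ on binding $\psi$ with $\psi(\vec v)=\vec{\cname{id}}$ corresponds under $R$ to $N$ reaching $[\outp]$ and then firing the closing transition, and proper completion of the sound $N$ forces $[\outp]$ to be the only marking containing $\outp$, so no residual tokens carrying $\cname{id}$ remain — giving $\cname{id}\notin\id{M'}$ as required.

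For \emph{liveness}, I would fix a reachable marking $M$ and a transition $t$. If $t \in \{t_E, t_C\}$, note $t_E$ is always enableable via a fresh binding (its only input place is the fresh source added in the closure, or it has empty input depending on the construction), and $t_C$ becomes enableable after completing some case, again by weak termination pulled back through the bisimulation. If $t$ is an original transition of $N$, quasi-liveness of the sound net $N$ guarantees a marking of $N$ in which $t$ is enabled is reachable from $[\inp]$; transporting this reachable firing sequence across $R$ with a freshly emitted identifier $\vec{\cname{id}}$ (via $t_E$) yields a reachable marking of $\mathcal{W}(N,\vec\lambda,\vec v)$ and a binding $\psi$ with $\psi(\vec v)=\vec{\cname{id}}$ under which $t$ is enabled.

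The main obstacle I anticipate is bookkeeping across \emph{multiple simultaneous identifiers}: the bisimulation of Lemma~\ref{lemma:weak-bisim-wrappednet} isolates a \emph{single} $\vec{\cname{id}}$, treating the contributions of all other identifiers as $\tau$-steps, but a reachable marking $M$ of the full net is the superposition of many concurrent cases. I must argue that firing sequences relevant to the target identifier can be replayed without being blocked by tokens belonging to other identifiers — this is exactly where the place-wise, per-identifier definition of $R$ and the injectivity of bindings are essential, since tokens of distinct identifier vectors never compete for the same place-entry. Making the interleaving-independence precise, so that the single-identifier weak bisimulation genuinely suffices to conclude the global soundness and liveness statements, is the delicate step; the rest follows by routine transport of the three soundness conditions of $N$ through $R$.
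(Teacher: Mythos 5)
Your proposal is correct and takes essentially the same route as the paper's proof: transport proper completion, weak termination, and quasi-liveness of the sound net $N$ through Lemma~\ref{lemma:weak-bisim-wrappednet} on a per-identifier basis, observing that $t_C$ is the only collector and that $t_E$ (having an empty preset) is always enabled with a fresh binding, with the per-identifier bisimulation relation handling the interleaving of concurrent cases exactly as you anticipate. One minor gloss: in your weak-termination step, pulling the firing sequence to $[\outp]$ back through the bisimulation leaves the identifier's token sitting in $\outp$ (so the identifier is still in the marking), and one must still fire $t_C$ with the matching binding $\psi(\vec v)=\vec{\cname{id}}$ to remove it --- a step the paper's proof performs explicitly and which you state only in your proper-completion paragraph.
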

\begin{proof}
Let $N' = \mathcal{W}(N, \vec\lambda, \vec v) = (\places, \transitions, \flow, \alpha, \beta)$.
By definition of $\mathcal{W}$, $\delvar{t} = \emptyset$ for any transition $t \in \transitions\setminus\set{t_C}$.
Hence, only transition $t_C$ can remove identifiers, and thus, by construction, $\W$ is properly type completing on all $\lambda \in \vec\lambda$.

\medskip
Next, we need to show that $N'$ is weakly type terminating for all types $\lambda \in \vec\lambda$.
Let $M \in \reachable{N'}{\emptyset}$, with firing sequence $\eta \in (T \times (\V\to\I))^*$, i.e., $\fire{(N',\emptyset)}{\eta}{(N',m)}$.
Let $\vec{\cname{id}} \in \colset(p)$ such that $M(p)(\vec{\cname{id}}) > 0$ for some $p \in \places$.
We then construct a sequence $\omega$ by stripping the bindings from $\eta$ s.t. it contains only transitions of $T$. 
Using Lemma~\ref{lemma:weak-bisim-wrappednet}, we obtain a marking $m \in \reachable{N}{[in]}$ such that $\fire{[\inp]}{\psi}{m}$ and $m(p) = M(p)(\vec{\cname{id}})$.
Since $N$ is sound, there exists a firing sequence $\omega'$ such that $\fire{m}{\omega'}{[\outp]}$.
Again by Lemma~\ref{lemma:weak-bisim-wrappednet}, a firing sequence $\eta'$ exists such that $\fire{M}{\eta'}{M'}$ and $(M', [\outp]) \in \restr{\reachable{\W}{\emptyset}}{\vec{\cname{id}}}$, where $\restr{\reachable{\W}{\emptyset}}{\vec{\cname{id}}}$ is the set of all reachable markings containing $\vec{\cname{id}}$.
Hence, if $M'(p)(\vec{\cname{id}}) > 0$, then $p = \outp$. Thus, transition $t_C$ is enabled with some binding $\psi$ such that $\psi(\vec{v}) = \vec{\cname{id}}$, and a marking $M''$ exists such that $\fire{M'}{t_c,\psi}{M''}$, which removes all identifiers in $\vec{\cname{id}}$ from $M'$.
Hence, $N'$ is identifier sound.

\medskip
As transition $t_e$ is always enabled and $N$ is quasi live, $\mathcal{W}(N,\vec{\lambda},\vec{v})$ is live.
\end{proof}

\subsection{Typed Jackson nets}\label{sec:typedJN}
A well-studied class of processes that guarantee soundness are block-structured nets.
Examples include Process Trees~\cite{Leemans13}, Refined Process Structure Trees~\cite{Weidlich11} and Jackson Nets~\cite{HeeHHPT09}.
Each of the techniques have a set of rules in common from which a class of nets can be constructed that guarantees properties like soundness.
In this section, we introduce Typed Jackson Nets (t-JNs), extending the ideas of Jackson Nets~\cite{HeeHHPT09,HeeSW13} to \tpnids, that guarantee both identifier soundness and liveness.
The six reduction rules presented by Murata in~\cite{Murata89} form the basis of this class of nets.
The rules for t-JNs are depicted in Figure~\ref{fig:constructionrulestJN}.

\begin{figure}[!ht]
\vspace*{-5mm}
  \centering
	\begin{subfigure}{.3\textwidth}
		\centering
		\includegraphics[width=\textwidth]{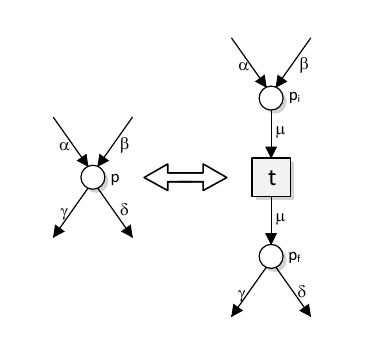}\vspace*{-1mm}
        \caption{Place Expansion}\label{fig:placeExpansion}
	\end{subfigure}
	\begin{subfigure}{.3\textwidth}
		\centering
		\includegraphics[width=\textwidth]{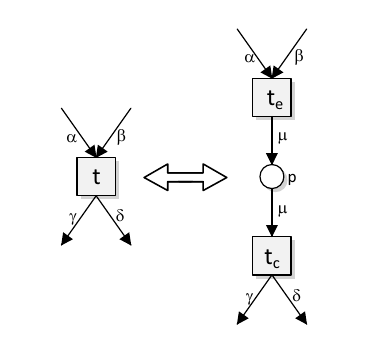}\vspace*{-1mm}
        \caption{Transition Expansion}\label{fig:transitionExpansion}
	\end{subfigure}
	\begin{subfigure}{.3\textwidth}
		\centering
		\includegraphics[width=\textwidth]{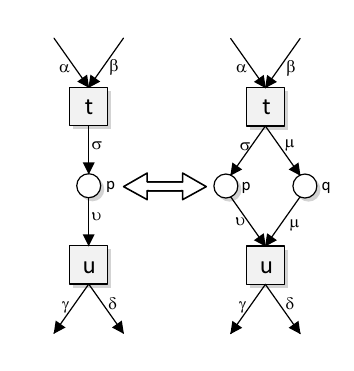}\vspace*{-1mm}
        \caption{Place Duplication}\label{fig:placeDuplication}
	\end{subfigure}
\vspace*{-5mm}
	\begin{subfigure}{.3\textwidth}
		\centering
		\includegraphics[width=\textwidth]{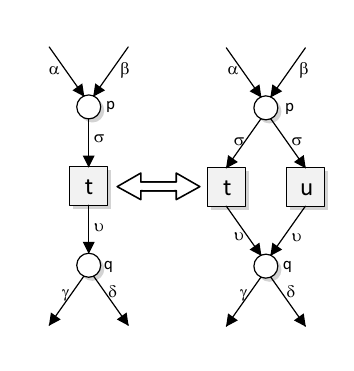}\vspace*{-3mm}
        \caption{Transition Duplication}\label{fig:transitionDuplication}
	\end{subfigure}
	\begin{subfigure}{.3\textwidth}
		\centering
		\includegraphics[width=\textwidth]{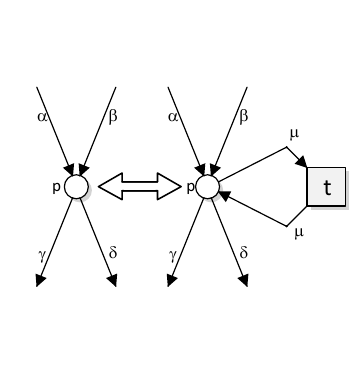}\vspace*{-3mm}
        \caption{Self-loop Addition}\label{fig:selfLoopAddition}
	\end{subfigure}
	\begin{subfigure}{.3\textwidth}
		\centering
		\includegraphics[width=\textwidth]{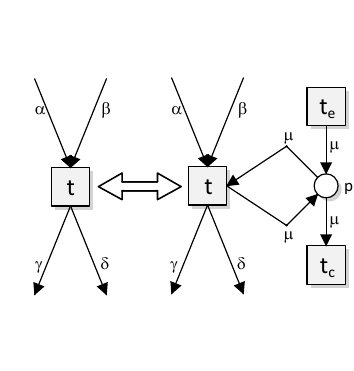}\vspace*{-3mm}
        \caption{Identifier Creation}\label{fig:identifierCreation}
	\end{subfigure}\vspace*{1mm}
\caption{Construction rules of the typed Jackson Nets.}\label{fig:constructionrulestJN}
\end{figure}

\newcounter{tjncount}
\newcommand{\ruletitle}[1]{\medskip	\noindent\textbf{\fbox{Rule \addtocounter{tjncount}{1}\thetjncount:} #1}}

\subsubsection{Place Expansion}
The first rule is based on \emph{fusion of a series of places}.
As shown in Figure~\ref{fig:placeExpansion}, a single place $p$ is replaced by two places $p_i$ and $p_f$ that are connected via transition $t$.
All transitions that originally produced in $p$, produce in $p_i$ in the place expansion, and similarly, the transitions that consumed from place $p$, now consume from place $p_f$.
In fact, transition $t$ can be seen as a transfer transition: it needs to move tokens from place $p_i$ to place $p_f$, before the original process can continue.
This is also reflected in the labeling of the places: both places have the same place type, and the input and output arc of transition $t$ are inscribed with the same variable vector $[ \,\vec\mu\,]$ that matches the type of place $p$.

\begin{definition}[Place expansion]
\label{def:place-expansion}
Let $(N,m)$ be a marked \tpnid with $N = (\places, \transitions, \flow, \alpha, \beta)$, $p \in \places$ be a place and $\vec\mu \in \V^*$ be a variable vector s.t. $\type_{\V}(\vec\mu) = \alpha(p)$.
The \emph{place expanded \tpnid} is defined by the relation $R_{p,\vec\mu}(N,m) = ((\places', \transitions',\flow', \alpha',\beta'),m')$, where:
\begin{itemize}
\itemsep=0.9pt
\item $\places' = (\places \setminus \{p\}) \cup \{p_i, p_f\}$ with $p_i, p_f \not\in \places$; and
$\transitions' = \transitions \cup \{t\}$ with $t \not\in \transitions$;
\item
$\flow' = (\flow \setminus ((\{p\} \times \post{p})\cup (\pre{p} \times \{p\}) )
\cup
(\pre{p} \times \{p_i\})
\cup \{(p_i,t),(t,p_f)\} \cup
(\{p_f\} \times \post{p})$;

\item 
$\alpha'(q) = \alpha(p)$, if $q \in \{p_i, p_f\}$, and $\alpha'(q) = \alpha(q)$, otherwise.
\item 	$\beta'(f)\! =\! [\vec\mu]$, if $f \in \{(p_i,t),(t,p_f)\}$, $\beta'((u,p_i))\!=\!\beta((u,p))$, if  $u \in \pre{p}$, $\beta'((p_f,u))\!=\!\beta((p,u))$, if $u \in \post{p}$, and $\beta'(f) = \beta(f)$, otherwise.
	\item $m'(q) = m(q)$ for all $q \in P\setminus\{p\}$, $m'(p_f) = 0$, and $m'(p_i) = m(p)$.
\end{itemize}
\end{definition}

Inscription $\vec\mu$ cannot alter the vector identifier on the tokens, as the type of $\vec\mu$ should correspond to both place types $\alpha(p)$ and $\alpha(q)$.
Hence, the transition is enabled with the same bindings as any other transition that consumes a token from place $p$, modulo variable renaming.
As such, transition $t$ only ``transfers'' tokens from place $p_i$ to place $p_f$.
Hence, as the next lemmas shows, place expansion yields a weakly bisimilar \tpnid and preserves identifier soundness.

\begin{lemma}\label{lm:bisimplaceexpansion}
Let  $(N,m_0)$ be a marked \tpnid with $N = (\places, \transitions, \flow, \alpha, \beta)$, $p\in \places$ be a place to expand and  $\vec\mu \in \V^*$ be a variable vector s.t. $\type_{\V}(\vec\mu) = \alpha(p)$. Then $\tsys{N,m_0} {\approx^{r}} \hide{\{t\}}(\tsys{R_{p,\vec\mu}(N,m_0)})$, with transition $t$ added by $R_{p,\vec\mu}$.
\end{lemma}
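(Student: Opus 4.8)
The plan is to exhibit an explicit relation between the states of $\tsys{N,m_0}$ and the states of the hidden expanded system, and show it is a rooted weak bisimulation. The natural candidate relates a marking $M$ of the original net to a marking $M'$ of $R_{p,\vec\mu}(N,m_0)$ whenever $M'$ agrees with $M$ on all untouched places and the tokens of the old place $p$ are split between $p_i$ and $p_f$ in a way that sums back to $M(p)$; concretely I would set $R = \{(M,M') \mid M'(q)=M(q) \text{ for } q\in\places\setminus\{p\},\ M(p)=M'(p_i)+M'(p_f)\}$. Rootedness holds because the initial markings satisfy $m_0'(p_i)=m_0(p)$ and $m_0'(p_f)=\emptyset$, so $(m_0,m_0')\in R$.

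First I would handle the forward direction: any firing $\fire{(N,M)}{t',\psi}{(N,M_1)}$ for $t'\in\transitions$ must be matched by the expanded net. The only subtlety is a transition $t'\in\post{p}$ that consumes from $p$: in the expanded net it now consumes from $p_f$, so $M'$ may not have the required tokens in $p_f$ even though $M(p)=M'(p_i)+M'(p_f)$ is large enough. Here I would first fire the hidden transfer transition $t$ (some number of times, under the appropriate bindings, each transfer being a $\tau$-step after hiding) to move the needed vectors from $p_i$ into $p_f$, and then fire $t'$; since $t$ only relocates tokens without changing their identifier vectors (as inscription $\vec\mu$ is type-matched and acts as the identity on carried identifiers), this produces a weak transition $\xdasharrow{~(t',\psi)~}$ landing in some $M_1'$ with $(M_1,M_1')\in R$. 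Conversely, for the backward direction, every non-$\tau$ move of the hidden net is a firing of some original $t'$, which is directly mimicked in $N$ (its presets in $\places\setminus\{p\}$ are unchanged, and for $t'\in\post{p}$ the tokens consumed from $p_f$ are accounted for in $M(p)=M'(p_i)+M'(p_f)$); each $\tau$-move is a firing of $t$, which leaves $M'(p_i)+M'(p_f)$ and every other place invariant, hence stays in $R$ with the same $M$, exactly as the weak-simulation clause for $a=\tau$ permits.

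I expect the main obstacle to be the forward direction's matching of transitions in $\post{p}$, because the expanded net may require several preliminary $\tau$-firings of $t$ before the original step can be replayed, and one must argue that enough tokens of the right vectors can always be transferred. This reduces to checking that $t$ is enabled under a binding mapping $\vec\mu$ to exactly the identifier vector demanded by $\psi$ on the $p$-arc of $t'$, which holds precisely because $M(p)=M'(p_i)+M'(p_f)$ guarantees the shortfall in $p_f$ is present in $p_i$. The remaining verifications—that bindings transfer correctly through the arc inscriptions, that emitting and collecting variables of $t'$ are unaffected since $p_i,p_f$ carry the same type as $p$, and that the relation is preserved in each case—are routine applications of the firing rule and Definition~\ref{def:place-expansion}.
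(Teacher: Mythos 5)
Your proposal is correct and follows essentially the same route as the paper's proof: the identical relation (agreement off $p$, with $M(p)=M'(p_i)+M'(p_f)$), the same rootedness observation, the same forward-direction strategy of firing the hidden transfer transition $t$ under a binding matching $\vec\mu$ to the identifier vector demanded by $\psi$ until the consumer in $\post{p}$ becomes enabled, and the same backward-direction treatment of $\tau$-moves via the invariance of the sum $M'(p_i)+M'(p_f)$. No meaningful differences to report.
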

\begin{proof}
Let $(N',m_0') = R_{p,\mu}(N, m_0)$.
We define $Q \subseteq \mathbb{M}(N) \times \mathbb{M}(N')$ such that $(m,m') \in Q$ iff $m(q) = m'(q)$ for all places $q \in \places\setminus\{p\}$ and $m'(p_i) + m'(p_f) = m(p)$. Then $(m_0,m_0') \in Q$, hence the relation is rooted.

\smallskip \noindent ($\Rightarrow$)
Let $(m,m') \in Q$ and $\fire{(N,m)}{u,\psi}{(N,\bar{m})}$.
We need to show that there exists marking $\bar{m}'$ such that  $\smash{m' \xdasharrow{~(u,\psi)~} \bar{m}'}$ and $(\bar{m},\bar{m}') \in Q$.

\smallskip
Suppose $p \not\in\pre{u}$. Then $m'(q) = m(q)$ and  $m(q)\geq \rho_\psi(\beta((p,u)))$ (note that $\rho_\psi(\beta((p,u))) = \rho_\psi(\beta'((p,u)))$).
By the firing rule, a marking $\bar{m}'$ exists with $\fire{(N,m')}{u,\psi}{(N',\bar{m}')}$, $\bar{m}(q) = \bar{m}'(q)$ for all $q \in \places'$.
Thus, $(\bar{m},\bar{m}')\in Q$.
Suppose $p \in\pre{u}$. Then $\rho_{\psi}(\beta((p_f,u))) \leq m(p) = m'(p_i)  + m'(p_f)$.
If $\rho_\psi(\beta(p_f, u))) \leq m'(p_f)$, then transition $u$ is enabled, and a marking $\bar{m}'$ exists with $\fire{(N,m')}{u,\psi}{(N,\bar{m}')}$ and $(\bar{m},\bar{m}') \in Q$.

\smallskip
Otherwise, $\rho_\psi(\beta(p_f, u))) \leq m'(p_i)$.
Construct a binding $\psi'$ by letting $\psi'(\mu(i)) = \psi(\beta(p,u)(i))$, for all $1 \leq i \leq |\mu|$.
Then, $\rho_{\psi'}(\mu) = \rho_{\psi}(\beta(p,u))$, and transition $t$ is enabled with binding $\psi'$.
Hence, a marking $m''$ exists with $\fire{(N,m')}{t,\psi'}{(N,m'')}$ and $\rho_{\psi}(\beta((p',u))) \leq m''(p')$.
Then $(m,m'') \in Q$ and $t$ is labeled $\tau$ in $\hide{\{t\}}(R_{(p,\vec\mu)}(N))$.
Now, either transition $u$ is enabled, or transition $t$ is again enabled with binding $\psi'$.
In all cases, $\smash{m' \xdasharrow{~(t,\psi)~} \bar{m}'}$ and $(m', \bar{m}') \in Q$.

\smallskip \noindent
($\Leftarrow$)
Let $(m,m') \in Q$ and $\fire{(N',m')}{u,\psi}{(N',\bar{m}')}$.
We need to show that either a $\bar{m}$ exists such that $\fire{(N,m)}{u,\psi}{(N,\bar{m})}$ and $(\bar{m},\bar{m}')\in Q$ or
$u = \tau$ and $(m, \bar{m}') \in Q$.

\smallskip
Suppose $u = t$, i.e., $u$ is labeled $\tau$ in $\hide{\{t\}}(R_{(p,\vec\mu)}(N))$.
Then, $\pre{u} = p_i$ and $\post{u} = p_f$.
By the firing rule, $\bar{m}'(p_i) + \bar{m}'(p_f) = m(p_i) + m(p_f) = m(p)$. Hence $(m, \bar{m}') \in Q$.

\smallskip
If $u \neq t$, we need to show that there exists marking $\bar{m}$ such that  $\smash{m \xdasharrow{~(u,\psi)~} \bar{m}}$ and $(\bar{m},\bar{m}') \in Q$.
Let $q \in \pre{u}$.
If $q \neq p$, then $m(q) \leq \rho_{\psi}(\beta(p,u))$.
If $q = p$, then $m(q) \geq m(p_f)$ and thus, $m(q) \leq \rho_{\psi}(\beta(p,u))$.
Hence, transition $u$ is enabled in $m$ and a marking $\bar{m}$ exists such that $\fire{(N,m)}{t,\psi'}{(N,\bar{m})}$.
By the firing rule, we have
$\bar{m}(p) = m(p) - \rho_{\psi}(\beta(p,u)) + \rho_{\psi}(\beta(u,p)) = m'(p_i) + m'(p_f) - \rho_{\psi}(\beta(p_i,u)) - \rho_{\psi}(\beta(p_f,u)) + \rho_{\psi}(\beta(u,p_i)) + \rho_{\psi}(\beta(u,p_f)) = \bar{m}'(p_i) + \bar{m}'(p_f)$, since $\rho_{\psi}(\beta(p_i,u)) = \rho_{\psi}(\beta(u,p_f)= \emptyset$.
Hence, $(\bar{m}, \bar{m}')\in Q$, which proves the statement.
\end{proof}

\begin{lemma}
Let  $(N,m_0)$ be a marked \tpnid with $N = (\places, \transitions, \flow, \alpha, \beta)$, $p\in \places$ be a place to expand and  $\vec\mu \in \V^*$ be a variable vector s.t. $\type_{\V}(\vec\mu) = \alpha(p)$.
Then, $R_{p,\vec\mu}(N,m_0)$ is identifier sound iff $(N,m_0)$ is identifier sound.
\end{lemma}
\begin{proof}
Let $(N',m'_0) = R_{p,\vec\mu}(N,m_0)$.
Define $Q \subseteq \mathbb{M}(N) \times \mathbb{M}(N')$ such that $(m,m') \in Q$ iff $m(q) = m'(q)$ for all places $q \in \places\setminus\{p\}$ and $m'(p_i) + m'(p_f) = m(p)$, i.e., $Q$ is the bisimulation relation as defined in the previous lemma.
Then, the statement is a direct consequence of $\id{m} = \id{m'}$ for all $(m,m') \in Q$ and the bisimulation.
\end{proof}

\begin{lemma}\label{lm:placeexpansionpreservesidentifiersoundness}
	Let  $(N,m_0)$ be a marked \tpnid with $N = (\places, \transitions, \flow, \alpha, \beta)$, $p\in \places$ be a place to expand and  $\vec\mu \in \V^*$ be a variable vector.
	Then $(N,m_0)$ is identifier sound iff $R_{p,\vec{\mu}}(N,m_0)$ is identifier sound.
\end{lemma}
\begin{proof}
Let $(N',m'_0) = R_{p,\vec\mu}(N,m_0)$, and let $Q \subseteq \mathbb{M}(N) \times \mathbb{M}(N')$ be the bisimulation relation of Lm.~\ref{lm:bisimplaceexpansion}.
Let $t$ be the transition added by the place extension rule.
Then $t \not\in C_N(\lambda)$ for any type $\lambda \in \type_\Lambda(N)$.
As $\id{m_1} = \id{m_2}$ for all $(m_1,m_2) \in Q$, the statement directly follows from Thm.~\ref{thm:weakbisimpreservesidentifiersoundness}.
\end{proof}

\subsubsection{Transition Expansion}
The second rule is transition expansion, which corresponds to Murata's \emph{fusion of series transitions}.
As shown in Fig.~\ref{fig:transitionExpansion}, transition $t$ is divided into two transitions, $t_e$ that consumes the tokens, and a second transition $t_c$ that produces the tokens.
The two transitions are connected with a single, fresh place $p$.
Place $p$ should ensure that all variables consumed by the original transition $t$, are passed to transition $t_c$, to ensure that $t_c$ can produce the same tokens as transition $t$ in the original net.
In other words, the type of each input place of $t$ is included in the type of the newly added place $p$.
Moreover, transition $t_e$ is also allowed to emit new, fresh identifiers that, however, will be eventually consumed by $t_c$.

\begin{definition}[Transition expansion]
Let $(N,m)$ be a marked \tpnid with $N = (\places, \transitions, \flow, \alpha, \beta)$, let $t \in \transitions$, and let $\vec{\lambda} \in \Lambda^*$ and $\vec{\mu} \in (\V\setminus \newvar{t})^*$ such that $\type_\V(x) \in \vec\lambda$ and $x\in\vec{\mu}$, for all $x\in \invar{t}$, and $\type_{\V}(\vec{\mu}) = \vec{\lambda}$.
The \emph{transition expanded \tpnid} is defined by $R_{t,\vec{\lambda},\vec{\mu}}(N,m) = ((\places', \transitions',\flow', \alpha',\beta'),m)$, where:
\begin{itemize}
\itemsep=0.85pt
	\item $\places' = \places \cup \{p\}$ with $p \not\in \places$; and
	$\transitions' = (\transitions \setminus \{t\}) \cup \{t_e, t_c\}$ with $t_e, t_c \not\in \transitions$;
	\item
	$\flow' = (\flow \setminus ((\pre{t} \times \{t\}) \cup (\{t\}\times\post{t}) )) \cup
	(\pre{t} \times \{t_e\})
	\cup \{(t_e,p),(p,t_c)\}
	\cup
	(\{t_c\} \times \post{t})$;
	\item
	$\alpha'(p) = \vec{\lambda}$ and $\alpha'(q) = \alpha(q)$ for all $q \in \places$;
	\item
	$\beta'(f) = [\,\vec{\mu}\,]$ if $f \in \{(t_e,p),(p,t_c)\}$, $\beta'((q,t_e))=\beta((q,t))$ for $q \in \pre{t}$, $\beta'((t_c,q))=\beta((t,q))$ for $q \in \post{t}$, and $
	\beta'(f) = \beta(f)$ otherwise.
\end{itemize}
\end{definition}

Transition $t_e$ is allowed to introduce new variables, but key is that inscription $\vec{\mu}$ contains all input variables of transition $t$. Consequently, $\vec{\mu}$ encodes the binding of transition $t$.
We use this to prove weak bisimulation between a \tpnid and it transition expanded net.
The idea behind the simulation relation $Q$ is that the firing of $t_e$ is postponed until $t_c$ fires.
In other words, $Q$ encodes that tokens remain in place $q$ until transition $t_c$ fires.

\begin{lemma}\label{lm:bisimtransexpansion}
Given marked \tpnid $(N,m_0)$ with $N = (\places, \transitions, \flow, \alpha, \beta)$, transition $t\in \transitions$, $\vec\lambda\in\Lambda^*$ and $\mu \in \V^*$. Let $t_e, t_c$ be the transitions added by the expansion.\\
Then $\tsys{N,m_0} \approx^{r} \rho_{r}(\tsys{R_{t,\vec\lambda,\mu}(N,m_0)})$ with $r = \{(t_e, \tau), (t_c,t) \}$.
\end{lemma}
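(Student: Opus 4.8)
The plan is to follow the blueprint of the place-expansion argument (Lemma~\ref{lm:bisimplaceexpansion}) and exhibit an explicit relation $Q\subseteq\mathbb{M}(N)\times\mathbb{M}(N')$, where $N'=R_{t,\vec\lambda,\vec\mu}(N,m_0)$, that is a rooted weak bisimulation after applying $r$. The guiding intuition, echoing the $\approx^r$ result above, is that firing $t_e$ is invisible ($r(t_e)=\tau$) and only the later firing of $t_c$ counts as the original $t$ (since $r(t_c)=t$); hence a token parked in the fresh place $p$ of $N'$ should be read, on the $N$-side, as if it were still sitting in the input places $\pre{t}$ and waiting for $t$ to fire. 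Concretely, I would set $(m,M')\in Q$ iff $m(q)=M'(q)$ for every $q\in\places\setminus\pre{t}$ and, for every $q\in\pre{t}$, $m(q)=M'(q)+\sum_{\vec u\in\supp{M'(p)}}M'(p)(\vec u)\cdot\rho_{\psi_{\vec u}}(\beta((q,t)))$, where $\psi_{\vec u}$ is the binding obtained by reading off from a token $\vec u\in\colset(p)$ the values of the input variables of $t$. This is well defined because every variable of $\invar{t}$ occurs in $\vec\mu$ and $\beta((q,t))$ mentions only variables of $\invar{t}$, so the fresh identifiers that $t_e$ may have emitted into the remaining positions of $\vec u$ are simply discarded. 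Since the initial marking of $N'$ agrees with $m_0$ on the original places and leaves $p$ empty, $(m_0,m_0')\in Q$ and the relation is rooted.

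For the direction $\rho_r(\tsys{N'})\preccurlyeq_{Q^{-1}}\tsys{N,m_0}$ I would distinguish three kinds of moves of $N'$. A firing of $t_e$ is a $\tau$-move: it removes $\rho_\psi(\beta((q,t)))$ from each $q\in\pre t$ and deposits one token in $p$, leaving the $\pre t$-sum in $Q$ unchanged, so $N$ stays put and the pair remains in $Q$. A firing of $t_c$ consuming $\vec u$ from $p$ is relabelled to $t$; I answer it by firing $t$ in $N$ under the binding $\hat\psi$ whose input part is $\psi_{\vec u}$ and whose values on $\newvar{t}$ copy the fresh identifiers produced by $t_c$. The two checks here are that $\hat\psi$ enables $t$ in $m$ (immediate, since the $\vec u$-summand is already part of $m(q)$) and that the copied identifiers are fresh with respect to $\id{m}$; the latter follows from injectivity of the $t_c$-binding together with $\id{m}\subseteq\id{M'}\cup\rng{\psi_{\vec u}}$, because the emitted identifiers are fresh for $M'$ and, by injectivity, distinct from the entries of $\vec u$. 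Finally, any other transition $u$ fired by $N'$ uses only tokens genuinely present in $M'$; as $Q$ only ever adds tokens on $\pre t$, all these tokens are present in $m$ as well, so $N$ fires $u$ with the same binding and lands back in $Q$.

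The converse direction $\tsys{N,m_0}\preccurlyeq_Q\rho_r(\tsys{N'})$ is the delicate one and is where I expect the real work to lie. A firing of the original transition $t$ in $N$ must be matched by a weak move $M'\xdasharrow{~t~}\bar M'$, that is, by a (possibly empty) block of $\tau$-steps followed by one $t_c$-firing: if the tokens $t$ consumes are still present in $M'$, one first fires $t_e$ to move them into $p$ and then $t_c$; if they have instead already been ``pre-parked'' in $p$ — so that they contribute to the $Q$-sum rather than appearing in $M'$ itself — one fires $t_c$ directly on the matching token $\vec u$. The subtle point is to argue that the tokens $t$ consumes in $m$ always correspond, modulo the fungibility of tokens carrying the same identifier vector, either to tokens available in $M'(\pre t)$ or to a parked token of $p$, and then to verify that the resulting markings are again $Q$-related. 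The genuine obstacle, however, is the interaction between $t$ and any other transition competing for the same input places: once $t_e$ has committed tokens to $p$, these are no longer available in $N'$, so the proof must ensure — working throughout inside $\reachable{N'}{m_0'}$ and exploiting the structure created by the expansion — that every enabling in $m$ of a transition $u\neq t$ is matched by an enabling of $u$ in $M'$ that uses only genuinely present tokens. Establishing this enabling-preservation property, while maintaining the freshness and injectivity invariants on the bindings, is the crux of the argument; the remaining bookkeeping mirrors the calculations already carried out for Lemma~\ref{lm:bisimplaceexpansion}.
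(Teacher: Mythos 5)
Your relation $Q$ is exactly the one the paper uses (tokens parked in the fresh place $p$ are read back, via $\vec\mu$, as tokens still lying in $\pre{t}$ and awaiting the firing of $t$), and your first direction --- matching moves of $R_{t,\vec\lambda,\vec\mu}(N,m_0)$ by moves of $N$, with the case split on $t_e$, $t_c$ and the remaining transitions --- reproduces the direction that the paper proves in detail (its ($\Leftarrow$) case), including the freshness bookkeeping for the binding of $t_c$. Up to that point the two arguments coincide.

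The gap is the converse direction, $\tsys{N,m_0}\preccurlyeq_Q\rho_{r}(\tsys{R_{t,\vec\lambda,\vec\mu}(N,m_0)})$, which you explicitly leave open: you name an ``enabling-preservation property'' as the crux and assert that the remaining bookkeeping mirrors Lemma~\ref{lm:bisimplaceexpansion}, but you give no argument for that property. This is not deferrable bookkeeping; it is precisely the step that fails. The analogy with place expansion breaks here: in Lemma~\ref{lm:bisimplaceexpansion} a competitor blocked by the expansion can always be re-enabled by firing the newly added transition, which is \emph{silent}, so a weak move closes the case; after transition expansion, tokens committed to $p$ by $t_e$ can only leave $p$ through $t_c$, whose label $t$ is \emph{visible}, so no silent move of the expanded net can restore the enabledness of a competitor $u$ with $\pre{u}\cap\pre{t}\neq\emptyset$. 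Concretely, take $N$ with two transitions $t,u$ where $\pre{t}=\pre{u}=\{q_1\}$, $\post{t}=\post{u}=\{q_2\}$, and one token in $q_1$. Rootedness and the matching of the $\tau$-move $t_e$ force any candidate relation to contain the pair consisting of the marking $[q_1]$ of $N$ and the marking of the expanded net in which the token sits in $p$; from the former, $N$ can fire $(u,\psi)$, but from the latter only $t_c$ (visible as $t$) is enabled, so no weak $(u,\psi)$-move exists. This is the familiar $a+b\not\approx\tau.a+b$ phenomenon, and it shows that the property you defer cannot be established in general --- it would require an extra hypothesis such as $t$ being the only transition consuming from $\pre{t}$. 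For what it is worth, the paper's own proof dispatches this direction with the single sentence ``follows directly from the firing rule and the construction of $\mu$'', so your instinct that the real work lies there points at a soft spot of the published argument as well; but identifying the obstacle is not overcoming it, and as written your proposal does not prove the lemma.
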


\begin{proof}
Let $(N',m_0') = R_{t,\lambda,\mu}(N,m_0)$. Then $m_0' = m_0$.
Define relation $Q \subseteq \mathbb{M}(N) \times \mathbb{M}(N')$ such that $(m,m') \in Q$ iff $m(q) = m'(q)$ for all places $q \in \places\setminus \pre{t}$ and $m(q) = m'(q) + \sum_{b \in \supp{M'(p)}} M'(p)(b)\cdot \rho_{\mu(b)}\beta((q,t))$, where $\mu(b)$ is a shorthand for the binding $\psi : \V\to\I$ with $\psi(x) = b(i)$ iff $\mu(i) = x$ for all $1 \leq i \leq |\mu|$. Then $(m_0,m_0') \in Q$.

\vspace*{1.8mm}\noindent ($\Rightarrow$)
Follows directly from the firing rule, and the construction of $\mu$.

\vspace*{1.8mm}\noindent($\Leftarrow$)
Let $(m,m') \in Q$ and $\fire{(N',m')}{u,\psi}{(N',\bar{m}')}$. We need to show a marking $\bar{m}$ exists such that $m \xdasharrow{~(t,\psi)~} \bar{m}$ and $(\bar{m},\bar{m}') \in Q$.
If $t_e \neq u \neq t_c$, the statement holds by definition of the firing rule.
Suppose $u = t_e$, i.e., $r(u) = \tau$. Hence, we need to show that $(m, \bar{m}') \in Q$. Let $q \in \pre{t}$.
Since $(m,m') \in Q$, we have
$m(q) = m'(q) + \sum_{b \in \supp{m'(p)}} m'(p)(b)\cdot \rho_{\mu(b)}\beta((q,t))$.
By the firing rule, we have $\bar{m}'(p) = m'(p) + [\rho_\psi(\mu)]$ and $m'(q) = \bar{m}'(q) + \rho_{\psi}(\beta((q,t)))$.
By construction, $\rho_{\psi}$ and $\rho_{\mu([\rho_\psi(\mu)])}$ are identical functions.
Rewriting gives $m(q) = \bar{m}'(q) + \sum_{b \in \supp{\bar{m}'(p)}} m'(p)(b)\cdot \rho_{\mu(b)}\beta((q,t))$, and thus $(m,\bar{m}') \in Q$.

\smallskip
Suppose $u = t_c$, i.e., $r(u) = t$ and $[\rho_\psi(\mu)]\leq m'(p)$.
Let $q \in \pre{t}$. Then
$m(q) = m'(q) + \sum_{b \in \supp{m'(p)}} m'(p)(b)\cdot \rho_{\mu(b)}\beta((q,t))$. Since $\bar{m}'(p) + [\rho_{\psi}(\mu)] = m'(p)$ and $\rho_{\psi}(\beta((q,u))) = \rho_{\mu([\rho_\psi(\mu)])}(\beta((q,u)))$, we obtain
$m(q) = m'(q) + \left(\sum_{b \in \supp{\bar{m}'(p)}} \bar{m}'(p)(b)\cdot \rho_{\mu(b)}\beta((q,t))\right) +\linebreak
\rho_{\psi}\beta((q,t))$.
Hence, a marking $\bar{m}$ exists such that $\fire{(N,m)}{t,\psi}{(N,\bar{m})}$ and $(\bar{m},\bar{m}') \in Q$.
\end{proof}

\begin{lemma}\label{lm:transitionexpansionpreservesidentifiersoundness}
Given marked \tpnid $(N,m_0)$ with $N = (\places, \transitions, \flow, \alpha, \beta)$, transition $t\in \transitions$, $\vec\lambda\in\Lambda^*$ and $\mu \in \V^*$.
Then $(N,m_0)$ is identifier sound iff $R_{t,\vec\lambda,\mu}(N,m_0)$ is identifier sound.
\end{lemma}

\begin{proof}
Define $(N',m'_0) = R_{t,\vec\lambda,\mu}(N,m_0)$ with $N' = (\places',\transitions',\flow',\alpha',\beta')$, let $t_e, t_c \in\transitions' \setminus \transitions$ be the two added transitions and let $q \in \places'\setminus\places$ be the added place.
Let $Q \subseteq \mathbb{M}(N) \times \mathbb{M}(N')$ be the weak bisimulation relation as defined in Lm.~\ref{lm:bisimtransexpansion}.

\smallskip
Suppose $\lambda \in \Lambda$.
If $\lambda \in \type_\Lambda(N)$, then $C_N(\lambda) = C_{N'}(\lambda)$ by definition of the transition expansion. As $I(\lambda) \cap \id{m_1} = I(\lambda) \cap \id{m_2}$ for all $(m_1, m_2) \in Q$, the statement directly follows from Thm.~\ref{thm:weakbisimpreservesidentifiersoundness}.

\smallskip
Otherwise, if $\lambda \in \type_\Lambda(N')\setminus \type_\Lambda(N)$,
then, for all places $p\in P'$, having $\lambda \in \alpha(p)$ implies that $p = q$, i.e., $q$ is the only place that contains tokens carrying identifiers of type $\lambda$, and $t_c \in C_{N'}(\lambda)$.
Suppose that there exist a marking $m' \in \reachable{N'}{m'_0}$, firing sequence $\eta$, vector $\vec{\cname{id}} \in \I^*$ and $\cname{id} \in I(\lambda)$
such that $\fire{(N',m'_0)}{\eta}{(N',m')}$, $m(q)(\vec{\cname{id}}) > 0$ and $\cname{id}\in\vec{\cname{id}}$.
Thus, $\cname{id} \in \id{m'}$.
Then a binding $\psi : \V \rightarrow \I$ exists such that $\cname{id} \in \rng{\psi}$ and $(t_e,\psi) \in \eta$.
As $t_e$ is an emitting transition for $\lambda$,
we have that $|\beta'((t_e,\psi))| = 1$, and thus $m'(q)(\vec{\cname{id}}) = 1$.
By the firing rule and the construction of $N'$, we have that $m'(q)(\psi(\beta'((q,t_c)))) > 0$, i.e., $\enabled{(N',m')}{t_c,\psi}$.
Hence, a marking $m'' \in \mathbb{M}(N')$ exists with $\fire{(N',m')}{t_c,\psi}{(N',m'')}$ such that  $\cname{id} \not\in \id{m''}$.
Hence, $N'$ is weakly $\lambda$-terminating.
As $\post{q} = \{t_c\}$, transition $t_c$ is the only transition which can remove identifiers of type $\lambda$, and thus $N'$ is also proper $\lambda$ completing.
\end{proof}

\subsubsection{Place Duplication}
Whereas the previous two rules introduced ways to extend sequences, the third rule introduces parallelism by duplicating a place, as shown in Figure~\ref{fig:placeDuplication}.
It is based on the \emph{fusion of parallel transitions} reduction rule of Murata.
For \tpnids, duplicating a place has an additional advantage: as all information required for passing the identifiers is already guaranteed, the duplicated place can have any place type.
Transition $t$ can emit new identifiers, provided that transition $u$ does not already emit\linebreak these.

\begin{definition}[Duplicate place]
	Let $(N,m)$ be a marked \tpnid with $N = (\places, \transitions, \flow, \alpha, \beta)$, let $p \in \places$, such that $m(p) = \emptyset$, and some transitions $t, u\in \transitions$ exist with $\pre{p} =\{t\}$, $\post{t} = \{p\}$, $\post{p} = \{u\}$ and $\pre{u} = \{p\}$. Let $\vec{\lambda} \in \Lambda^*$ and $\vec{\mu} \in (\V\setminus \newvar{u})^*$ such that $\type_\V(\mu) = \lambda$.
	Its \emph{duplicated place \tpnid} is defined by $D_{p,\lambda,\mu}(N,m) = ((\places', \transitions,\flow', \alpha',\beta'),m)$, where:

	\begin{itemize}
\itemsep=0.85pt
		\item $\places' = \places \cup \{ q \}$, with $q\not\in P$, and $\flow' = \flow \cup \{ (t, q), (q, u) \}$;
		\item $\alpha' = \alpha\cup \{ q \mapsto \vec{\lambda} \}$ and $\beta' = \beta \cup \{ (t,q) \mapsto [\mu], (q,u) \mapsto [\mu] \}$.
	\end{itemize}
\end{definition}

As the duplicated place cannot hamper the firing of any transition, all behavior is preserved by a strong bisimulation on the identity mapping.

\begin{lemma}\label{lm:bisimplaceduplication}
Given a marked \tpnid $(N,m_0)$ with $N = (\places, \transitions, \flow, \alpha, \beta)$, place $p\in \places$, $\vec\lambda\in\Lambda^*$ and $\mu \in \V^*$.
Then $\tsys{N,m_0} \sim^r \tsys{D_{p,\vec\lambda,\mu}(N,m_0)}$.
\end{lemma}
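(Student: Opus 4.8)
The plan is to exhibit an explicit rooted strong bisimulation, following the template of Lemmas~\ref{lm:bisimplaceexpansion} and~\ref{lm:bisimtransexpansion}. Write $(N',m_0') = D_{p,\lambda,\mu}(N,m_0)$, so that $N'$ extends $N$ with the single fresh place $q$ satisfying $\pre{q} = \{t\}$, $\post{q} = \{u\}$, both arcs inscribed with $[\mu]$, and $m_0'$ coincides with $m_0$ on $\places$ while $m_0'(q) = \emptyset$. I would define $R \subseteq \mathbb{M}(N)\times\mathbb{M}(N')$ by $(M,M') \in R$ iff $M = \restr{M'}{\places}$ and $M'$ is reachable in $(N',m_0')$; rootedness is immediate since $\restr{m_0'}{\places} = m_0$. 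The two obligations of Definition~\ref{def:strong-bisimulation} are then $N \prec_R N'$ (forward) and $N' \prec_{R^{-1}} N$ (backward), and the whole difficulty is concentrated in whether the extra input place $q$ can ever disable its consumer $u$.

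The backward direction is routine. Any firing $\fire{(N',M')}{v,\psi}{(N',\bar M')}$ uses only constraints present in $N$ together with, for $v=u$, the additional requirement that $[\rho_\psi(\mu)] \le M'(q)$. Dropping this requirement and projecting $\psi$ onto the variables of $N$ yields an enabled firing $\fire{(N,M)}{v,\psi}{(N,\bar M)}$ with $\bar M = \restr{\bar M'}{\places}$, because $q$ is the only place in which $N$ and $N'$ differ and no transition other than $t$ and $u$ touches it. Hence $(\bar M,\bar M')\in R$: removing a place can only relax enabling.

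The forward direction is the crux. For every transition $v \ne u$ the matching firing in $N'$ uses the same binding (extended by fresh values on the emitting $\mu$-variables of $t$, which exist by local freshness and are irrelevant to $\places$), and for $v = t$ the produced $q$-token never obstructs enabling since production is unconstrained; in all these cases the resulting pair stays in $R$. The only delicate case is $v = u$: given $\fire{(N,M)}{u,\psi}{(N,\bar M)}$ I must produce a binding $\psi'$ that additionally consumes a matching token $\rho_{\psi'}(\mu)$ from $M'(q)$. The main obstacle is therefore to prove the invariant that $q$ is \emph{never blocking}: in every reachable $M'$, for every binding enabling $u$ on $\restr{M'}{\places}$ there is a compatible binding enabling $u$ on $M'$ as well. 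I would establish this by induction over firing sequences of $N'$, maintaining that $M'(q)$ equals the multiset $\{\rho_{\psi_i}(\mu)\}$ ranged over exactly those firings of $t$ not yet cancelled by a firing of $u$; since $t$ is the unique producer and $u$ the unique consumer of both $p$ and $q$, every $u$-firing enabled via $p$ can be paired with a still-pending $t$-contribution, whose matching $\mu$-token is consequently available in $q$ (when the $\mu$-variables are fixed by $u$'s other input arcs this pairing is forced, and otherwise $u$ may freely bind them to any token of the non-empty $M'(q)$). The constraint $\vec\mu \in (\V\setminus\newvar{u})^*$ guarantees that $u$ consumes, rather than must freshly emit, these tokens, so the pairing is well defined. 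Granting this invariant, $\psi'$ exists, $\bar M'$ agrees with $\bar M$ on $\places$, and $(\bar M,\bar M') \in R$, completing the forward simulation and hence the rooted strong bisimulation $\tsys{N,m_0}\sim^r\tsys{D_{p,\lambda,\mu}(N,m_0)}$.
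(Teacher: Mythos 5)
Your overall route coincides with the paper's: the paper's proof also takes the relation that is the identity on the shared places (its $Q$ has $(m,m')\in Q$ iff $m(q)=m'(q)$ for all $q\in\places$) and then simply asserts that the bisimulation ``trivially follows from the firing rule.'' You go further than the paper by isolating the one obligation that is not routine --- that the duplicated place $q$ must never disable its consumer $u$ --- and your backward direction and the forward cases $v\neq u$ are correct and are exactly what the paper leaves implicit.

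The gap sits in your resolution of the case $v=u$, specifically in the clause ``when the $\mu$-variables are fixed by $u$'s other input arcs this pairing is forced.'' Pairing the $u$-firing with a still-pending $t$-firing guarantees that \emph{some} token sits in $q$, but not that it is the token that $u$'s forced binding requires, because nothing in the definition ties the values $t$ writes into $q$ to the values it writes into $p$. Concretely: take $\alpha(p)=\tup{\lambda_0}$, $\beta((t,p))=[\tup{x}]$, $\beta((p,u))=[\tup{z}]$ with $z\notin\var{t}$, and duplicate $p$ with $\mu=\tup{z}$ (so $\lambda=\tup{\lambda_0}$); this is admissible, since $z\in\invar{u}$ implies $z\notin\newvar{u}$. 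In $D_{p,\lambda,\mu}(N,m_0)$ the variable $z$ becomes an emitting variable of $t$, so by injectivity and freshness of bindings every firing of $t$ puts an identifier $a$ into $p$ and a \emph{different}, fresh identifier $b$ into $q$; transition $u$, both of whose input arcs are now inscribed $\tup{z}$, must bind $z$ simultaneously to a token of $p$ and a token of $q$, which is impossible, so $u$ deadlocks in the duplicated net although it can fire in $N$. Hence no rooted strong (or even weak) bisimulation exists for this instance, and the step you assert fails on inputs that the definition of place duplication admits. (A second, independent mismatch arises when $|\beta((t,p))|>|\beta((p,u))|$: $q$ receives only one token per firing of $t$, so even the non-emptiness part of your counting invariant breaks.) What your induction actually needs is a componentwise correspondence between $M'(q)$ and the tokens in $M'(p)$, and that holds only when every variable of $\mu$ that $u$ shares is ``passed through'' $p$, i.e., occurs in matching positions of $\beta((t,p))$ and $\beta((p,u))$. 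Neither your proof nor, admittedly, the paper's one-line proof enforces this; but since your argument is the one that makes the obligation explicit, it is exactly here that it needs either this pass-through hypothesis or a restriction of $\mu$ to variables disjoint from $\var{u}$, in which case your ``free binding to any token of the non-empty $M'(q)$'' argument does go through.
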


\begin{proof}
Let $(N',m'_0) = D_{p,\vec\lambda,\mu}(N,m_0)$.
Define relation $Q \subseteq \mathbb{M}(N) \times \mathbb{M}(N')$ such that $(m, m') \in Q$ iff $m(q)\! =\! m'(q)$ for all places $q\in \places$. The bisimulation relation trivially follows from the firing rule.
\end{proof}

\begin{lemma}\label{lm:placeduplicationpreservesidentifiersoundness}
	Let  $(N,m_0)$ be a marked \tpnid with $N = (\places, \transitions, \flow, \alpha, \beta)$, place $p\in \places$, $\lambda\in\Lambda^*$ and $\mu \in \V^*$. Then $(N,m_0)$ is identifier sound iff $D_{p,\vec\lambda,\mu}(N,m_0)$ is identifier sound.
\end{lemma}
\begin{proof}
	Let $(N',m'_0) = D_{p,\vec\lambda,\mu}(N,m_0)$ with $N' = (\places',\transitions',\flow',\alpha',\beta')$, let $q \in \places'\setminus\places$ be the place added by the place duplication rule, and let $Q \subseteq \mathbb{M}(N) \times \mathbb{M}(N')$ be the bisimulation relation of Lm.~\ref{lm:bisimtransduplication}.
	
\smallskip\noindent 	($\Rightarrow$)
	Let $\lambda \in \type_\Lambda(N)$ be a type.
	Then, $\lambda \in \type_\Lambda(N')$ and $C_N(\lambda) = C_{N'}(\lambda)$ by definition of the place duplication rule.
	As $I(\lambda) \cap \id{m_1} = I(\lambda) \cap \id{m_2}$ for all $(m_1,m_2) \in Q$, the statement directly follows from Thm.~\ref{thm:weakbisimpreservesidentifiersoundness}.
	
\smallskip\noindent	($\Leftarrow$)
	Let $\lambda \in \type_\Lambda(N')$ be a type.
	If $\lambda \in \type_\Lambda(N)$, then the statement directly follows from Thm.~\ref{thm:weakbisimpreservesidentifiersoundness} as  $C_N(\lambda) = C_{N'}(\lambda)$ and $I(\lambda) \cap \id{m_1} = I(\lambda) \cap \id{m_2}$ for all $(m_1,m_2) \in Q$.
	Otherwise $\lambda \not\in \type_\Lambda(N)$.
	Then,  by definition of the place duplication, it must be that $\lambda \in \alpha(q)$.
	Then \mbox{$E_{N'}(\lambda) = {u}$} and $C_{N'}(\lambda) = {u}$, where $(t,q),(q,u) \subseteq \flow'$.
	Suppose there there exist a marking $m' \in \reachable{N'}{m'_0}$, firing sequence $\eta$, an identifier vector $\vec{\cname{id}} \in \I^*$ and identifier $\cname{id} \in I(\lambda)$ such that $\fire{(N',m'_0)}{\eta}{(N',m')}$, $\cname{id} \in \vec{\cname{id}}$ and $m'(q)(\vec{\cname{id}}) > 0$.
	Then a binding $\psi : \V \rightarrow \I$ exists such that $\cname{id} \in \rng{\psi}$  and $(t,\psi) \in \eta$.
	As $t$ is an emitting transition for $\lambda$, then $|\beta'((t,\psi))| = 1$, i.e., $m'(q)(\vec{\cname{id}}) = 1$.
	By the firing rule and the construction of $N'$, it holds that $m'(q)(\psi(\beta'((q,u)))) > 0$, and thus $\enabled{(N',m')}{u,\psi}$.
	Hence, there exists a marking $m'' \in \mathbb{M}(N')$ such that $\fire{(N',m')}{u,\psi}{(N',m'')}$.
	Then $\cname{id} \not\in \id{m''}$.
	Hence, $N'$ is weakly $\lambda$-terminating.
	As $\post{q} = \{u\}$, transition $u$ is the only transition which can remove identifiers of type $\lambda$, and hence $N'$ is also proper $\lambda$-completing.
\end{proof}

\subsubsection{Transition Duplication}
As already recognized by Berthelot~\cite{Berthelot78}, if two transitions have an identical preset and postset, one of these transitions can be removed while preserving liveness and boundedness. Murata's fusion of parallel places is a special case of this rule, requiring that the preset and postset are singletons.
For t-JNs, this results in the duplicate transition rule: any transition may be duplicated, as shown in Figure~\ref{fig:transitionDuplication}.
As duplication should not hamper the behavior of the original net, we require that the inscriptions of the duplicated transition are identical to the original transition.

\begin{definition}[Duplicate transition]
	Let $(N,m)$ be a marked \tpnid with $N = (\places, \transitions, \flow, \alpha, \beta)$, and let $t \in \transitions$ such that some places
      $p, q\in \places$ \ exist with $\pre{t} =\{p\}$ and $\post{t} = \{q\}$.
	Its \emph{duplicated transition \tpnid} is defined by $D_{t}(N,m) = ((\places, \transitions',\flow', \alpha,\beta'),M)$, where:

	\begin{itemize}
       \itemsep=0.9pt
		\item $\transitions' = \transitions \cup \{u \}$, with $u\not\in \transitions$, and $\flow' = \flow \cup \{ (p, u), (u, q) \}$;
		\item $\beta'((p,u)) = \beta((p,t))$, $\beta((u,q)) = \beta((t,q))$ and $\beta'(f) = \beta(f)$ for all $f\in \flow$.
	\end{itemize}
\end{definition}

As the above rule only duplicates $t\in\transitions$, the identity relation on markings is a strong rooted bisimulation. The proof is straightforward from the definition.

\begin{lemma}\label{lm:bisimtransduplication}
Given a marked \tpnid $(N,m_0)$ with $N = (\places, \transitions, \flow, \alpha, \beta)$, and transition $t\in\transitions$.
Then $\tsys{N,m_0} \sim^r \rho_{\{(u,t)\}}(\tsys{D_{t}(N,m_0)})$.
\end{lemma}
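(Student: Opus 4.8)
The plan is to exhibit the identity relation on markings as the required strong rooted bisimulation, in the same spirit as Lemma~\ref{lm:bisimplaceduplication}. Writing $(N',m_0)$ for $D_t(N,m_0)$ (the rule leaves the marking unchanged), I would first observe that $N$ and $N'$ share the same place set $\places$, so that $\mathbb{M}(N) = \mathbb{M}(N')$ and $m_0$ is a common marking. I then take $Q = \{(m,m) \mid m \in \mathbb{M}(N)\}$, which is rooted since $(m_0,m_0) \in Q$, and verify the two simulation conditions of Definition~\ref{def:strong-bisimulation} between $\tsys{N,m_0}$ and the renamed system $\rho_{\{(u,t)\}}(\tsys{N',m_0})$.

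The crux is a single structural observation: the fresh transition $u$ is a faithful copy of $t$. By construction $\pre{u} = \pre{t} = \{p\}$, $\post{u} = \post{t} = \{q\}$, and the incident inscriptions satisfy $\beta'((p,u)) = \beta((p,t))$ and $\beta'((u,q)) = \beta((t,q))$; consequently $\invar{u} = \invar{t}$, $\outvar{u} = \outvar{t}$, and hence $\newvar{u} = \newvar{t}$. Therefore, for every marking $m$ and every binding $\psi$, transition $u$ is enabled in $(N',m)$ under $\psi$ exactly when $t$ is enabled in $(N,m)$ under $\psi$ --- the local freshness condition on emitting variables coincides because $\newvar{u} = \newvar{t}$ --- and the two firings yield identical successor markings. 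Every other transition $t' \in \transitions$ retains its incident arcs and inscriptions, so its enabledness and firing effect are literally the same in $N$ and $N'$.

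With this in hand both directions are immediate. For the first, namely $\tsys{N,m_0} \prec_Q \rho_{\{(u,t)\}}(\tsys{N',m_0})$, any firing $\fire{(N,m)}{t',\psi}{(N,\bar m)}$ is replayed verbatim in $N'$, and since the renaming $r = \{(u,t)\}$ fixes every label other than $u$, the matching step carries the same label $(t',\psi)$, keeping us in $Q$. For the converse, a step of $\rho_{\{(u,t)\}}(\tsys{N',m_0})$ originates either from a shared transition $t' \neq u$ --- replayed directly in $N$ with its label preserved by $r$ --- or from the clone $u$, whose label $r$ rewrites to $t$; in the latter case the structural observation lets us fire $t$ in $N$ under the same binding $\psi$ to reach the same $\bar m$, matching the relabelled $t$-step. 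In all cases $(\bar m,\bar m) \in Q$.

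The only point requiring any care --- and it is minor --- is the clone case of the converse direction, where one must check that firing $u$ in $N'$ is indistinguishable from firing $t$ in $N$ at the level of both enabledness (including the local freshness requirement on emitting variables) and the resulting marking. This is precisely what the relabelling $u \mapsto t$ is designed to absorb, and it follows directly from $\invar{u} = \invar{t}$, $\outvar{u} = \outvar{t}$, and $\newvar{u} = \newvar{t}$ together with the firing rule; everything else is routine, matching the paper's remark that the identity relation is a strong rooted bisimulation.
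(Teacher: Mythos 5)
Your proposal is correct and takes essentially the same approach as the paper: the paper's proof also exhibits the identity relation on markings (stated as $(m,m')\in Q$ iff $m(q)=m'(q)$ for all places $q$) as a strong rooted bisimulation and asserts that the verification ``trivially follows from the firing rule.'' Your write-up merely makes explicit the structural fact the paper leaves implicit, namely that $u$ has the same preset, postset, and inscriptions as $t$, hence identical enabledness (including the freshness condition, since $\newvar{u}=\newvar{t}$) and identical firing effects under every binding.
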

\begin{proof}
	Let $(N',m'_0) = D_{t}(N,m_0)$.
	Define relation $Q \subseteq \mathbb{M}(N) \times \mathbb{M}(N')$ such that $(m, m') \in Q$ iff $m(p)\! =\! m'(p)$ for all places $p \in \places$. The bisimulation relation trivially follows from the firing rule.
\end{proof}

\begin{lemma}\label{lm:transitionduplicationpreservesidentifiersoundness}
	Let  $(N,m_0)$ be a marked \tpnid with $N = (\places, \transitions, \flow, \alpha, \beta)$ and transition $t\in\transitions$. Then $(N,m_0)$ is identifier sound iff $D_{t}(N,m_0)$ is identifier sound.
\end{lemma}
\begin{proof}
Let $(N',m'_0) = D_{t}(N,m_0)$, and let $Q \subseteq \mathbb{M}(N) \times \mathbb{M}(N')$ be the bisimulation relation of Lm.~\ref{lm:bisimtransduplication}.
Then $\id{m_1} = \id{m_2}$ for all $(m_1,m_2) \in Q$.
If $t \not\in C_N(\lambda)$ then the statement directly follows from Thm.~\ref{thm:weakbisimpreservesidentifiersoundness}.
Otherwise, i.e., $t \in C_N(\lambda)$, then $C_N(\lambda) = C_N(\lambda) \cup \{u\}$.
By Lm~\ref{lm:weakbisimpreservesweaktypetermination}, $N$ is weakly $\lambda$ terminating.
As $\beta_{N'}((t,p)) = \beta_{N'}((u,p))$ and $\beta_{N'}((p,t)) = \beta_{N'}((p,u))$ for all places $p \in P_{N'}$, proper type completion cannot distinct firing transition $t$ from transition $u$.
Hence, proper $\lambda$ completion follows from the proof of Lm~\ref{lm:weakbisimpreservespropertypecompletion}, and thus, $N$ is identifier sound.
\end{proof}

\subsubsection{Adding Identity Transitions}
In \cite{Berthelot78}, Berthelot classified a transition $t$ with an identical preset and postset, i.e., $\pre{t} = \post{t}$ as irrelevant, as its firing does not change the marking. The reduction rule \emph{elimination of self-loop transitions} is a special case, as Murata required these sets to be singletons.
We now introduce the fifth rule allowing the addition of a self-loop transition, as depicted in  Figure~\ref{fig:selfLoopAddition}.

\begin{definition}[Self-loop addition]
	Let $(N,m)$ be a marked \tpnid with $N = (\places, \transitions, \flow, \alpha, \beta)$, and let $p \in \places$.
	Its \emph{self-loop added \tpnid} is defined by $A_{p}(N,m) = ((\places, \transitions',\flow', \alpha,\beta'), m)$, where:
	\begin{itemize}
		\item $\transitions' = \transitions \cup \{ t \}$, with $t\not\in \transitions$, and
		 $\flow' = \flow \cup \{ (p, t), (t, p) \}$;
		\item $\beta'((p,t)) = \beta'((t,p)) = [\,\vec\mu\,]$ with $\vec\mu \in \V^*$ such that $\type_\V(\mu) = \alpha(p)$, and $\beta'(f) = \beta(f)$ otherwise.
	\end{itemize}
\end{definition}

Similar to the duplicate transition rule, the self-loop addition rule does not introduce new behavior, except for silent self-loops. Hence, the identity relation on markings is a weak rooted bisimulation.

\begin{lemma}\label{lm:selflooptrans}
Given a marked \tpnid $(N,m_0)$ with $N = (\places, \transitions, \flow, \alpha, \beta)$, and place $p\in\places$.
Then $\tsys{N,m_0} \approx^r \hide{\{t\}}(\tsys{A_{p}(N,m_0)})$ with $t$ the added self-loop transition.
\end{lemma}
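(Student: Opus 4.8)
The plan is to establish a strong rooted bisimulation — or rather a weak one, owing to the silent self-loop — between $\tsys{N,m_0}$ and the hidden induced transition system $\hide{\{t\}}(\tsys{A_{p}(N,m_0)})$, following exactly the template used in the preceding duplication lemmas (Lemmas~\ref{lm:bisimplaceduplication} and~\ref{lm:bisimtransduplication}). Since the self-loop addition rule $A_p$ adds no places and leaves all original flows and inscriptions untouched, the natural candidate relation is the identity on markings: set $(N',m_0') = A_p(N,m_0)$ and define $Q \subseteq \mathbb{M}(N) \times \mathbb{M}(N')$ by $(m,m') \in Q$ iff $m(q) = m'(q)$ for all $q \in \places$. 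Note $\mathbb{M}(N) = \mathbb{M}(N')$ since $\places$ is unchanged, so $Q$ is literally the diagonal, and since $m_0' = m_0$ the relation is rooted.

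**The forward direction** is immediate: any firing $\fire{(N,m)}{u,\psi}{(N,\bar m)}$ uses a transition $u \in \transitions$ whose preset, postset, and inscriptions are identical in $N'$, so the same firing is available in $N'$ and lands in the matching marking, giving $(\bar m, \bar m') \in Q$ with a non-silent $u$. For the converse, the only new behaviour in $N'$ is the added transition $t$ with $\pre{t} = \post{t} = \{p\}$ and $\beta'((p,t)) = \beta'((t,p)) = [\,\vec\mu\,]$. Here I would split on whether the fired transition $u$ equals $t$. If $u \neq t$, the firing is an original one and is matched exactly as in the forward direction. If $u = t$, then by the firing rule $\rho_\psi(\beta'((p,t))) = \rho_\psi(\beta'((t,p)))$, so the net effect on every place is zero and $\bar m' = m'$; since $t$ is relabelled to $\tau$ under $\hide{\{t\}}$, the weak-simulation clause is satisfied by the empty move, staying in the same $Q$-related pair. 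This is precisely why the statement asserts only weak bisimulation $\approx^r$ rather than strong: the self-loop contributes silent $\tau$-steps that leave the marking invariant.

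**The only subtlety**, and the main thing to check carefully, is that transition $t$ is in fact always fireable whenever $p$ carries a suitable token — but this never forces a spurious state change, because its preset equals its postset; thus $t$ can loop arbitrarily often under $\tau$ without escaping the equivalence class of $Q$. One must confirm that such $\tau$-loops are absorbed by the weak transition relation $\xdasharrow{~\tau~}$, which holds by its definition via $(\xrightarrow{\tau})^*$. Beyond this observation the argument is entirely routine and follows directly from the firing rule, exactly as the paper remarks; I expect no genuine obstacle, only the bookkeeping of verifying that $t$'s null effect preserves membership in $Q$.
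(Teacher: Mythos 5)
Your proposal is correct and follows essentially the same approach as the paper, which defines the identity relation $Q$ on markings and notes that the bisimulation property follows from the firing rule. Your write-up simply makes explicit what the paper leaves as ``trivial'': original transitions are matched exactly, and a firing of the added self-loop $t$ has null effect (since $\beta'((p,t)) = \beta'((t,p)) = [\,\vec\mu\,]$) and is absorbed as a silent step by the weak-simulation clause.
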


\begin{proof}
	Let $(N',m'_0) = A_{p}(N,m_0)$.
	Define relation $Q \subseteq \mathbb{M}(N) \times \mathbb{M}(N')$ such that $(m, m') \in Q$ iff $m(p)\! =\! m'(p)$ for all places $p \in \places$. The bisimulation relation trivially follows from the firing rule.
\end{proof}

\begin{lemma}\label{lm:identitytransitionpreservesidentifiersoundness}
	Let  $(N,m_0)$ be a marked \tpnid with $N = (\places, \transitions, \flow, \alpha, \beta)$ and place $p\in\places$. Then $(N,m_0)$ is identifier sound iff $A_{p}(N,m_0)$ is identifier sound.
\end{lemma}

\begin{proof}
	Let $(N',m'_0) = A_{p}(N,m_0)$ and let $Q \subseteq \mathbb{M}(N) \times \mathbb{M}(N')$ be the bisimulation relation of Lm.~\ref{lm:bisimtransduplication}.
	Note that $C_N(\lambda) = C_{N'}(\lambda)$ for all $\lambda \in \type_\Lambda(N)$, since the added self-loop transition does not remove any identifier.
	As $\id{m_1} = \id{m_2}$ for all $(m_1,m_2) \in Q$, the statement directly follows from Thm.~\ref{thm:weakbisimpreservesidentifiersoundness}.
\end{proof}

\subsubsection{Identifier Introduction}
The first five  rules preserve the criteria of block-structured WF-nets.
Murata's 
\emph{elimination of self-loop places} states that adding or removing a marked place with identical preset and postset does preserve liveness and boundedness.
This rule is often used to introduce a fixed resource to a net, i.e., the number of resources is determined in the initial marking.
Instead, identifier introduction adds dynamic resources, as shown in Figure~\ref{fig:identifierCreation}:
transition $t_e$ emits new identifiers as its inscription uses only ``new''  variables (i.e., those that have not been used in the net), and place $p$ works like a storage of the available resources, which can be removed by firing transition
$t_c$.

\begin{definition}[Identifier Introduction]
	Let $(N,m)$ be a marked \tpnid with $N = (\places, \transitions, \flow, \alpha, \beta)$, let $t \in \transitions$, let $\vec\lambda \in
    (\Lambda\setminus \type_P(N))^*$ and $\vec\mu \in \V^*$ such that $\type_{\V}(\vec\mu) = \vec\lambda$.
	The \emph{Identifier introducing \tpnid} is defined by $A_{t,\vec\lambda,\vec\mu}(N,m) = ((\places', \transitions',\flow', \alpha',\beta'), m)$, where:
	\begin{itemize}
\itemsep=0.9pt
		\item $\places' = \places' \cup \{ p \}$ and $\transitions' = \transitions \cup \{ t_e, t_c \}$, for $p \not\in \places$ and $t_e, t_c \not\in \transitions$,
        and\newline
		 $\flow' = \flow \cup \{ (p, t), (t, p), (t_e,p), (p, t_c) \}$;
		\item $\alpha' = \alpha \cup \{ p \mapsto \vec\lambda \}$ and $\beta' = \beta \cup \{ (p,t) \mapsto [\vec\mu], (t,p) \mapsto [\vec\mu], (t_e, p)
        \mapsto [\vec\mu], (p, t_c) \mapsto [\vec\mu] \}$;
	\end{itemize}
\end{definition}

\begin{lemma}\label{lm:selfloopplace}
	Given a marked \tpnid $(N,m_0)$ with $N = (\places, \transitions, \flow, \alpha, \beta)$, transition $t\in\transitions$, $\vec\lambda \in \Lambda^*$ and $\vec\mu \in (\V \setminus \var{t})^*$.
	Then $\tsys{N,m_0} \approx^r \hide{\{t_e,t_c\}}(\tsys{A_{t,\vec\lambda,\vec\mu}(N,m_0)})$ with $t_e, t_c$ being the added transitions.
\end{lemma}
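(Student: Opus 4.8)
The plan is to reuse the pattern of the preceding lemmas and exhibit an explicit rooted weak bisimulation. Writing $(N',m_0')=A_{t,\vec\lambda,\vec\mu}(N,m_0)$, I would take the relation $Q\subseteq\mathbb{M}(N)\times\mathbb{M}(N')$ defined by $(m,m')\in Q$ iff $m(q)=m'(q)$ for every original place $q\in\places$, placing no constraint on the marking of the fresh resource place $p$. Since $p\notin\places$ and $m_0'$ coincides with $m_0$ on all original places, $(m_0,m_0')\in Q$, so $Q$ is rooted; it then remains to check that it is a weak bisimulation once $t_e$ and $t_c$ are hidden.

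Three structural facts drive the argument. First, by construction $\pre{t_e}=\emptyset$, so $t_e$ is always enabled and, emitting fresh identifiers of type $\vec\lambda$ into $p$, can deposit a resource token by a single (hidden) step; symmetrically $t_c$ merely withdraws a token from $p$. Second, $t_e$ and $t_c$ touch only $p$, and the arcs $(p,t)$ and $(t,p)$ carry the same inscription $[\,\vec\mu\,]$, so a firing of $t$ reads a resource and immediately returns it, leaving $m'(p)$ unchanged. Third, because $\vec\mu\in(\V\setminus\var{t})^*$, the value bound to $\vec\mu$ is independent of the binding of the genuine variables of $t$ and has no effect on the original places; hence any binding that enables $t$ in $N$ can be accompanied in $N'$ by assigning $\vec\mu$ to whatever resource is available in $p$.

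For the direction $\tsys{N,m_0}\preccurlyeq_Q\hide{\{t_e,t_c\}}(\tsys{N',m_0'})$ I would take $(m,m')\in Q$ together with a firing $\fire{(N,m)}{u,\psi}{(N,\bar m)}$. If $u\neq t$, then $u$ does not interact with $p$, so it is enabled in $m'$ under the same binding and produces a marking agreeing with $\bar m$ on the original places, giving a direct matching step. If $u=t$ and $p$ already holds a resource, I fire $t$ under a binding that agrees with $\psi$ on $\var{t}$ and sends $\vec\mu$ to that resource; if $p$ is empty, I first fire the hidden $t_e$ to create a fresh resource (which keeps the marking $Q$-related to $m$) and then fire $t$, so that $m'\xdasharrow{~(t,\psi)~}\bar m'$ with $(\bar m,\bar m')\in Q$. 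The converse direction is symmetric: a firing of $t_e$ or $t_c$ only alters $m'(p)$, hence stays $Q$-related and is matched by the empty move (these are exactly the $\tau$-labelled steps of the hidden system); a firing of $t$ reads and restores its resource and is matched by firing $t$ in $N$ under the same binding, only its values on $\var{t}$ being relevant; every other transition is matched directly by the firing rule on the untouched places.

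The one genuinely delicate point is the $u=t$ case of the forward direction, where the resource that $t$ now consumes from $p$ may be absent: here I must invoke the always-enabledness of $t_e$ to manufacture a suitable token by a silent step, which is precisely what the weak (as opposed to strong) bisimulation and the hiding of $\set{t_e,t_c}$ provide, while the condition $\vec\mu\in(\V\setminus\var{t})^*$ guarantees that the choice of this resource is free and invisible on the original places. This is the identifier-carrying analogue of Murata's elimination of self-loop places; once this step is settled, everything else is the same routine bookkeeping with the firing rule as in the proofs of Lemmas~\ref{lm:selflooptrans} and~\ref{lm:bisimplaceduplication}.
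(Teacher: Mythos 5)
Your proposal is correct and follows essentially the same route as the paper: the identical relation $Q$ (agreement on all original places, no constraint on the fresh place $p$), the same case split in the forward direction (transitions other than $t$ match directly; $t$ matches directly when $p$ is marked; otherwise a silent firing of the always-enabled emitter $t_e$ manufactures the needed resource token first), and the same observation that $t_e$, $t_c$, and the self-loop inscriptions $[\,\vec\mu\,]$ leave the original places untouched, so the backward direction is immediate from the firing rule. If anything, your write-up is more explicit than the paper's about why the freshness of $\vec\mu$ makes the extended binding unproblematic.
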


\begin{proof}
	Let $N' = (\places',\transitions',\flow',\alpha',\beta')$.
	Define $Q \subseteq \mathbb{M}(N) \times \mathbb{M}(N')$ such that $(m,m') \in Q$ iff $m(q) = m'(q)$ for all $q \in \places$.
	
\smallskip\noindent($\Rightarrow$)
	Suppose $\fire{(N,m)}{u,\psi}{(N,\bar{m}')}$ and $(m,m')\in Q$. If $u \neq t$, the statement directly follows from the firing rule. Same holds for the case when $u = t$ and $t$ is enabled in $m'$.
	If $u = t$ and $t$ is not enabled in $m'$, then a marking $m''$ and binding $\psi'$ exist such that $\fire{(N',m')}{t_e,\psi'}{(N,m'')}$.
	Then $m''(p) > \emptyset$, $(m,m'')\in Q$, and $\enabled{(N',m'')}{u,\psi}$.
	 Hence, markings $\bar{m}''$ and $\bar{m}'$ exist such that $\enabled{(N',m'')}{t,\psi}\fire{(N',\bar{m}'')}{t_c,\psi'}{(N,\bar{m}')}$, and $(m,\bar{m}''), (m',\bar{m}') \in Q$.
	
\smallskip\noindent($\Leftarrow$) Follows directly from the firing rule.
\end{proof}

As shown in~\cite{RVFE11}, unbounded places are width-bounded, i.e., they can carry only boundedly many distinct identifiers, or depth-bounded, i.e., for each identifier, the number of tokens carrying that identifier is bounded, or both.
The place added by the identifier creation rule is by definition width-unbounded, as it has an empty preset. However, it is identifier sound, and thus depth-bounded, as shown in the next lemma.

\begin{lemma}\label{lm:identplaceidentsound}
	Given a marked \tpnid $(N,m)$ with $N = (\places,\transitions,\flow,\alpha,\beta)$. Then $A_{t,\vec\lambda,\vec\mu}(N,m)$ is identifier sound iff $(N,m)$ is identifier sound.
\end{lemma}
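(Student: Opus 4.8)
The plan is to decompose identifier soundness of the extended net $N' = A_{t,\vec\lambda,\vec\mu}(N,m)$ according to the object types it carries. Since $\vec\lambda \in (\Lambda\setminus\type_\Lambda(N))^*$ is fresh, the identifier introduction rule adds a place $p$ together with transitions $t_e,t_c$ whose types are disjoint from those already present, so $\type_\Lambda(N') = \type_\Lambda(N)\cup\{\lambda\mid \lambda\in\vec\lambda\}$ is a disjoint union. By Definition~\ref{def:soundness}, $N'$ is identifier sound iff it is $\lambda$-sound for every old type $\lambda\in\type_\Lambda(N)$ and for every fresh type $\lambda\in\vec\lambda$. I would establish that the fresh types are always sound (regardless of the properties of $N$), and that soundness of an old type in $N'$ coincides with its soundness in $N$; combining these two facts immediately yields the claimed equivalence.

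For the fresh types, fix $\lambda\in\vec\lambda$. By construction the only place whose type mentions $\lambda$ is $p$, and inspecting the inscriptions of the added arcs through Definition~\ref{def:notations} shows that the only emitter for $\lambda$ is $t_e$ and the only collector is $t_c$, while $t$ carries $\vec\mu$ on both its new incoming and outgoing arc and is therefore neither. I would then argue, using local freshness of bindings for $t_e$, that at every reachable marking each $\lambda$-identifier occurs in exactly one token of $p$: a newly emitted vector is disjoint from all identifiers currently present, and the self-loop on $t$ only rewrites such a token without changing its multiplicity. Consequently $t_c$ is enabled whenever a $\lambda$-identifier is present, its firing deletes the unique token carrying that identifier, and no occurrence of it survives. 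This gives both weak $\lambda$-termination and proper $\lambda$-completion, so $N'$ is $\lambda$-sound unconditionally.

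For an old type $\lambda\in\type_\Lambda(N)$ I would transfer soundness across the weak rooted bisimulation of Lemma~\ref{lm:selfloopplace}, whose witnessing relation $Q$ equates the markings of $N$ and $N'$ on every old place $q\in\places$. Because $\lambda$-identifiers live only in old places, $Q$ preserves the $\lambda$-restricted identifier set $\id{\cdot}$; moreover the self-loop adds only the fresh variables $\vec\mu$ to $t$, so the collectors of $\lambda$ and their $\lambda$-typed collecting variables are identical in $N$ and $N'$, and these collector firings are visible (not hidden) under the bisimulation. Hence a collector firing $\fire{M}{t',\psi}{M'}$ in $N'$ matches, on the old places, a firing $\fire{m}{t',\psi'}{m'}$ in $N$ with $\psi'$ the restriction of $\psi$ to the variables of $N$, so proper $\lambda$-completion transfers in both directions; dually, a termination witness removing $\cname{id}$ in one net projects to one in the other (dropping the hidden $t_e,t_c$ steps in one direction, inserting a matching emit/collect pair in the other), giving weak $\lambda$-termination both ways. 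The main obstacle is precisely this last transfer: identifier soundness speaks about the appearance and disappearance of individual identifiers rather than about bisimilar behaviour, so one must verify that the equality of old-place contents enforced by $Q$ genuinely identifies the $\lambda$-identifier sets and aligns the relevant collector firings across the hidden $t_e,t_c$ actions. Once this alignment is in place, combining the unconditional soundness of the fresh types with the established equivalence for old types proves that $N'$ is identifier sound iff $(N,m)$ is.
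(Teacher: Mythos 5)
Your proposal is correct and follows essentially the same route as the paper's proof: a case split between the fresh types in $\vec\lambda$ (handled by the structural observation that $p$ is the only place carrying them, $t_e$ and $t_c$ are the only emitter and collector, and each fresh identifier occupies exactly one token in $p$, so $t_c$ can always remove it) and the old types in $\type_\Lambda(N)$ (transferred across the weak rooted bisimulation of Lemma~\ref{lm:selfloopplace}). If anything, your write-up is more explicit than the paper's on the two points it glosses over --- that fresh-type soundness holds unconditionally, and that the bisimulation relation $Q$ genuinely preserves identifier sets and collector firings on the old places --- but these are refinements of the same argument, not a different one.
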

\begin{proof}
Let $(N',m') = A_{t,\vec\lambda,\vec\mu}(N,m)$, let $p \in \places'\setminus \places$, and let $\lambda \in \type_\Lambda(N')$.

\smallskip\noindent($\Rightarrow$)
Suppose $(N,m)$ is identifier sound.
Let $\bar m \in \mathbb M(N')$ and $\eta \in (\transitions'\times(\V\rightarrow \I))^*$ such that $\fire{(N',m')}{\eta}{(N',\bar{m})}$.
Let $\cname{id} \in \id{\bar{m}} \cap I(\lambda)$.
If $\lambda \in \type_\Lambda(N)$, weak $\lambda$-termination and proper $\lambda$-completion follow from Lm.~\ref{lm:selfloopplace}.
Suppose $\lambda \not\in \type_\Lambda(N)$, i.e., $\lambda \in \vec\lambda$.
By construction of $N'$, we have $\lambda \in \alpha(q)$ implies $p = q$ for all places $q \in \places'$, $E_{N'}(\lambda) = \{t_e\}$ and  $C_{N'}(\lambda) = \{t_c\}$.
By the firing rule, we have $\cname{id} \in \vec{a}$ and $\cname{id} \in \vec{b}$ imply $\vec a = \vec b$ for all $\vec a, \vec b \in \supp{m(p)}$.
Again by the firing rule, $m(p)(\vec{a}) \leq 1$ for all $\vec a \in \supp{m(p)}$.
In other words, there is only one token carrying identifier $\cname{id}$.
Let $\vec{\cname{id}} \in \colset(p)$ such that $\cname{id} \in \vec{\cname{id}}$ and $m(p)(\vec{\cname{id}}) > 0$. Then $m(p)(\vec{\cname{id}}) = 1$.
Thus, a binding $\psi$ exists such that $(t_e,\psi) \in \eta$ and $\rho_{\vec{\mu}}(\psi) = \vec{\cname{id}}$.
By construction of $N'$, a marking $\bar{m}'$ exists such that $\fire{(N',\bar{m})}{t_c,\psi}{(N',\bar{m}')}$.
Then $\cname{id}\notin\id{\bar{m}'}$.
Hence, $(N',m')$ is weakly $\lambda$-terminating.
It is proper $\lambda$-completing since there is only one token carrying identifier $\cname{id}$.

\noindent($\Leftarrow$)
Suppose $(N',m')$ is identifier sound.
If $\lambda \in \type_\Lambda(N)$, weak $\lambda$-termination and proper $\lambda$-completion follow from Thm.~\ref{thm:weakbisimpreservesidentifiersoundness}.
In case $\lambda \not\in \type_\Lambda(N)$, it is weakly $\lambda$-terminating, since $E_N(\lambda) = \emptyset$, and properly $\lambda$-completing since $C_N(\lambda) = \emptyset$.
\end{proof}

\subsubsection{Soundness for Typed Jackson Nets}
Any net that can be reduced to a net with a single transition using these rules is called a typed Jackson Net (t-JN).


\begin{definition}
The class of \emph{typed Jackson Nets} $\mathcal{T}$ is inductively defined  by:
\begin{itemize}
\item $((\emptyset,\{t\},\emptyset,\emptyset,\emptyset),\emptyset) \in \mathcal{T}$;
\item if $(N,M) \in \mathcal{T}$, then $R_{p,\vec\mu}(N,M) \in \mathcal{T}$;
\item if $(N,M) \in \mathcal{T}$, then $R_{t,\vec\lambda,\vec\mu}(N,M) \in \mathcal{T}$;
\item if $(N,M) \in \mathcal{T}$, then $D_{p,\vec\lambda,\vec\mu}(N,M_0) \in \mathcal{T}$;
\item if $(N,M) \in \mathcal{T}$, then $D_{t}(N,M) \in \mathcal{T}$;
\item if $(N,M) \in \mathcal{T}$, then $A_{p}(N,M) \in \mathcal{T}$;
\item if $(N,M) \in \mathcal{T}$, then $A_{t,\vec\lambda,\vec\mu}(N,M) \in \mathcal{T}$.
\end{itemize}
\end{definition}

As any t-JN reduces to a single transition, and each construction rule goes hand in hand with a bisimulation relation, any liveness property is preserved. Consequently, any t-JN is identifier sound and live.

\begin{theorem}\label{thm:tJNisSound}
Any typed Jackson Net is identifier sound and live.
\end{theorem}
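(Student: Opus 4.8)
The plan is to proceed by structural induction on the inductive definition of the class $\mathcal{T}$ of typed Jackson Nets. The base case is the net $((\emptyset,\{t\},\emptyset,\emptyset,\emptyset),\emptyset)$ with a single transition and no places. Since it has no places we have $\type_\Lambda(N)=\emptyset$, so by the first observation following Definition~\ref{def:soundness} it is (vacuously) identifier sound; and since the lone transition has empty preset it is enabled under the empty binding in the only reachable marking $\emptyset$, so the net is live. For the inductive step I would assume $(N,M)\in\mathcal{T}$ is identifier sound and live and show that each of the six construction rules preserves both properties, so that every element obtained by finitely many rule applications inherits them.

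The engine of the inductive step is the family of rooted (weak or strong) bisimulation results already proved for the rules, namely Lemmas~\ref{lm:bisimplaceexpansion}, \ref{lm:bisimtransexpansion}, \ref{lm:bisimplaceduplication}, \ref{lm:bisimtransduplication}, \ref{lm:selflooptrans} and \ref{lm:selfloopplace}. The key point that makes them usable for \emph{identifier} soundness is that each accompanying relation equates related markings on their identifier content: for every type $\lambda$ already present in $N$, the transfer and storage places introduced by a rule (the pair $p_i,p_f$ of place expansion, the buffer $p$ of transition expansion, the self-loop and duplicate places) only redistribute $\lambda$-identifiers and never create or destroy them. Hence $\id{m}\cap I(\lambda)$ is invariant along the relation, and both weak $\lambda$-termination (a reachability statement about reaching a marking that no longer contains $\cname{id}$, Definition~\ref{def:termination}) and proper $\lambda$-completion (Definition~\ref{def:completion}) transfer from $N$ to the transformed net. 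For liveness I would split the transitions of the new net into two groups: those inherited from $N$, including the renamed copies such as $t_c$ of transition expansion and the duplicates, for which liveness is carried over by the rooted bisimulation together with liveness of $N$; and the freshly added \emph{silent} transitions (the transfer transition of place expansion, $t_e$ of transition expansion, the added self-loop, and the emitter/collector of identifier introduction), for which I would argue directly that whenever their presets can be marked they can be made enabled, again using liveness of $N$ and the shape of the relation.

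I expect the two rules needing genuinely new arguments to be transition expansion and identifier introduction. For transition expansion the buffer place holds the ``in-transit'' binding between $t_e$ and $t_c$, so I must verify that such tokens never block termination: concretely, that the relation of Lemma~\ref{lm:bisimtransexpansion}, which postpones $t_e$ until $t_c$ fires, still permits every $\lambda$-identifier to be removed, and that fresh identifiers possibly emitted by $t_e$ are always eventually collected by $t_c$ so their type stays $\lambda$-sound. For identifier introduction the difficulty is that a \emph{new} type $\vec\lambda\notin\type_P(N)$ is created, so the bisimulation of Lemma~\ref{lm:selfloopplace} alone does not certify soundness of that type; here I would instead invoke Lemma~\ref{lm:identplaceidentsound} directly, which states that $A_{t,\vec\lambda,\vec\mu}(N,m)$ is identifier sound iff $(N,m)$ is, thereby handling both the inherited and the newly introduced types. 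Assembling these per-rule facts, liveness is preserved because each rule's bisimulation is rooted and maps old transitions to old transitions (possibly $\tau$) with the added transitions shown live by hand, while identifier soundness is preserved because the relations keep identifier content fixed on the existing types and Lemma~\ref{lm:identplaceidentsound} covers the single rule that adds a type; since every t-JN reduces to the single-transition net, induction yields that every typed Jackson Net is identifier sound and live.
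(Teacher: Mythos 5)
Your proposal is correct and follows essentially the same route as the paper: structural induction over the rule applications, with the base case handled directly and each of the six per-rule (rooted, weak or strong) bisimulation lemmas used to transfer identifier soundness and liveness to the constructed net. If anything, your treatment of the identifier-introduction rule is more careful than the paper's own text, which in that case cites only the bisimulation result of Lemma~\ref{lm:selfloopplace}; you correctly observe that a bisimulation that hides $t_e,t_c$ and relates markings only on the original places cannot by itself certify soundness of the \emph{newly added} type $\vec\lambda$, and you invoke Lemma~\ref{lm:identplaceidentsound} instead, which is exactly the lemma the paper proves for this purpose.
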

\begin{proof}
We prove the statement by induction on the structure of t-JNs. The statement holds trivially for the initial net, $((\emptyset,\{t\},\emptyset,\emptyset,\emptyset),\emptyset)$.
Suppose $(N', M') \in \mathcal{T}$ is identifier sound.
We show that applying any of the construction rules on $(N',M')$ preserves identifier soundness:
\begin{itemize}
	\item Suppose $(N,M) = R_{p,\vec\mu}(N',M')$.
	The statement follows directly from Lm.~\ref{lm:placeexpansionpreservesidentifiersoundness}.
	\item
	Suppose $(N,M) = R_{t,\vec\lambda,\vec\mu}(N',M')$.
	The statement follows directly from Lm.~\ref{lm:transitionexpansionpreservesidentifiersoundness}.
	\item
	Suppose $(N,M) = D_{p,\vec\lambda,\vec\mu}(N',M')$.
	The statement follows directly from Lm.~\ref{lm:placeduplicationpreservesidentifiersoundness}.
	\item
	Suppose $(N,M) = D_{t}(N',M')$.
	The statement follows directly from Lm.~\ref{lm:transitionduplicationpreservesidentifiersoundness}.
	\item
	Suppose $(N,M) = A_{p}(N',M')$.
	The statement follows directly from Lm.~\ref{lm:identitytransitionpreservesidentifiersoundness}.
	\item
	Suppose $(N,M) = A_{t,\vec\lambda,\vec\mu}(N',M')$.
	The statement follows directly from Lm.~\ref{lm:identplaceidentsound}.
\end{itemize}

\vspace*{-6mm}
\end{proof}

\begin{figure}[h!]
\vspace*{-1mm}
	\centering
	\includegraphics[scale=.8]{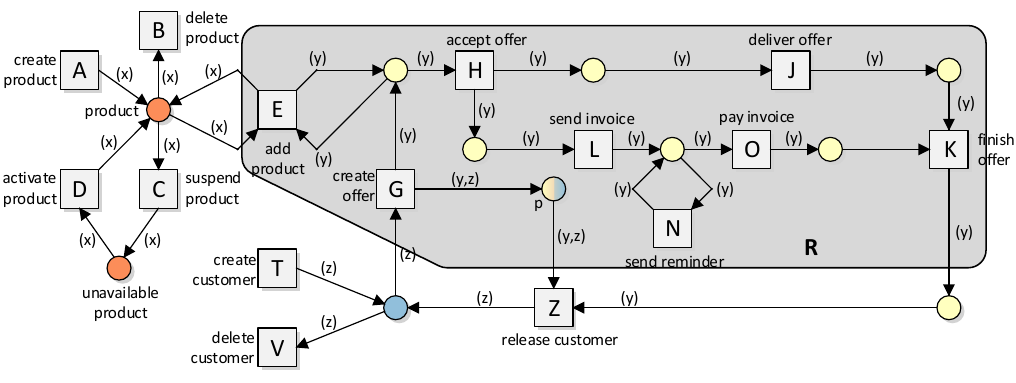}
	\caption{The example of the retailer shop as a typed Jackson Net.}\label{fig:overallModelRepaired}

\vspace*{7mm}
	\centering
	\begin{subfigure}{.25\textwidth}
		\centering
		\includegraphics[scale=.8]{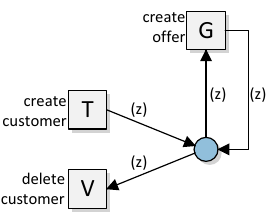}
		\caption{Step 2}\label{fig:overallModelRepaired-step2}
	\end{subfigure}
	\begin{subfigure}{.35\textwidth}
		\centering
		\includegraphics[scale=.8]{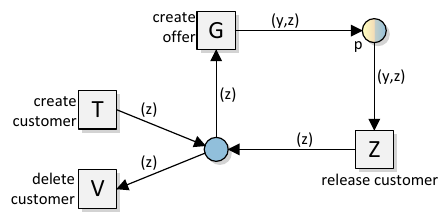}
		\caption{Step 3}\label{fig:overallModelRepaired-step3}
	\end{subfigure}
	\begin{subfigure}{.4\textwidth}
		\centering
		\includegraphics[scale=.8]{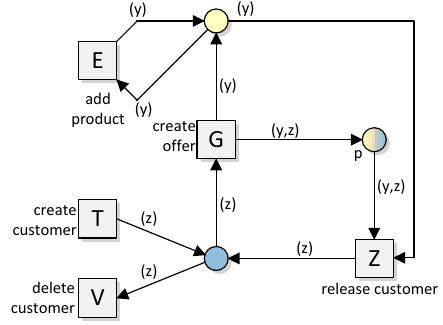}
		\caption{Step 5}\label{fig:overallModelRepaired-step5}
	\end{subfigure}
	\begin{subfigure}{.55\textwidth}
		\centering
		\includegraphics[scale=.8]{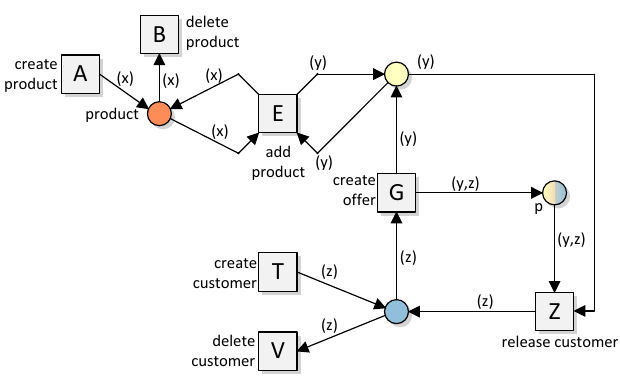}
		\caption{Step 6}\label{fig:overallModelRepaired-step6}
	\end{subfigure}\vspace*{-4mm}
\caption{Several intermediate steps while creating the typed Jackson Net of
       Fig.~\ref{fig:overallModelRepaired}.}\label{fig:overallModelRepaired-steps}\vspace*{-5mm}
\end{figure}

To solve the problem of the running example, several solutions exist.
One solution is shown in Figure~\ref{fig:overallModelRepaired}, which is a t-JN. Several intermediate construction steps are shown in Fig.~\ref{fig:overallModelRepaired-steps}.
The modeler starts with transition $T$, ``create customer''.
The net has no identifiers yet.
Next, transition $T$ is expanded using place type $\langle\mathit{customer}\rangle$, i.e., a place and transition $V$ are added.
A self loop is added to the newly created place (transition $G$), which results in the net depicted in Fig.~\ref{fig:overallModelRepaired-step2}.
The next step introduces place $p$, and is shown in Fig.~\ref{fig:overallModelRepaired-step3}: transition $G$ is expanded using place type $\langle\mathit{order},\mathit{customer}\rangle$.
\mbox{Duplicating} place $p$ allows to create a place with place type $\langle\mathit{customer}\rangle$.
The net depicted in Fig.~\ref{fig:overallModelRepaired-step5} shows the net after adding another self-loop (transition $E$).
The identifier introduction rule allows the modeler to add a product life cycle, so that transition $E$ can add actual products, resulting in the net depicted in Fig.~\ref{fig:overallModelRepaired-step6}.
Now, all required identifiers are present, and transitions $H$, $K$, $J$, $L$, $O$ and $N$ are added using the place type $\langle\mathit{customer}\rangle$, which results in the net depicted in Fig.~\ref{fig:overallModelRepaired}.
As only the types Jackson rules are used, the net is guaranteed to be identifier sound and live.

\subsection{Workflow refinement}
\label{sec:wf-refinement}
A well-known refinement rule is workflow refinement~\cite{HeeSV03}.
In a WF-net, any place may be refined with a generalized sound WF-net.
If the original net is sound, then the refined net is sound as well.
In this section, we present a similar refinement rule.
Given a \tpnid, any place may be refined by a generalized sound WF-net.
In the refinement, each place is labeled with the place type of the refined place, and all arcs in the WF-net are inscribed with the same variable vector.

\begin{definition}[Workflow refinement]\label{def:refinement}
Let $L = (\places_L,\transitions_L,\flow_L,\alpha_L,\beta_L)$, be a \tpnid,  $p \in P_L$ a place, and $N = (\places_N, \transitions_N,\flow_N, W_N,\inp, \outp)$ be a WF-net.
\emph{Workflow refinement} is defined by $L \oplus_{p} N = (\places,\transitions,\flow,\alpha,\beta)$, where:\smallskip

\begin{compactitem}
\itemsep=0.95pt
\item $\places = (\places_L \setminus \{p\}) \cup \places_N$ and $\transitions = \transitions_L \cup \transitions_N$;
\item $\flow = (\flow_L \cap ( ( \places \times \transitions  )  \cup ( \transitions \times \places ) )) \cup \flow_N \cup \{(t,\inp)\mid t \in \pre{p} \} \cup \{(\outp, t) \mid t\in\post{p}\}$;
\item $\alpha(q) = \alpha_L(q)$ for $q \in \places_L\setminus\{p\}$, and $\alpha(q) = \alpha_L(p)$ for $q \in P_N$;
\item $\beta(f) = \beta_L(f)$ for $f\in \flow_L$, $\beta(f) = [\vec\mu]^{(W(f))}$ for $f \in \flow_N$ and $\type_\V(\vec\mu)=\alpha(p)$, $\beta((t,\inp)) = \beta((t,p))$ for $t \in \pre{p}$ and $\beta((\outp,t)) = \beta((p,t))$ for $t \in \post{p}$.
\end{compactitem}
\end{definition}

Generalized soundness of a WF-Net ensures that any number of tokens in the initial place are ``transferred'' to the final place.
As shown in Section~\ref{sec:ec-wf-closure}, the EC-closure of a sound WF-net is identifier sound and live.
A similar approach is taken to show that the refinement is weakly bisimilar to the original net.
Analogously to~\cite{HeeSV03}, the bisimulation relation is the identity relation, except for place $p$. The relation maps all possible token configurations of place $p$ to any reachable marking in the WF-net, given $p$'s token configuration.

\begin{lemma}
Let $L =  (\places_L,\transitions_L,\flow_L,\alpha_L,\beta_L)$ be a \tpnid with initial marking $m_0$, let $p \in P_L$ be a place s.t. $m_0(p)=\emptyset$, and let $N = (\places_N, \transitions_N,\flow_N, W_N,\inp_N, \outp_N)$ be a WF-net. If $N$ is generalized sound, then $\tsys{L,m_0} \approx^{r}
\hide{T_N}(\tsys{L\oplus_p N,m_0})$.
\end{lemma}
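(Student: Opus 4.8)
The plan is to exhibit an explicit rooted weak bisimulation, following the pattern of Lemma~\ref{lemma:weak-bisim-wrappednet} and the refinement argument of~\cite{HeeSV03}. Write $N' = L\oplus_p N$. Since every arc of the embedded copy of $N$ is inscribed with the single variable vector $\vec\mu$, each firing of a transition $u\in T_N$ binds $\vec\mu$ to exactly one identifier vector and therefore moves tokens of one \emph{color} only, mimicking a firing of the underlying WF-net $N$. This lets me track the $N$-part of a marking colorwise: for $M\in\mathbb{M}(N')$ and $\vec{\cname{id}}\in\colset(p)$, let $M_N^{\vec{\cname{id}}}$ be the marking of $N$ with $M_N^{\vec{\cname{id}}}(q)=M(q)(\vec{\cname{id}})$ for $q\in\places_N$. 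I then define $Q\subseteq\mathbb{M}(L)\times\mathbb{M}(N')$ by $(m,M)\in Q$ iff $m(q)=M(q)$ for all $q\in\places_L\setminus\{p\}$ and, for every color $\vec{\cname{id}}$, the marking $M_N^{\vec{\cname{id}}}$ is reachable in $N$ from $[\inp_N^{\,m(p)(\vec{\cname{id}})}]$. Rootedness is immediate: as $m_0(p)=\emptyset$, all places of $N$ are unmarked in the initial marking of $N'$, and for each color $[\inp_N^{0}]=\emptyset$ trivially reaches $\emptyset$, so $(m_0,m_0)\in Q$.

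For the direction in which a move of $L$ must be matched by $N'$, let $(m,M)\in Q$ and $\fire{(L,m)}{t,\psi}{(L,m')}$. If $t$ is not incident to $p$, the same firing is available in $N'$ and preserves $Q$. If $t\in\pre{p}$, then in $N'$ it produces the same tokens in $\inp_N$ instead of $p$; for each color the old $N$-marking reachable from $[\inp_N^{k}]$ together with the freshly added source tokens is reachable from $[\inp_N^{k'}]$ with $k'=m'(p)(\vec{\cname{id}})$, by firing the same sequence while leaving the extra source tokens idle. If $t\in\post{p}$, the consumed tokens need not yet sit on $\outp_N$; here I use \emph{weak termination} of generalized soundness to first fire a sequence of hidden $T_N$-transitions routing, for every color $\vec{\cname{id}}$, the whole $M_N^{\vec{\cname{id}}}$-part to $[\outp_N^{\,m(p)(\vec{\cname{id}})}]$. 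Enabledness of $t$ in $L$ then guarantees enough tokens on $\outp_N$, so $t$ fires, realizing a weak step $M\xdasharrow{~(t,\psi)~}M'$ whose $N$-part for each color equals $[\outp_N^{\,m'(p)(\vec{\cname{id}})}]$, which lies in the required reachability set (the self-loop case, $t\in\pre{p}\cap\post{p}$, combines the two subcases).

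Conversely, each step of $N'$ is matched by $L$. A firing of a hidden transition $u\in T_N$ (labelled $\tau$) leaves $m$ and every $m(p)(\vec{\cname{id}})$ untouched and merely advances one $M_N^{\vec{\cname{id}}}$ by a single $N$-firing, so reachability from $[\inp_N^{\,m(p)(\vec{\cname{id}})}]$ is preserved and $(m,M')\in Q$ with $L$ idling. A visible transition lies in $\transitions_L$; the case $t\in\pre{p}$ is symmetric to the above. The case $t\in\post{p}$ is the crux: $t$ consumes, for each color, $j_{\vec{\cname{id}}}$ tokens from $\outp_N$, i.e.\ $M_N^{\vec{\cname{id}}}\geq[\outp_N^{\,j_{\vec{\cname{id}}}}]$. \emph{Proper completion} of generalized soundness forces $j_{\vec{\cname{id}}}\leq m(p)(\vec{\cname{id}})$, so $t$ is enabled in $(L,m)$ and the same step can be taken there; to close $Q$ I need that $M_N^{\vec{\cname{id}}}-[\outp_N^{\,j_{\vec{\cname{id}}}}]$ is reachable from $[\inp_N^{\,m(p)(\vec{\cname{id}})-j_{\vec{\cname{id}}}}]$.

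The last reachability fact---that extracting a completed token at $\outp_N$ amounts to having started one source token short---is the token-extraction property of generalized sound WF-nets, and I expect it to be the main obstacle; it is exactly the structural ingredient underlying the refinement theorem of~\cite{HeeSV03}, and I would either cite it directly or derive it from $k$-soundness for all $k$. Everything else is routine colorwise bookkeeping, strictly parallel to Lemma~\ref{lemma:weak-bisim-wrappednet}. Assembling the two directions shows that $Q$ is a rooted weak bisimulation, which yields $\tsys{L,m_0}\approx^{r}\hide{T_N}(\tsys{L\oplus_p N,m_0})$.
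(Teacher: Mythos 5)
Your proposal follows essentially the same route as the paper's own proof: the same coupling relation (the embedded $N$-part of a marking, tracked per identifier vector, must be reachable from as many source tokens as $p$ holds in $L$), the same case analysis on how a transition relates to $p$, weak termination of generalized soundness to route tokens to $\outp_N$ before matching a collector firing, and proper completion to obtain enabledness of collectors in $L$ in the reverse direction. The token-extraction fact you flag as the remaining obstacle---that removing $[\outp_N^{\,j}]$ from a marking reachable from $[\inp_N^{\,k}]$ yields a marking reachable from $[\inp_N^{\,k-j}]$---is precisely the step the paper's proof passes over silently (its reverse direction for $t\in\transitions_L$ is dismissed ``by analogy with the previous cases''), so your treatment is, if anything, more explicit than the published one about where generalized soundness is really being used.
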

\begin{proof}
For simplicity, we start by defining a type extension of $N$ as a \tpnid $N'= (\places_{N'}, \transitions_{N'}, \flow_{N'}, \alpha, \beta)$, where $\type(\vec{v})=\vec\lambda$, $\alpha(p) =\vec{\lambda}$ for all places $p \in \places_N$, and $\beta(f) = \vec{v}^{W(f)}$ for all $f \in \flow_N$, and $\beta((t_e,\inp))=\beta((\outp, t_c)) = [\vec{v}]$.

\medskip
To prove bisimilarity, we define $R = \set{ (M, M'+m) \mid M \in \reachable{L}{M_0}, M'\in\A(M), m\in\B(M)}$
	where
	\begin{itemize}
\itemsep=0.9pt
	\item $\A(M):=\set{M' \mid M'\in\reachable{L}{M_0}, M'(p)=\emptyset \text{ and } \forall q\in P_L\setminus\set{p}: M'(q)=M(q) }$, 
	and
	\item $\B(M):=\set{m\mid m\in\reachable{N'}{m''_0}, m''_0(\inp)=M(p) \text{ and }  \forall q\in P_{N'}\setminus\set{p}: m''_0(q)=\emptyset}$. 
	\end{itemize}
	Intuitively, $\B(M)$ is essentially the $\vec{\lambda}$-typed set of reachable markings of $N$ for a fixed $k$-tokens in $\inp$, where such tokens in $N'$ are provided by $M$ (more specifically, by $M(p)$).

\vspace*{1.8mm}	\noindent($\Rightarrow$) Let $(M,M'+m)\in R$ and $\fire{M}{t,\psi}{\bar M}$.
	We need to show that there exists $\bar{M}'$ and $\bar{m}$ s.t. 	
	$(M'+m) \xdasharrow{~(t,\psi)~} (\bar{M}'+\bar{m})$ and $(\bar{M},\bar{M}'+\bar{m})\in R$. To this end, we consider the following cases.
	\begin{enumerate}[(i)]
		\item If $t\not\in\pre{p}$ (or $p\not\in\pre{t}$), then $M'(q)=M(q)$ for all $q\in P_L\setminus{p}$ (follows from the definition of $\A(M)$), and thus $t$ is also enabled in $M'(q)$ and $M(q)\geq \beta((q,t))$. Then by the firing rule there exists $\bar{M}'$ s.t. $\fire{(M'+m)}{t,\psi}{(\bar{M}'+m)}$ and $\bar{M}'(q)=\bar{M}(q)$ for all $q\in\places_L$.
	Thus, $(M',\bar{M}'+m)\in R$.
		\item If $t\in\pre{p}$, then, since $M'(q)=M(q)$ for all $q\in P_L\setminus{p}$, $t$ must be enabled in $M'$. by the refinement construction from Definition~\ref{def:refinement}, $t$ is enabled regardless the marking of $\inp$.
		By the firing rule, there exists $\bar{M}'$ and $\bar{m}$ such that
		$\fire{(M'+m)}{t,\psi}{(\bar{M}'+\bar{m})}$, $\bar{M}'(q)=\bar{M}(q)$ for all $q\in\places_L\setminus{p}$, and $\bar{m}(\inp)=\bar{M}(p)+m(\inp)$.
		Moreover, by the definition of $R$, $\inp$ can be marked with arbitrarily many tokens from $M(p)$.  Thus,  $(M',\bar{M}'+\bar{m})\in R$.
		\item  If $p\in \pre t$ and  $\rho_\psi(\beta((p,t)))=\vec{\cname{id}}$, then,
	given that $N$ is generalized sound and by applying Lemma~\ref{lemma:weak-bisim-wrappednet},  there exists a firing sequence $\eta$ for $N'$ that carries identifier $\vec{\cname{id}}$ to $\outp$. This means that, by construction, $M'(q)=M(q)$, for all $q\in\places_L$, and $m(\outp)(\vec{\cname{id}})=M(p)(\vec{\cname{id}})$. Hence, $t$ is enabled in $(M'+m)$ under binding $\psi'$ that differs from $\psi$ everywhere but on place $\outp$. By the firing rule, there exists $(\bar{M}'+\bar{m})$ s.t. $\fire{(M'+m)}{t,\psi'}{(\bar{M}'+\bar{m})}$ and $(M,\bar{M}'+\bar{m})\in R$.
	\item If $t\in \pre{p}\cap\post{p}$, then $M(p)\neq \emptyset$ (since $\fire{M}{t,\psi}{\bar M}$).
	Assume that $\rho_\psi(\beta((t,p)))=\vec{\cname{id}_1}$
	and $\rho_\psi(\beta((p,t)))=\vec{\cname{id}_2}$.
	By construction, we know that $M'(q)=M(q)$ for all $q\in\places_L$ and $m$ marks some of the places in $P_N$. Since $N$ is generalized sound and by Lemma~\ref{lemma:weak-bisim-wrappednet}, we can safely assume that $\vec{\cname{id}_2}\in m(p)$ (otherwise, we can apply the reasoning from the previous  case). Then it easy to see that, by construction, $t$ is enabled in $(M'+m)$ under the same binding $\psi$. Thus, by the firing rule there exists $\bar{M}'$ s.t. $\fire{(M'+m)}{t,\psi}{(\bar{M}'+\bar{m})}$, where $\bar{M}'(q)=\bar{M}(q)$ for all $q\in\places_L$, $\bar{m}(\inp)=m(\inp)+[\vec{\cname{id}_1}^{\beta((t,\inp))}]$, $\bar{m}(\outp)=m(\outp)-[\vec{\cname{id}_2}^{\beta((\outp,t))}]$ and $m(w)=\bar{m}(w)$ for all $w\in\places_N$.
	It is easy to see that $(M',\bar{M}'+m)\in R$.
	\end{enumerate}

\smallskip	\noindent($\Leftarrow$) Let $\fire{(M'+m)}{t,\psi}{(\bar{M}'+\bar{m})}$ and $(M,M'+m)\in R$.
	If $t\in T_L$, then this can be proven by analogy with the previous cases (that is, we need to consider all possible relations of $t$ and $p$). If $t\in T_N$, then  $\fire{(M'+m)}{t,\psi}{(M'+\bar{m})}$ and $(\bar{M},M'+\bar{m})\in R$, where  $\bar{M}(q)=M'(q)$, for all $q\in P_L$, and $\bar{M}(p)=M(p)$.
\end{proof}

As a consequence of the bisimulation relation, the refinement is identifier sound and live if the original net is identifier sound.

\begin{theorem}
	Let  $(L,M)$ be a marked \tpnid and $N$ be a generalized sound WF net. Then $(L, M)$ is identifier sound and live iff $(L\oplus N, M)$ is identifier sound and live.
\end{theorem}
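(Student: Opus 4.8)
The plan is to derive everything from the preceding workflow-refinement lemma, which already supplies a rooted weak bisimulation $\tsys{L,M} \approx^{r} \hide{T_N}(\tsys{L \oplus_p N, M})$ in which the refining net's transitions $T_N$ are exactly the hidden ($\tau$) actions. First I would record two structural facts that make this bisimulation respect the object level. Every $t \in T_N$ is inscribed on all its arcs with the same variable vector $\vec\mu$ and touches only places of the single type $\alpha_L(p)$; hence $\newvar t = \delvar t = \emptyset$, so $t$ is neither an emitter nor a collector of any type, and firing it leaves the identifier set $\id{\cdot}$ unchanged. Consequently $\type_\Lambda(L \oplus_p N) = \type_\Lambda(L)$ and, for every $\lambda$, the sets $E_{L\oplus_p N}(\lambda)$ and $C_{L\oplus_p N}(\lambda)$ coincide with those of $L$ (the boundary producers and consumers of $p$ retain their emitting and collecting variables by Definition~\ref{def:refinement}). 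Second, the relation $R$ built in that lemma relates a marking $M$ of $L$ to a marking of $L\oplus_p N$ whose $N$-part merely redistributes, among the internal places of $N$, exactly the identifiers that $M$ stored in $p$; since $N$ neither creates nor destroys them, every $R$-related pair carries the same identifier set.

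With these two facts I would transfer identifier soundness in both directions. For weak $\lambda$-termination, given a reachable marking of one net containing an identifier $\cname{id}$, I pass through $R$ to the equally-identified related marking of the other net, invoke weak $\lambda$-termination there to drive $\cname{id}$ out, and pull the witnessing run back across the (two-sided) bisimulation; the intervening $\tau$-steps are harmless precisely because $T_N$-firings preserve $\id{\cdot}$. For proper $\lambda$-completion the same idea applies to collector firings: a collector of $L \oplus_p N$ lies in $C_L(\lambda) \subseteq T_L$ and is therefore visible, so its firing is matched by the corresponding collector firing in $L$, and since $T_N$-firings can neither add nor remove identifiers the marking reached \emph{immediately} after the collector has the same identifier content as its $R$-image in $L$. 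Thus completion in $L$ forces completion in $L \oplus_p N$, and conversely. Combining the two gives $\lambda$-soundness for every $\lambda \in \type_\Lambda(\cdot)$, i.e.\ identifier soundness, in both directions.

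Liveness splits along the two kinds of transitions. Transitions in $T_L$ are visible, so their liveness transfers directly across the rooted weak bisimulation in both directions, and since $T_L \subseteq T_{L\oplus_p N}$ the converse (liveness of the refinement restricts to liveness of $L$) is immediate. The delicate part, and the step I expect to be the main obstacle, is the liveness of the hidden internal transitions $T_N$, about which the bisimulation by itself says nothing. Here I would use that $N$ is generalized sound, hence in particular $1$-sound and therefore quasi-live: from any reachable marking of $L \oplus_p N$, liveness of the boundary producer in $\pre p$ (already transferred from $L$) injects a fresh case-token into $\inp_N$, and then Lemma~\ref{lemma:weak-bisim-wrappednet} together with quasi-liveness of $N$ routes that token through $N$ until the target transition of $T_N$ becomes enabled. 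The care needed is concentrated exactly in this injection-and-routing argument: one must argue via Lemma~\ref{lemma:weak-bisim-wrappednet} that a single identifier's flow through the refining net is an independent copy of $N$'s behaviour, so that quasi-liveness of $N$ can be exploited without interference from the tokens of other cases.
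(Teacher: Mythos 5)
Your proposal is correct, and at the top level it takes the same route as the paper: the paper states this theorem as an immediate consequence of the preceding workflow-refinement lemma ($\tsys{L,m_0} \approx^{r} \hide{T_N}(\tsys{L\oplus_p N,m_0})$) and gives no further proof at all. What you add is exactly what that one-line justification glosses over, and both additions are needed. First, identifier soundness speaks about identifiers in markings rather than about transition labels, so the bisimulation transfers it only together with your two structural observations: every $t\in T_N$ carries the same vector $\vec\mu$ of type $\alpha_L(p)$ on all of its arcs and has nonempty pre- and postsets, hence $\newvar{t}=\delvar{t}=\emptyset$ and $\tau$-steps preserve $\id{\cdot}$; and the relation $R$ of the lemma relates markings with equal identifier sets. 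With these facts, your transfer of weak $\lambda$-termination and proper $\lambda$-completion in both directions (using that collectors lie in $T_L$ and are matched by single visible firings with the same binding, since $\tsys{L,m_0}$ has no $\tau$-transitions) goes through. Second, you correctly isolate the genuine gap in the paper's implicit argument: a weak bisimulation says nothing about liveness of the hidden transitions in $T_N$, and your repair --- transfer liveness of the producers in $\pre{p}$ from $L$, inject a token into $\inp$, then use generalized soundness (in particular quasi-liveness) of $N$ together with Lemma~\ref{lemma:weak-bisim-wrappednet} to route that single identifier through $N$ undisturbed by tokens carrying other identifiers --- is precisely the missing piece. One shared caveat: the injection step presupposes $\pre{p}\neq\emptyset$; if $p$ were an isolated unmarked place, the refinement would contain dead transitions and liveness would genuinely fail, so this degenerate case is implicitly excluded both by your argument and by the paper's statement.
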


The refinement rule allows to combine the approaches discussed in this section.
For example, a designer can first design a net using the construction rules of Section~\ref{sec:typedJN}, and then design generalized WF-nets for specific places.
In this way, the construction rules and refinement rules ensure that the designer can model systems where data and processes are in resonance.

\newcommand{\obj}[1]{#1^o}
\newcommand{\res}[1]{#1^r}
\newcommand{\restype}{\eta}

\section{Enriching PNIDs with resources}
\label{sec:soundness:with:resources}
In the previous section we discussed pattern-based correctness criteria, which allow to construct PNID models that are sound by design. We now consider arbitrary PNIDs, and study how they can be enriched with \emph{resources}, introducing a dedicated property, called \emph{conservative resource management}, which captures that the net suitably employs resources. Following a similar approach, in spirit, to that of Section~\ref{sec:soundness:by:construction}, we define a modelling guideline, called \emph{resource closure}, which takes as input a PNID and indicates how to enrich it with resources through a well-principled approach. We then show that, by construction, if the input PNID is sound, then all its possible resource closures do not only maintain soundness, but they also guarantee that resources are conservatively managed. In addition, we prove that such resource closures are also bounded, and discuss the implications on the analysis of this class of PNIDs.

\subsection{Resource-aware PNIDs}

As customary for Petri nets, we model resource types as (special) places. However, differently from typical approaches like \cite{HSV06,LoBJ22,MontaliR16}, where resources are represented as indistinguishable (black) tokens populating such places, we assign identifiers to resources. This allows one to explicitly track how resources participate to the execution, and in particular how they relate to the different objects. At the same time, this poses a conceptual question: are different copies of the same identifier in distinct tokens representing different actual resources, or distinct references to the same resource? We opt for the latter approach, as it is the one that fully complies with this \emph{named approach} to resource management. As a consequence of this choice, we blur in the section the distinction between resource and resource identifier, using the two terms interchangeably.

\medskip
Technically, from now on we assume that $\Lambda$ is partitioned into two sets: $\obj{\Lambda}$ for object types, and $\res{\Lambda}$ for resource types. Given a resource type $\restype \in \res{\Lambda}$, we call its identifiers \emph{($\restype$-)resources}. We then simply define a \emph{resource-aware} \tpnid as a \tpnid with some distinguished places, each being of a certain resource type.

\begin{definition}[Resource-aware \tpnid] A \tpnid $N=(\places,\transitions,\flow,\alpha,\beta)$ is \emph{resource-aware} if there exists at least one place $p \in \places$ such that $\alpha(p) \in \res{\Lambda}$. We refer to  the non-empty subset $\res{\places} = \set{p \in \places \mid \alpha(p) \in \res{\Lambda}}$ of $\places$ as the set of \emph{resource places} of $N$.
\end{definition}

The initial marking of a resource-aware \tpnid hence identifies which resources are available per resource type. Consistently with the named approach to resources, every resource should be present at most once in the initial marking.

Places typed by the combination of one or more object types and one resource type are used to establish relations between (tuples of) objects and corresponding resources, which we can interpret as \emph{resource assignments}. For example, given an object type $\mathit{Order}$ and a resource type $\mathit{Clerk}$, a token carrying pair $\tup{o,c}$ with $o \in \mathit{Order}$ and $c \in \mathit{Clerk}$ represents that order $o$ is assigned to clerk $c$.

\medskip
In an unrestricted \tpnid, resources and resource assignments can be freely manipulated, generating new resources along the execution, assigning the same resource to multiple objects, and establishing arbitrary relations between resources and objects/other resources. To determine whether a \tpnid employs resources properly, we hence introduce a dedicated property that, intuitively, combines two requirements:\smallskip

\begin{compactitem}
\item \emph{Resource preservation} - only resources present in the initial marking can be used throughout the execution;
\item \emph{Resource exclusive assignment} - in a given marking, each resource can be assigned to at most one object, indicating that the resource is currently responsible for that tuple only.\footnote{An analogous treatment of resources can be defined over tuples of objects, instead of single objects.}
\end{compactitem}\medskip

\noindent The first requirement dictates that no resource can be newly generated during the execution; the second one stipulates that at every step, a resource can be responsible for at most one object, possibly carried by multiple tokens.\smallskip

We formalize these two requirements as follows.

\begin{definition}[Conservative resource management]
\label{def:conservative-resource-management}
A resource-aware marked \tpnid $(N,m_0)$ with $N=(\places,\transitions,\flow,\alpha,\beta)$ is \emph{managing resources conservatively} if the following two conditions hold.\smallskip

\begin{compactitem}
\item \emph{Resource preservation}: for every marking $m\in\reachable{N}{m_0}$, resource type $\restype \in \res{\Lambda}$, and resource $r \in I(\restype) \cap \id{m}$, we have that $r \in \id{m_0}$.
\item \emph{Resource exclusive assignment}: for every marking $m\in\reachable{N}{m_0}$, resource type $\restype \in \res{\Lambda}$, and resource $r \in I(\restype) \cap \id{m}$, there is exactly one tuple $\vec{r}\in \supp{m}$ s.t.~either $\vec{r} = (r)$ or $\vec{r} = \tup{o,r}$ for some object $o$. 
\end{compactitem}

\end{definition}
Consider a resource for $r$ present in the initial marking $\marking_0$, and a reachable marking $\marking$. Two observations are in place regarding Definition~\ref{def:conservative-resource-management}. First, resource exclusive assignment requires that at most one tuple of the form $(o,r)$ exists in the support of $\marking$, to express that multiple tokens carrying the same pair $(o,r)$ may indeed exist, while it is not possible to have in the same marking a different tuple of the form $(o_2,r)$ for some $o_2 \neq o$ (which would indicate the simultaneous assignment of $r$ to $o$ and $o_2$). Second, an active resource $r$ in $\marking$ can then appear in one and only one of the following forms: either\smallskip

\begin{compactitem}
\item $(o,r)$ for some object $o$ -- indicating that $r$ is currently assigned to $o$), or
\item $(r)$ - indicating that $r$ is active and not assigned to any object.
\end{compactitem}

\begin{figure}[!h]
\vspace*{-3mm}
\centering
	\begin{subfigure}{.32\textwidth}
		\centering
		\includegraphics[width=\textwidth]{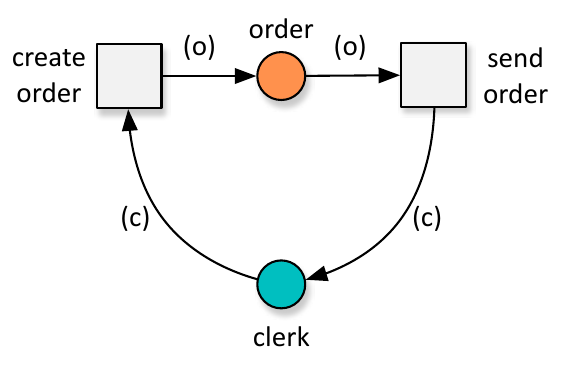}
        \caption{Resource not preserved}\label{fig:resource-not-preserved}
	\end{subfigure}
	\begin{subfigure}{.32\textwidth}
		\centering
		\includegraphics[width=\textwidth]{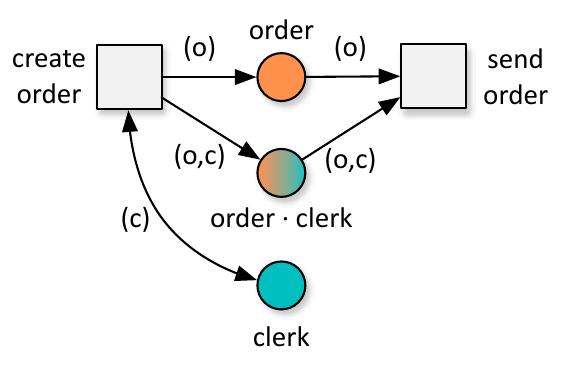}
        \caption{Resource not exclusive}\label{fig:resource-not-exclusive}
	\end{subfigure}
	\begin{subfigure}{.32\textwidth}
		\centering
		\includegraphics[width=\textwidth]{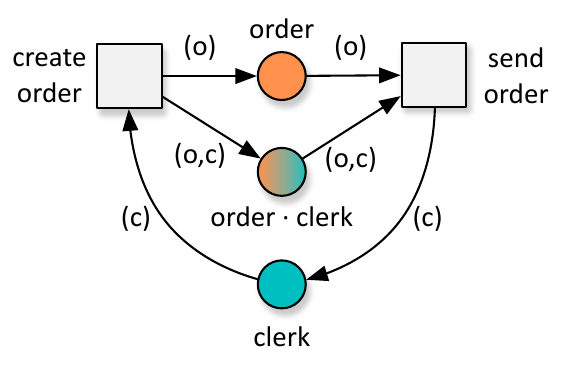}
        \caption{Conservative resource}\label{fig:resource-conservative}
	\end{subfigure}\vspace*{-2mm}
\caption{Three examples of resource-aware \tpnid{s}. Nets (a) and (b) are not managing resources conservatively, as they respective violate the property of resource preservation and that of resource exclusive assignment. Net (c) is instead a positive example that satisfies both properties.}\label{fig:resource-management}\vspace*{-1mm}
\end{figure}

\begin{example}
\label{ex:resource-management} Figure \ref{fig:resource-management} shows three examples of resource-aware \tpnid{s}, where place \emph{order} contains objects, and place \emph{clerk} resources.

\medskip
The \tpnid in Figure \eqref{fig:resource-not-preserved} attempts to model a setting where every order is managed by a clerk. The main issue here is that there is no information stored in the net about which clerk handles which order. In fact, starting from a marking that indicates which clerks are available, the net  violates the property of resource preservation, as when \emph{send order} fires, it brings into the \emph{clerk} place a freshly generated resource (not matching the one previously consumed in \emph{create order} - recall, in fact, that the scope of variables is that of a single transition).

The \tpnid in Figure \eqref{fig:resource-not-exclusive} explicitly keeps track of the assignments of clerks to orders in a dedicated ``synchronization'' place. Starting from a marking that indicates which clerks are available, it satisfies resource preservation, as no new clerk identifier is generated, but it violates the property of resource exclusive assignment, since two different order creations may lead to select the same resource twice, assigning it to two different orders.

The \tpnid in Figure \eqref{fig:resource-conservative} properly handles the assignment of clerks to orders. Starting from a marking that indicates which clerks are available, every time a new order is created, an existing clerk is exclusively assigned to that order. The fact that the same clerk is not reassigned is guaranteed by the fact that the clerk is consumed upon creating the order, and recalled in the assignment place. When the order is sent, its exclusively assigned clerk is released back into the place of (available) clerks, and can be later exclusively assigned to a different order.
\end{example}

By recalling that \tpnid{s} evolving tokens that carry pairs of identifiers are Turing-powerful (see \cite{GGMR22}, and also the proof of Theorem~\ref{thm:id-soundness-undecidable}), and hence every non-trivial property defined over them is undecidable to check, we obtain the following.

\begin{remark}
Verifying whether a marked \tpnid manages resources conservatively is in general undecidable.
\end{remark}

To mitigate this negative result, we introduce an approach that drives the modeller in enriching an input \tpnid via resources following a well-principled approach. The approach generalizes the idea introduced in \cite{MontaliR16} to the more sophisticated case of \tpnid{s}, and does so by following the modelling strategy used in Figure~\ref{fig:resource-conservative}. In particular, it aims at capturing the following modelling principles: \medskip

\begin{compactenum}
\itemsep=1.3pt
\item every object type $\lambda$ is associated to a dedicated resource type $\restype_\lambda$;
\item each such resource type is used in two places - one just typed with $\restype_\lambda$, to indicate which resources of that type are currently available, the other typed by $\lambda \cdot restype_\lambda$, to keep track of which resources are currently assigned to which objects;
\item every object of type $\lambda$ gets assigned a resource of type $\restype_\lambda$ upon creation, and until the consumption of the object, its resource cannot be assigned to any other object;
\item transitions applied to an object may (or may not) require its resource in isolation from the others (if so, implicitly introducing serialization);
\item upon consumption of an object, its resource may either be permanently consumed as well, or freed and become again available to further assignments.
\end{compactenum}\medskip

Technically, we substantiate these intuitive principles through the notion of resource closure.

\begin{definition}[Resource closure]
\label{def:closure}
Let $N=(\places,\transitions,\flow,\alpha,\beta)$ and
$N'=(\places',\transitions',\flow',\alpha',\beta')$
be two \tpnid{s}, and $\lambda\in\type_\Lambda(N)$ be an object type. We say that $N'$ is a
 \emph{$\lambda$-resource closure} of $N$ if the following conditions hold: \medskip

\begin{compactenum}
\item $\places' = \places\cup\set{p_r,p_s}$, where and $p_r$ and $p_s$ are respectively called the \emph{resource} and \emph{assignment} places, and  $p_r,p_s\not\in\places$.\smallskip
\item
$
\alpha'=
    \alpha
    \cup
    \set{p_r\mapsto \restype \mid \restype \in \res{\Lambda} \setminus  \type(N)} \cup \set{p_s\mapsto \lambda\concat \alpha'(p_r)}$ extends $\alpha$ by typing $p_r$ with a resource type $\restype$ not already used in $N$, and $p_s$ with the combination of the object type $\lambda$ and the resource type $\restype$ of $p_r$.\smallskip
\item $F' = F \cup F_r^{out} \cup F_r^{in} \cup F_s^{in} \cup F_s^{syn} \cup F_s^{out}$, where: \smallskip

\begin{compactenum}
\item $F_r^{out} = \set{(p_r,t)\mid t\in E_N(\lambda)}$, the \emph{output resource flow relation}, indicates that every emitter transition for $\lambda$ consumes a resource from $p_r$;
\item $F_r^{in} \subseteq \set{(t,p_r)\mid t\in C_N(\lambda)}$, the \emph{input resource flow relation}, indicates that every collector transition for $\lambda$ \emph{may} return a resource to $p_r$;
\item $F_s^{in} = \set{(t,p_s)\mid t\in E_N(\lambda)}$, the \emph{input assignment flow relation}, indicates that every emitter transition generates an assignment in $p_s$;
\item $F_s^{out} = \set{(p_s,t)\mid t\in C_N(\lambda)}$, the \emph{output assignment flow relation}, indicates that every collector transition consumes an assignment from $p_s$;
\item $F_s^{syn} \subseteq\set{(p_s,t),(t,p_s)\mid t\in T\setminus (E_N(\lambda)\cup C_N(\lambda))}$, the \emph{synchronization assignment flow relation}, indicates that every ``internal'' (i.e., non-emitting and non-consuming) transition for $\lambda$ \emph{may} check for the presence of an assignment in $p_s$.
\end{compactenum} \smallskip

\item $\beta'$ is the extension of $\beta$ satisfying the following conditions:\smallskip

\begin{compactenum}
	\item $\beta'(a,b)=\beta(a,b)$, if $(a,b)\in\flow$.
	\item For every $\lambda$-emitter $t \in E_N(\lambda)$, we define $\beta'(p_r,t)$ and $\beta'(t,p_s)$ as described next. Let $X = \set{x_1,\ldots,x_n}$ be the set of distinct variables of type $\lambda$ mentioned in the inscriptions of outgoing arcs of $t$, that is, $X = \set{x | x \in \beta(t,p) \text{ for some } p \in P}$. These variables denote the $n$ distinct objects of type $\lambda$ created upon firing $t$. We then need to consume $n$ distinct resources of type $\restype$ and establish the corresponding assignments: $\beta(p_r,t) = (r_1)+ \ldots +(r_n)$ and $\beta(t,p_s) = (x_1,r_1) + \ldots + (x_n,r_n)$, where $r_1,\ldots,r_n$ are $n$ distinct (resource) variables of type $\restype$.
    \item A symmetric approach is used to define $\beta'(p_s,t)$ and (in case $(t,p_r) \in F_r^{in}$ is defined) $\beta'(t,p_r)$ for every $\lambda$-collector $t \in C_N(\lambda)$ (based on the incoming arcs of $t$).\vspace{1mm}
    \item For every (internal) transition $t\in T\setminus (E_N(\lambda)\cup C_N(\lambda))$ such that $\set{(p_s,t),(t,p_s)} \subseteq F_s^{syn}$, we define $\beta'(p_s,t)$ and $\beta'(t,p_s)$ as described next. Let $X = \set{x_1,\ldots,x_n}$ be the set distinct variables of type $\lambda$ mentioned in the inscriptions of incoming arcs of $t$, that is, $X = \set{x | x \in \beta(p,t) \text{ for some } p \in P}$. These variables denote the $n$ distinct objects of type $\lambda$ accessed upon firing $t$. We then need to check for the presence of the $n$ distinct assigned resources to this objects, by defining $\beta'(p_s,t) = \beta'(t,p_s) = (x_1,r_1) + \ldots + (x_n,r_n)$, where $r_1,\ldots,r_n$ are $n$ distinct (resource) variables of type $\restype$.\vspace{0.5mm}
\end{compactenum}\medskip
\end{compactenum}

A \tpnid is called a \emph{(full) resource closure of $N$} if it is obtained by recursively constructing, starting from $N$, $\lambda_i$-resource closures for every $\lambda_i\in\type_\Lambda(N)$.
\end{definition}

The closure(s) of a marked \tpnid is defined by closing the \tpnid as per Definition~\ref{def:closure}, and enriching the initial marking by populating the resource places with some resources.

\begin{definition}
\label{def:marked-closure}
Let $(N,m_0)$ and
$(N',m'_0)$
be two marked \tpnid{s}, and $\lambda\in\type_\Lambda(N)$ be an object type. We say that $(N',m'_0)$ is a
 \emph{$\lambda$-marked resource closure} of $N$ if the following conditions hold:

\begin{enumerate}
\item $N'$ is a $\lambda$-resource closure of $N$;
\item $m'_0$ extends $m_0$ by assigning to the resource place $p_r$ introduced in $N'$ a \emph{finite subset} of identifiers of type $\type(p_r)$.
\end{enumerate}

A marked \tpnid is called a \emph{(full) marked resource closure of $(N,m_0)$} if it is obtained by recursively constructing, starting from $(N,m_0)$, $\lambda_i$-marked resource closures for every $\lambda_i\in\type_\Lambda(N)$.
\end{definition}

Notice that, in Definition~\ref{def:marked-closure}, we obey to the named approach described in the opening of this section by assigning to the resource place a set (and not a multi-set) of resources.

\medskip
It is easy to see that, for a \tpnid $N$ and a type $\lambda$, there are only finitely many distinct $\lambda$-closures of $N$, obtained by choosing which $\lambda$-consumers actually return resources in the resource place of the closure, and which internal $\lambda$-transitions access the assignment place of the closure. Marked \tpnid{s} defined on these finitely many distinct closures only differ by the set of identifiers they initially assign to each resource place.

\begin{example}
\label{ex:simple-resource-closure}
Figure~\eqref{fig:resource-conservative} shows an (order-)resource closure for the \tpnid manipulating orders via the \emph{create order} and \emph{send order} transitions, and the orange place in between. In particular, the resource type of clerk is chosen for the closure. The green place (in fact, typed \emph{clerk}) at the bottom of the picture is the resource place of the closure, while the mixed-colored place (in fact, typed \emph{order$\cdot$clerk}) is the assignment place of the closure.
The inscriptions attached to the input/output arcs connecting these two places to the emitter and consumer transitions for orders indicate how identifiers are being  matched when consuming/returning resources, while recalling the objects they get assigned to.
\end{example}

\begin{figure}[t]
\centering
	\begin{subfigure}{.6\textwidth}
		\centering
		\includegraphics[width=\textwidth]{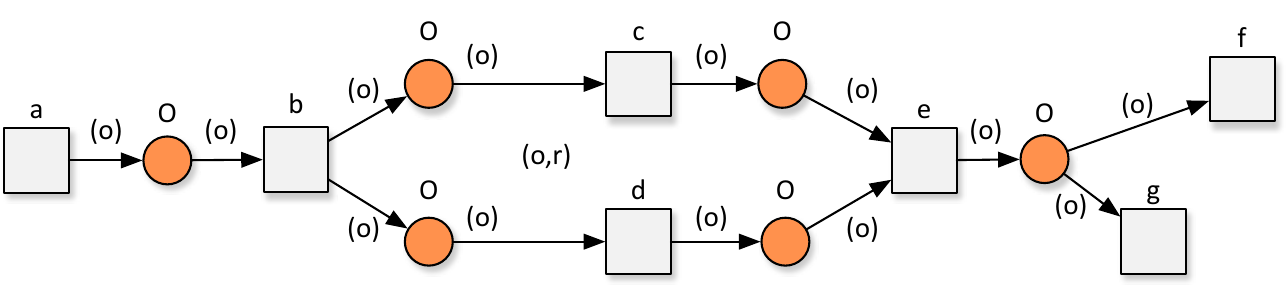}
        \caption{A sound \tpnid manipulating an object type $O \in \obj{\Lambda}$}\label{fig:sound-net}
	\end{subfigure}
	\begin{subfigure}{.6\textwidth}
		\centering
		\includegraphics[width=\textwidth]{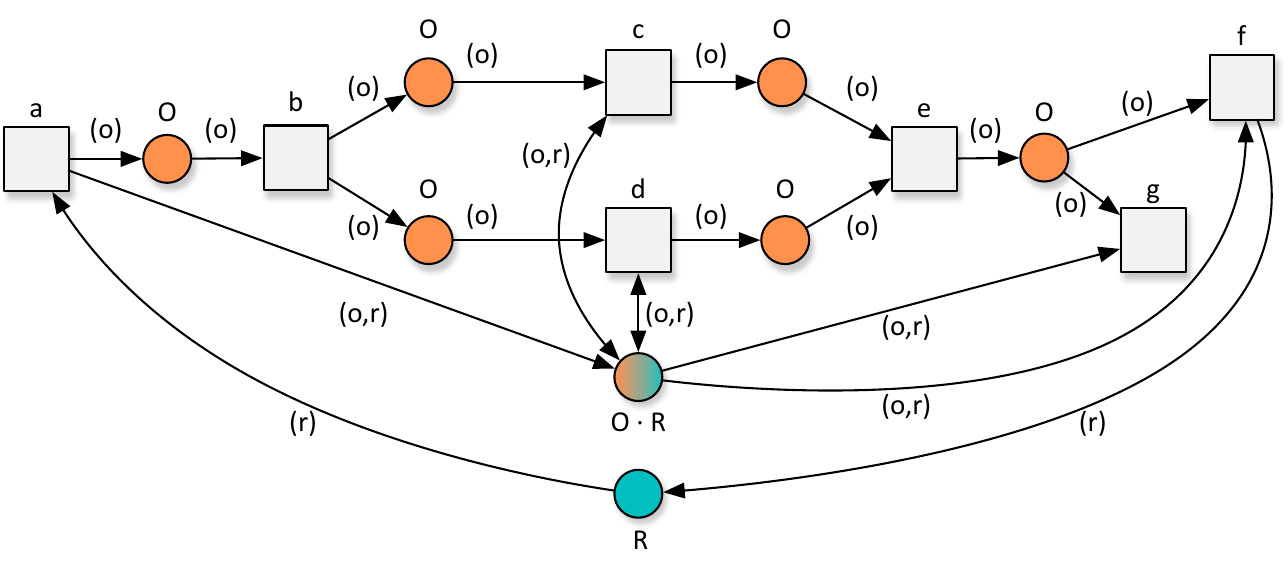}
        \caption{An $O$-resource closure of the \tpnid (a) \protect\label{fig:sound-net-closure-1}}
	\end{subfigure}
	\begin{subfigure}{.6\textwidth}
		\centering
		\includegraphics[width=\textwidth]{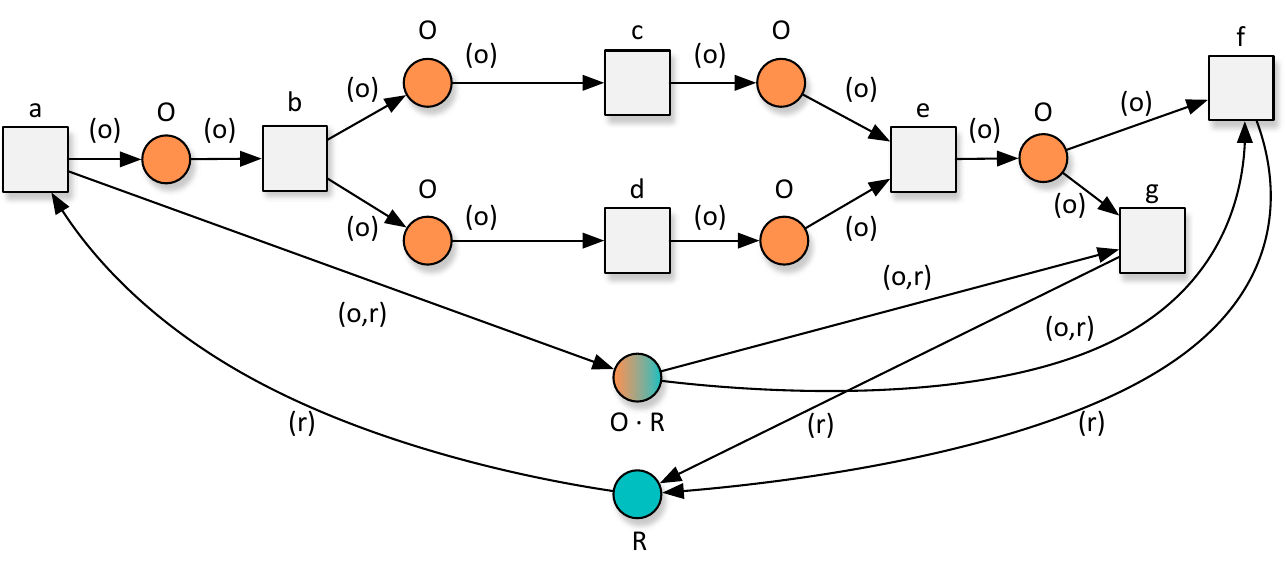}
        \caption{Another $O$-resource closure of the \tpnid (a) \protect\label{fig:sound-net-closure-2}}
	\end{subfigure}
\caption{An identifier-sound \tpnid and two alternative resource closures}\label{fig:resource-closures}
\end{figure}

\begin{example}
\label{ex:resource-closures}
Figure~\eqref{fig:sound-net} illustrates a \tpnid with a single object type $O$, manipulated using sequential and concurrent transitions, and finally consumed through one of two (mutually exclusive) consumers. Two resource closures of this net are shown in Figures~\eqref{fig:sound-net-closure-1} and~\eqref{fig:sound-net-closure-2}, using $R$ as resource type.

The closure in Figure~\eqref{fig:sound-net-closure-1} shows that for an object $o$ that is consumed by transition $f$, its assigned resource is returned to the resource place, becoming again available for a further assignment. If $o$ is instead consumed by transition $g$, the resource is also consumed (fetching it from the assignment place without returning it to the resource place). In addition, the two transitions $c$ and $d$, which are concurrent in the original \tpnid of Figure~\eqref{fig:sound-net}, are now declared to require the resource \emph{in isolation}, therefore implicitly requiring serialization (in whatever order).

The closure in Figure~\eqref{fig:sound-net-closure-2} depicts a slightly different scenario. On the one hand, both consumers now return the assigned resource upon consuming an object. On the other hand, no internal transition of the original \tpnid are linked to the assignment place, hence transitions $c$ and $d$ continue to be truly concurrent even after the application of the resource closure.
\end{example}

Examples~\ref{ex:simple-resource-closure} and~\ref{ex:resource-closures} show different examples of well-behaved resource closures, which indeed technically substantiate the informal modelling principles listed above and, even more, actually satisfy the property of conservative resource management, as per Definition~\ref{def:conservative-resource-management}. A natural question is whether this holds when resource closure is applied to an arbitrary input \tpnid. It is easy to show on even very minimalistic examples that provide a negative answer to this question.

\begin{figure}[!h]
\vspace*{-2mm}
\centering
	\begin{subfigure}{.49\textwidth}
		\centering
		\includegraphics[width=0.97\textwidth]{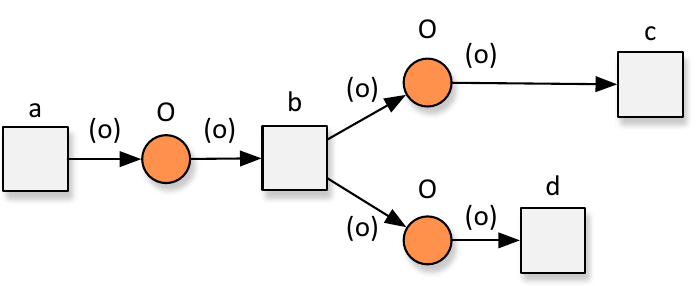}
        \vspace*{2.3cm}
        \caption{An unsound \tpnid with an object type $O \in \obj{\Lambda}$}\label{fig:unsound-net}
	\end{subfigure}
	\begin{subfigure}{.49\textwidth}
		\centering
		\includegraphics[width=0.97\textwidth]{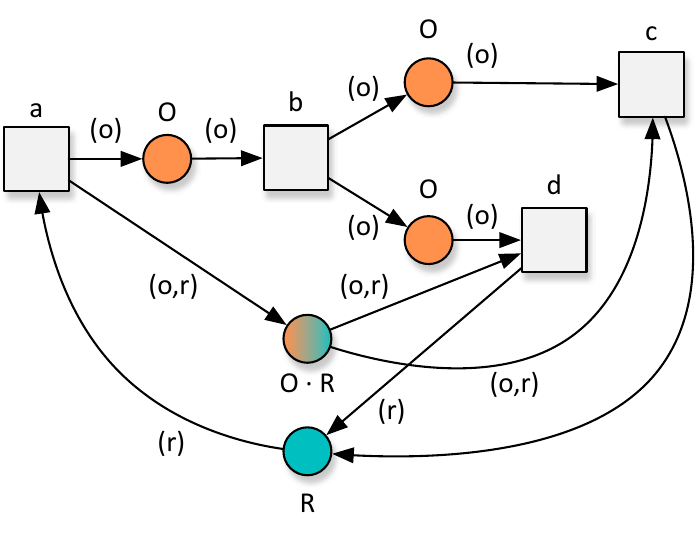}
        \caption{An $O$-resource closure of the \tpnid (a)}\label{fig:unsound-net-closure}
	\end{subfigure}\vspace*{-1mm}
\caption{An identifier-unsound \tpnid and one of its resource closures}\label{fig:resource-unwanted-closure}\vspace*{-3mm}
\end{figure}

\begin{example}
\label{ex:resource-unwanted-closure}
Figure~\ref{fig:resource-unwanted-closure} illustrates a \tpnid and one of its resource closures. One can immediately see that the resource closure does not behave as expected. Upon creation of a new object of type $O$, say, $o_1$, a resource, say, $r_5$ of type $R$ is taken from the resource place and assigned to $o_1$, keeping track of the assignment $(o_1,r_5)$ in the assignment place. Object $o_1$ then flows through the \tpnid, leading to the generation of two tokens carrying $o_1$, concurrently enabling the consumer transitions $c$ and $d$. Upon the consumption of the first of such two tokens, the assignment token $(o_1,r_5)$ is removed from the assignment place and used to return $r_5$ to the resource place. This means that the second token carrying $o_1$ stays forever stuck, and cannot be consumed, as there is no assignment token matching $o_1$ that can be used to fire the consumer transition.
\end{example}

By more closely inspecting the negative example of resource closures discussed in Example~\ref{ex:resource-unwanted-closure}, one can notice that the modelling glitch is not natively caused by the resource closure itself, but actually originates from the fact that the input \tpnid of Figure~\eqref{fig:unsound-net} is not identifier-sound. In particular, considering this \tpnid, initially marked by the empty marking (or, equivalently, the \tpnid of Figure~\eqref{fig:unsound-net-closure}, initially marked by a marking that inserts some resources in the resource place), the violated property is the one of proper completion: once one of the two concurrent consumer transitions $c$ and $d$ is fired to consume a previously created object, the identifier of the object still persists in the token enabling the other consumer transition. The resource closure actually inherits the lack of proper completion, but also suffers of lack of weak termination.

This leads to the following, natural follow-up question: how does identifier-soundness of an arbitrary input \tpnid impact on the properties of the \tpnid{s} resulting from the application of resource closure? We answer by showing three key properties:\medskip

\begin{compactenum}
\itemsep=1.2pt
\item \emph{Resource closure guarantees conservative resource management} - the application of resource closure to a \tpnid leads to a \tpnid that indeed satisfies the property of conservative resource management.
\item \emph{Resource closure preserves soundness} - Every resource closure of an identifier-sound \tpnid is an identifier-sound \tpnid.
\item \emph{Resource closure of a sound net induces boundedness} - Every resource closure of an identifier-sound \tpnid is a bounded \tpnid.
\end{compactenum}\medskip

We start with conservative resource management.

\begin{theorem}
\label{thm:conservative-closure}
Every marked resource closure of a marked \tpnid manages resources conservatively, in the sense of Definition~\ref{def:conservative-resource-management}.
\end{theorem}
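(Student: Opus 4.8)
The plan is to prove the two requirements of Definition~\ref{def:conservative-resource-management}---resource preservation and resource exclusive assignment---separately, treating each as a marking invariant established by induction on the length of firing sequences from $m'_0$. Since a full marked resource closure is obtained by recursively applying single-type closures for the distinct object types $\lambda_i \in \type_\Lambda(N)$, and each such step introduces a \emph{fresh} resource type $\restype_{\lambda_i}$ together with its own pair of places $p_r,p_s$ whose incident arc inscriptions mention only variables of type $\lambda_i$ and $\restype_{\lambda_i}$, the steps for different object types do not interfere: adding the resource machinery for $\lambda_i$ changes neither $E_N(\lambda_j)$ nor $C_N(\lambda_j)$ for $j\neq i$, nor the emitting/collecting status of any object variable. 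Hence it suffices to fix one resource type $\restype$ with its places $p_r$ (typed $\restype$) and $p_s$ (typed $\lambda\concat\restype$) and argue the two invariants for $\restype$, since the effect of any firing on $p_r,p_s$ depends only on the arcs fixed by the $\lambda$-closure step; the full statement then follows by applying the argument to each resource type independently.

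For \textbf{resource preservation}, the key observation is that no resource variable is ever an emitting variable. Indeed, inspecting $\beta'$ in Definition~\ref{def:closure}: on an emitter $t\in E_N(\lambda)$ each resource variable $r_i$ occurs both on the input arc $(p_r,t)$ and on the output arc $(t,p_s)$; on a collector it occurs on the input arc $(p_s,t)$ (and possibly also on $(t,p_r)$); and on an internal synchronizing transition it occurs on both $(p_s,t)$ and $(t,p_s)$. In every case $r_i\in\invar{t}$, so $r_i\notin\newvar{t}$. By the firing rule, a binding $\psi$ satisfies $\psi(v)\notin\id{m}$ iff $v\in\newvar{t}$; therefore every resource identifier $\psi(r_i)$ bound by a firing already occurs in $\id{m}$, and the tokens produced in $p_r$ or $p_s$ carry only such already-present resources. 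A straightforward induction then shows $\id{m}\cap I(\restype)\subseteq\id{m'_0}\cap I(\restype)$ for every $m\in\reachable{N'}{m'_0}$, which is exactly resource preservation.

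For \textbf{resource exclusive assignment} I would prove the stronger numerical invariant that, for every reachable marking $m$ and every resource $r\in I(\restype)$,
\[
m(p_r)((r)) + \sum_{o} m(p_s)(\tup{o,r}) \le 1,
\]
noting that $\restype$-identifiers can appear only in $p_r$ (as $(r)$) and in $p_s$ (as $\tup{o,r}$). The base case holds because $m'_0$ places a \emph{set} (each resource at most once) in $p_r$ and leaves $p_s$ empty. For the inductive step I would check each transition category: an emitter consumes the distinct tokens $(\psi(r_i))$ from $p_r$ and produces $\tup{\psi(x_i),\psi(r_i)}$ in $p_s$, shifting each affected resource from $p_r$ to $p_s$ while preserving its total count of $1$; a collector removes $\tup{\psi(x_i),\psi(r_i)}$ from $p_s$ and either returns $(\psi(r_i))$ to $p_r$ (count preserved at $1$) or not (count drops to $0$); an internal synchronizing transition reads and rewrites $\tup{\psi(x_i),\psi(r_i)}$ in $p_s$, leaving all counts unchanged; and all remaining arcs touch neither $p_r$ nor $p_s$. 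Injectivity of $\psi$ guarantees that $\psi(r_1),\dots,\psi(r_n)$ are pairwise distinct, so this per-resource bookkeeping never conflates two resources. The invariant $\le 1$ then yields that whenever $r\in\id{m}$ there is exactly one tuple in the support carrying $r$, of the form $(r)$ or $\tup{o,r}$ with a unique $o$.

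The main obstacle I anticipate is the inductive step for the emitter in the exclusive-assignment invariant, where a single firing may simultaneously create several objects and consume several resources: one must combine the injectivity of the binding with the local effect of $\beta'(p_r,t)=(r_1)+\dots+(r_n)$ and $\beta'(t,p_s)=\tup{x_1,r_1}+\dots+\tup{x_n,r_n}$ to show that each resource's count is \emph{relocated}, not duplicated. Everything else is routine bookkeeping; in particular, the theorem does not require the input net to be identifier sound, since both invariants are purely structural consequences of the closure construction.
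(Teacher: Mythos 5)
Your proposal is correct and follows essentially the same route as the paper's proof: resource preservation follows because the closure never makes a resource variable an emitting variable (the paper phrases this as ``the resource type has no emitter transition''), and exclusive assignment follows from a case analysis of how emitters, collectors, and internal synchronizing transitions move each resource between $p_r$ and $p_s$. Your explicit per-resource counting invariant $m(p_r)((r)) + \sum_{o} m(p_s)(\tup{o,r}) \le 1$ is just a formalized rendering of the lifecycle argument the paper gives narratively, with the added care for multi-object emitters via injectivity of bindings, which the paper glosses over by considering a single object.
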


\begin{proof}
Consider a marked \tpnid $(N,m_0)$, and a marked resource closure $(N',m'_0)$ of it. Fix an object type $\lambda$. First, notice that the resource type associated to $\lambda$ does not have any emitter transition. This proves resource preservation.

\medskip
We then consider resource exclusive assignment.
By definition, in $m'_0$, every active $\lambda$-resource $r$ is referenced by a single token carrying the unary tuple $(r)$ in the resource place for $\lambda$. The content of this place is left unaltered until a $\lambda$-emitter transition fires. Consider now the firing a $\lambda$-emitter generating a new $\lambda$ object, say $o$. As stated in Definition~\ref{def:closure} (items 3a, 3c, and 4b), this can only occur if there is a resource, say, $r$, in the resource place, and firing leads to consume the only token carrying $r$ therein, while producing a single token carrying $(o,r)$ in the assignment place for $\lambda$. Internal $\lambda$-transitions executed for $o$ may only access such a token in a read-only mode, thus leaving the content of the assignment place unaltered, as stated in Definition~\ref{def:closure} (items 3e, and 4d). The only way of removing such a token $(o,r)$ is by firing a $\lambda$-collector transition, which, according to Definition~\ref{def:closure} (items 3b, 3d,  and 4c), consumes $(o,r)$ and can possibly produce a token carrying $r$, inserted in the resource place for $\lambda$. All in all, for every resource $r$ associated to the resource place for $\lambda$ in $\marking'_0$, and for every reachable marking $\marking \in \reachable{N'}{\marking'_0}$, we have that there exists at most one token in $\marking$  either carrying $(r)$ (and contained in the resource place for $\lambda$) or $(o,r)$ for some object $o$ of type $\lambda$ (and contained in the assignment place for $\lambda$). This proves that $(N',m'_0)$ satisfies the property of resource exclusive assignment.
\end{proof}

We continue with soundness preservation. The crux here is that the enrichment with a \tpnid with resources through resource closure does not alter the evolution of emitted objects, but only constrains \emph{when} new objects can be created.

\begin{theorem}
\label{thm:soundness-preservation}
Let $(N,m_0)$ be a marked \tpnid, and $(N',m'_0)$ one of its marked resource closures. If $(N,m_0)$ is identifier-sound, then $(N',m'_0)$ is identifier-sound as well.
\end{theorem}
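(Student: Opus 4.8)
The plan is to reduce identifier soundness of $N'$ to that of $N$ via a projection that forgets, for each object type, the resource place $p_r$ and assignment place $p_s$ added by the closure, exploiting the fact that the closure never manipulates object-typed tokens: it only gates the $\lambda$-emitters on the availability of a free resource and attaches a single ``escort'' resource to every live object. Since a full resource closure is built by recursively applying single-type closures, I would prove the statement for one $\lambda$-resource closure and then lift it by induction over $\type_\Lambda(N)$. It then suffices to establish, for every object type $\lambda\in\type_\Lambda(N)$, that $N'$ is properly $\lambda$-completing and weakly $\lambda$-terminating; the freshly introduced resource types carry no emitters, so their contribution to soundness is governed directly by the conservative-management guarantee of Theorem~\ref{thm:conservative-closure} rather than by a separate lifecycle argument.

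I define $\pi:\mathbb{M}(N')\to\mathbb{M}(N)$ that drops $p_r$ and $p_s$ and leaves the remaining places unchanged, and first prove a forward simulation: every firing $\fire{m'}{t,\psi}{\bar m'}$ of $N'$ induces $\fire{\pi(m')}{t,\psi}{\pi(\bar m')}$ of $N$, because the extra inscriptions mention only resource variables and resource-typed places. Hence $\pi(\reachable{N'}{m'_0})\subseteq\reachable{N}{m_0}$, so the object-typed part of any reachable marking of $N'$ agrees with a reachable marking of $N$. Proper $\lambda$-completion transfers immediately: a $\lambda$-collector of $N'$ consumes exactly the object tokens it consumes in $N$, and whether a $\lambda$-identifier remains depends only on object-typed tokens, so properness in $N$ yields properness in $N'$.

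The crux is weak $\lambda$-termination. Given a reachable $m'$ of $N'$ carrying a live object $\cname{id}$ of type $\lambda$, I project to $m=\pi(m')$ and invoke weak $\lambda$-termination of $N$ to obtain a finite sequence $\fire{m}{\sigma}{\bar m}$ with $\cname{id}\notin\id{\bar m}$; the aim is to realize an $N'$-run with the same net effect on $\cname{id}$. The decisive invariant, furnished by Theorem~\ref{thm:conservative-closure}, is that in every reachable marking each live object carries exactly one token in its assignment place. Consequently every non-emitting step of $\sigma$ (internal synchronizations and collectors) that is enabled in $N$ is also enabled in $N'$, since the assignment token it reads or consumes is guaranteed present. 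Thus the only steps that can fail to lift are the emitters, which in $N'$ additionally require a free resource in the corresponding $p_r$.

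The main obstacle is therefore to guarantee that the emitter steps occurring in $\sigma$ can be served with free resources. I would handle this not by replaying $\sigma$ verbatim but by scheduling it with eager recycling: conservative management bounds the number of simultaneously live objects of a type by its resource supply, so whenever an emitter needs a resource that is momentarily unavailable, I first drive an already-created object of that type to a completing collector — possible by identifier soundness of $N$ applied object-wise — freeing the resource it held before firing the emitter. Formalizing this as an induction on the number of simultaneously live objects, with each completion strictly reducing the outstanding obligations, produces an $N'$-run removing $\cname{id}$. The most delicate point, which I expect to require the full strength of the identifier soundness of $N$ (both weak termination and proper completion), is arguing that such a recycling schedule always exists, i.e.\ that the resource constraints cannot induce a genuine deadlock among exactly those objects that $\sigma$ forces to coexist; establishing this deadlock-freedom is where the bulk of the technical work lies.
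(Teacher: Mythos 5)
Your treatment of proper $\lambda$-completion coincides with the paper's: every $N'$-run projects (by forgetting $p_r$ and $p_s$) to an $N$-run, and proper completion is a universal property of reachable markings, so it transfers. One imprecision worth flagging: it is not true that ``whether a $\lambda$-identifier remains depends only on object-typed tokens,'' since tokens in the assignment place $p_s$ also carry $\lambda$-identifiers; you additionally need that every $\lambda$-collector consumes the assignment token of each object it collects, which Definition~\ref{def:closure} guarantees by construction.

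The genuine gap is in weak $\lambda$-termination, and it sits exactly where you deferred it: the deadlock-freedom of your ``eager recycling'' schedule is never established, and it cannot be established at the generality you need. Two features of Definition~\ref{def:closure} defeat it. First, $F_r^{in}$ is only required to be a \emph{subset} of $\set{(t,p_r)\mid t\in C_N(\lambda)}$, so a perfectly legal resource closure may never return resources to $p_r$; once the finite initial supply is consumed, there is nothing to recycle. Second, ``driving an already-created object to a completing collector'' may itself require firing a resource-gated emitter, which is the very circularity your induction must break: take an identifier-sound $N$ whose only $\lambda$-collector consumes two \emph{distinct} $\lambda$-objects in one firing (to retire an object in $N$, one emits a partner and collects both); in any closure with a single initial resource, the lone live object can never be consumed, because creating its partner needs the resource that this object itself holds. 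In both situations conservative management (Theorem~\ref{thm:conservative-closure}) holds yet buys you nothing, and ``each completion strictly reduces the outstanding obligations'' has no well-founded measure. For contrast, the paper's proof never lifts sequences or schedules resources at all: it argues by contradiction that if a live object $o$ cannot progress in $N'$, the blockage can only be the absence of $o$'s unique assignment token, whose disappearance presupposes a collector firing on $o$ that already contradicts the proper completion proven in the first half. That route is shorter and structurally different from yours --- though note that its opening step silently assumes completion of $o$ is never blocked at a resource-gated emitter, which is precisely the difficulty your analysis correctly surfaced but did not (and, as the examples above show, could not in this form) resolve.
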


\begin{proof}
By definition of resource closure, the reachability graph $\reachable{N'}{m'_0}$, projected on the original places of $N$, is a subset of $\reachable{N}{m_0}$. In fact, the only effect of resource closure on the original marked net is to prevent the possibility of creating new objects if the resource place attached to the corresponding emitter is empty. Since proper termination is a universal property over markings, it is preserved by subsets of $\reachable{N}{m_0}$ and, noticing that resource and assignment places do not affect the status of proper completion, hence also by $\reachable{N'}{m'_0}$. This proves that $(N',m'_0)$ properly completes.

\medskip
As for weak termination of $\reachable{N'}{m'_0}$, assume by absurdum that one object $o$ of type $\lambda$ cannot progress to consumption. Since the original \tpnid is weakly terminating, this can only happen due to the absence of (the only) assignment token referencing $o$ and being present in the assignment place associated to $\lambda$. Such a token was by construction generated upon the creation of $o$, and consequently its absence can be only due to a previous firing of some collector transition that consumed a token referencing $o$. This would however mean that $\reachable{N'}{m'_0}$ is not properly completing, which contradicts what was proven before.
\end{proof}

Finally, we turn to boundedness of resource closures of sound \tpnid{s}. Intuitively, this is due to the good interaction between resources and objects:\vspace*{1.6mm}

\begin{compactitem}
\item there are boundedly many resources available;
\item new objects can be created as long as there are still available resources;
\item when no resource is available, a new object can be created only upon destruction of a currently existing one.\vspace*{1.6mm}
\end{compactitem}

\noindent This reconstructs, in the more sophisticated case of \tpnid{s}, the boundedness result for resource and instance-aware workflow nets, at the core of \cite{MontaliR16}.

\begin{theorem}
\label{thm:closure-boundedness}
Let $(N,m_0)$ be a marked \tpnid, and $(N',m'_0)$ one of its marked resource closures. If $(N,m_0)$ is identifier-sound, then $(N',m'_0)$ is bounded.
\end{theorem}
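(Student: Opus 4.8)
The plan is to combine the two results just established about resource closures---soundness preservation (Theorem~\ref{thm:soundness-preservation}) and conservative resource management (Theorem~\ref{thm:conservative-closure})---with the fact that identifier soundness entails depth-boundedness (Lemma~\ref{lemma:depth}). Concretely, I would show that in every reachable marking each place carries boundedly many tokens by controlling two quantities separately: the number of \emph{distinct} identifier vectors a place may hold (a ``width'' bound) and the multiplicity with which any such vector may occur (a ``depth'' bound). Since $(N',m'_0)$ is identifier-sound by Theorem~\ref{thm:soundness-preservation}, Lemma~\ref{lemma:depth} immediately delivers the depth bound: for every place $p$ and every vector $\vec{\cname{id}}\in\colset(p)$ there is a uniform $k$ with $m(p)(\vec{\cname{id}})\le k$ on all reachable $m$. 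It therefore remains to bound the number of distinct identifiers active in any reachable marking, and this is where conservative resource management enters.

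First I would fix an object type $\lambda\in\type_\Lambda(N)$ and let $\restype$ be its associated resource type, with resource place $p_r$ and assignment place $p_s$ introduced by the $\lambda$-closure. By Definition~\ref{def:marked-closure}, $m'_0$ assigns a \emph{finite} set of $\restype$-resources to $p_r$; call its cardinality $n_\lambda$. Because $\restype$ has no emitter (resource preservation, shown in the proof of Theorem~\ref{thm:conservative-closure}), the set of resources ever appearing in a reachable marking is contained in this finite pool. The proof of Theorem~\ref{thm:conservative-closure} moreover shows that, in every reachable marking $m$, each such resource $r$ occurs in exactly one token---either as $(r)$ in $p_r$ or as $(o,r)$ in $p_s$ for a unique object $o$. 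This immediately bounds $|m(p_r)|$ and $|m(p_s)|$ by $n_\lambda$, and, crucially, exhibits an injection from the currently active objects of type $\lambda$ into the resource pool: every active $o$ was created together with exactly one assignment $(o,r)$, and exclusive assignment guarantees that distinct active objects use distinct resources. Hence at most $n_\lambda$ identifiers of type $\lambda$ are active in any reachable marking, and the analogous bound holds for every object type of the full resource closure.

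Having bounded the active identifiers per type, I would then assemble the place bounds. For the resource and assignment places the argument above already yields $|m(p_r)|,|m(p_s)|\le n_\lambda$. For any remaining place $p$, every vector $\vec{\cname{id}}\in\supp{m(p)}$ is composed of identifiers of $\id{m}$ whose types are exactly the entries of $\alpha(p)$; since each object type contributes at most finitely many active identifiers (the $n_\lambda$ above) and each resource type contributes its finite pool, the number of distinct vectors in $\supp{m(p)}$ is bounded by the product of these per-type counts. Multiplying this width bound by the uniform depth bound $k$ from Lemma~\ref{lemma:depth} yields a uniform bound on $|m(p)|$. As this holds for every place, $(N',m'_0)$ is bounded in the sense of Definition~\ref{def:boundedness}, as required.

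The main obstacle I anticipate is the middle step: turning ``conservative resource management'' into a genuine cardinality bound on active object identifiers. The delicate point is to argue that the correspondence between an active object and its assigned resource is truly injective across the whole marking---one must invoke both that each emitter creates exactly one assignment per new object (Definition~\ref{def:closure}, item 4b) and that exclusive assignment forbids two active objects from sharing a resource---and to ensure this reasoning is applied uniformly for \emph{every} object type of the full resource closure, not merely for a single $\lambda$. Once this injection is in place, its combination with depth-boundedness to bound each $|m(p)|$ is routine.
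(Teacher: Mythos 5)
Your proposal is correct and takes essentially the same route as the paper's own proof: identifier soundness of $(N',m'_0)$ via Theorem~\ref{thm:soundness-preservation} combined with Lemma~\ref{lemma:depth} yields depth-boundedness, while conservative resource management (Theorem~\ref{thm:conservative-closure}) provides the injection of simultaneously active objects into the finite resource pool, i.e., width-boundedness, with resource and assignment places bounded directly by the pool size. Your write-up is only more explicit than the paper's in assembling the final per-place bound as a product of the width and depth bounds, a step the paper leaves implicit.
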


\begin{proof}
Boundedness immediately holds for resource identifiers in $\reachable{N'}{m'_0}$, thanks to Theorem~\ref{thm:conservative-closure} and the fact that $(N',m'_0)$ manages resources conservatively, and to the fact that, by construction of resource closure, every resource originally present in $m'_0$ is either carried by one token present in its corresponding resource place, or by one token present in its corresponding assignment place.

\medskip
We then prove the theorem by showing that $(N',m'_0)$ is width- and depth-bounded when considering object identifiers.

Depth boundedness is immediately obtained by recalling that, by Theorem~\ref{thm:soundness-preservation}, identifier-soundness of $(N,m_0)$ implies identifier-soundness of $(N',m'_0)$, which is consequently depth-bounded by Lem\-ma~\ref{lemma:depth}.

\medskip
Width-boundedness instead follows from the observation, already used in the proof of Theorem~\ref{thm:conservative-closure}, that for every marking $m \in \reachable{N'}{m'_0}$, if $o \in \id{m}$, then there is exactly one token referencing $o$ in the assignment place associated to the type of $o$. Hence the maximum number of object identifiers is bounded by the number of resources present in $m'_0$.
\end{proof}

By combining Theorems~\ref{thm:conservative-closure} and~\ref{thm:model-checking}, we thus obtain the following conclusive result, that relates to the verification results discussed in Section~\ref{sec:verification}.

\begin{corollary}
\label{cor:model-checking-closure}
Model checking of $\mu\L{\text{-}FO}^{\text{PNID}}$ and $LTL\text{-}FO_p^{\text{PNID}}$ is decidable for marked resource closures of identifier-sound \tpnids.
\end{corollary}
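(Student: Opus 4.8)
The plan is to obtain the corollary by composition, reducing it to checking that a marked resource closure of an identifier-sound \tpnid satisfies exactly the hypotheses under which model checking was already shown to be decidable. I would fix an arbitrary marked resource closure $(N',m'_0)$ of an identifier-sound marked \tpnid $(N,m_0)$, and observe that Theorem~\ref{thm:model-checking} already delivers decidability of model checking of $\mu\L{\text{-}FO}^{\text{PNID}}$ and $LTL\text{-}FO_p^{\text{PNID}}$ for \emph{every} bounded, marked \tpnid. Consequently the only thing that must be supplied is boundedness of $(N',m'_0)$.

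First I would invoke Theorem~\ref{thm:closure-boundedness}: since $(N,m_0)$ is identifier-sound, its marked resource closure $(N',m'_0)$ is bounded. Unpacking that theorem confirms the three sources of boundedness already in place: resource identifiers are bounded because $(N',m'_0)$ manages resources conservatively (Theorem~\ref{thm:conservative-closure}), object identifiers are depth-bounded because identifier-soundness is preserved by the closure (Theorem~\ref{thm:soundness-preservation}) and identifier-soundness implies depth-boundedness (Lemma~\ref{lemma:depth}), and the same soundness argument yields width-boundedness. Thus $(N',m'_0)$ falls within the class covered by Theorem~\ref{thm:model-checking}.

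With boundedness established, I would simply apply Theorem~\ref{thm:model-checking} to $(N',m'_0)$ and conclude. Unpacking the decidability one further level for completeness, it rests on two facts available for \emph{any} marked \tpnid: its induced transition system is generic (the remark following Definition~\ref{def:generic-ts}), and boundedness of the net corresponds to state-boundedness of that induced transition system; decidability of the two logics over state-bounded generic transition systems is then imported from the cited results on first-order temporal logics over generic systems.

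Since the argument is purely a composition of results proved earlier in the section, I do not expect a genuine obstacle; the whole content is that the class of \emph{marked resource closures of identifier-sound \tpnids} is a subclass of bounded marked \tpnids, which is precisely what Theorem~\ref{thm:closure-boundedness} certifies. The only point I would be careful to state explicitly is the quantification: the conclusion must be claimed for every resource closure of $(N,m_0)$ (there are finitely many up to the choice of initial resource assignment to each resource place), and not merely for one canonical closure, so that the decidability verdict is uniform over the entire class.
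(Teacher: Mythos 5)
Your proposal is correct and is essentially the paper's own argument: the corollary is obtained by composing Theorem~\ref{thm:closure-boundedness} (every marked resource closure of an identifier-sound \tpnid is bounded) with Theorem~\ref{thm:model-checking} (decidability of both logics for bounded, marked \tpnids). Note that the paper's one-line justification nominally cites Theorem~\ref{thm:conservative-closure}, but the boundedness result you invoke is the one actually needed, so your reconstruction matches the intended reasoning.
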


\subsection{Discussion}
We now briefly comment on the interestingness and generality of our approach to resource management. First and foremost, the conservative management of resources is a quite general notion, which is for example naturally guaranteed in processes operating over material objects (called \emph{material handling systems} in \cite{FaDV21}). 
In every snapshot of such system, each object can either be under the responsibility of one and only one resource (e.g., a baggage being inspected by an operator), or moving from one resource to another (e.g., a baggage waiting for inspection in a queue). In this light, resource closure captures the prototypical case where an object is associated to a single resource alongside its entire lifecycle. While taking verbatim this notion is hence unnecessarily restrictive, it allowed us to highlight the essential features needed to suitably control the way a \tpnid interacts with resources:\vspace*{1.8mm}

\begin{compactenum}
\item resources should be managed conservatively;
\item at every point in time, resources should control the number of simultaneously active objects of each type (so that a new object can be created only if there is an available resource to assign to it).
\end{compactenum}\smallskip

\noindent The latter requirement can be achieved by more fine-grained notions of resource closures, applied to sub-nets operating over the same object type. For example, if an object type is manipulated via two subnets that are composed sequentially, concurrently, or in mutual exclusion, one may apply resource closure over each subnet (possibly using distinct resource types), while retaining all the good properties introduced here. This calls for a follow-up investigation: infusing resource closure within the constitutive blocks of typed Jackson nets, ensuring that each block operates over a dedicated resource type satisfying the two features 1. and 2.~recalled above.

\smallskip
Last but not least, such two features yield the key property that the resulting net is bounded, as stated in Theorem~\ref{thm:closure-boundedness}. This should by no means be interpreted as the fact that the \tpnid overall operates over boundedly many objects: in fact, unboundedly many objects can be created and handled, provided that they are not all simultaneously active within the system, but distribute over time depending on the amount of available resources.

\vspace{-1mm}
\section{Related work}\label{sec:relatedwork}
This work belongs to the line of research that aims at augmenting pure control-flow description of processes with data, and study formal properties of the resulting, integrated models.
When doing so, it becomes natural to move from case-centric process models whose analysis focuses on the evolution of a single instance in isolation, to so-called \emph{object-centric process models} where multiple related instances of the same or different processes co-evolve. This is relevant for process modeling, analysis, and mining~\cite{Aalst19}.

Different approaches to capture the control-flow backbone of object-oriented processes have been studied in literature, including declarative \cite{ArtaleKMA19} and database-centric models \cite{MonC16}. In this work, we follow the Petri net tradition, which comes with three different strategies to tackle object-centric processes.

A first strategy is to represent objects implicitly. The most prominent example in this vein is constituted by proclets \cite{Fahland19}. Here, each object type comes with a Petri net specifying its life cycle. Special ports, annotated with multiplicity constraints, are used to express generation and synchronization points in the process, operating over tokens that are implicitly related to co-referring objects.
Correctness analysis of proclets is an open research topic.

\medskip
A second strategy is to represent objects explicitly. Models adopting this strategy are typically extensions of $\nu$-PNs \cite{RVFE11}, building on their ability to generate (fresh) object identifiers and express guarded transitions relating multiple objects at once.
While $\nu$-PNs attach a single object to each token,
Petri nets with identifiers (PNIDs)~\cite{HSVW09} use vectors of identifiers on tokens, representing database transactions.
Representing and evolving relationships between objects call for extending this to tuples of objects, in the style of~\cite{HSVW09}. For such Petri nets with identifiers (PNIDs),~\cite{HSVW09} provides patterns capturing different types of database transactions.
The ISML approach \cite{PolyvyanyyWOB19} equips Petri nets with identifiers (PNIDs)~\cite{HSVW09} with the ability of manipulating populations of objects defining the extensional level of an ORM data model.
Transitions can be executed if they do not lead to violating the constraints captured in the data model.
For such models, correctness properties are assessed by imposing that the overall set of object identifiers is finite, and fixed a-priori. This ensures that the overall state space is indeed finite, and can be analyzed using conventional methods.
Catalog-nets~\cite{GhilardiGMR20} extend PNIDs with the ability of querying a read-only database containing background information.
Correctness properties are checked parametrically to the content of read-only database.
Decidability and other meta-properties, as well as actual algorithms for verification based on SMT model-checking, are given for safety properties, whereas (data-aware) soundness can only be assessed for state-bounded systems 
\cite{BCMD14,MonC16}.

\medskip
The third, final strategy for modeling object-centric processes with Petri nets is to rely on models that highlight how multiple objects of different types may flow through shared transitions, without considering object identifier values. This approach is followed in \cite{AalstB20}, where object-centric nets are extracted from event logs, where logged events might come with sets of object identifiers. Soundness for this model is studied in \cite{LoMR21}, where the authors propose to check related correctness criteria for object types, without considering concrete object identifiers, or for single objects in isolation that are still allowed to interact with the system environment along their life-cycles.
Similarly to the approach studied in this paper, the authors in~\cite{LoMR21} assume that the system model can have any number of objects being simultaneously active.

\medskip
The approach studied in this paper focuses on the essence of Petri net-based object-centric processes adopting the explicit approach, that is, grounded on PNIDs. We provide, for the first time, a notion of \emph{identifier soundness} that conceptually captures the intended evolution of objects within a net, show that such a property is undecidable to check in general, and provide a pattern-based construction technique that guarantees to produce identifier-sound models.
Other works that propose more high-level extensions of classical WF-nets as well as the related notion(s) of soundness.
In~\cite{LeoniFM18,FelliLM19,LeoniFM20}, the authors investigated data-aware soundness for data Petri nets, in which a net is extended with guards manipulating a finite set of variables associated with the net.
That type of soundness was shown to be decidable.
In~\cite{MontaliR16}, the authors proposed both a workflow variant of $\nu$-Petri nets and its resource-aware extension. The authors also defined a suitable notion of soundness for such nets and demonstrated that it is decidable by reducing the soundness checking problem to a verification task over another first order logic-based formalism.
\cite{HaarmannW20,CaseModels2020} considered the soundness property for BPMN process models with data objects that can be related to multiple cases.
The approach consists of several transformation steps: from BPMN to a colored Petri net and then to a resource-constrained workflow net. The authors then check k-soundness against the latter.

\medskip
Resource-constrained workflow nets pose different requirements on soundness. In \cite{BBS07} the authors studied a specific class of WFR-nets for which soundness was shown to be decidable.
In \cite{HeeSSV05,SidorovaS13} a more general class of Resource-Constrained Workflow Nets (RCWF-nets) was defined.
The constraints are imposed on resources and require that all resources that are initially available are present again after all cases terminate, and that for any reachable marking, the number of available resources does not override the number of initially available resources.
In \cite{HeeSSV05} it was proven that for RCWF-nets with a single resource type generalized soundness can be effectively checked in polynomial time.
Decidability of generalized soundness for RCWF-nets with an arbitrary number of resource places was shown in~\cite{SidorovaS13}.

\section{Conclusions}
\label{sec:conclusions}

Achieving harmony in models that describe how processes data objects manipulate is challenging. In this paper, we use typed Petri nets with Identifiers (\tpnids) to model these complex interactions of multiple objects, referred through their identifiers. We propose identifier soundness as a correctness criterion that conceptually captures the expected evolution of each object.
Identifier soundness consists of two conditions: weak termination, i.e., that any identifier that is created is eventually removed, and proper type completion, i.e., when a collecting transition fires for an identifier of a type, the type should be removed from the resulting marking.
Identifier soundness is in general undecidable for \tpnids.
For two subclasses we show that identifier soundness is guaranteed, and that the overall model remains live.
On top of that, we propose a resource-aware extension of \tpnids,
in which all object manipulations are systematically guarded by a finite number of typed resources.
For this class of \tpnids we propose a correctness criterion similar to identifier soundness, which also takes into account conditions for correct resource management. The resource soundness is deemed to be undecidable as well.

\medskip
Many systems allow for a dynamic number of simultaneously active objects.
In theory, this number can be infinite, and thus such models become  width-unbounded.
However, for many systems there is a natural upper bound, which can be either assumed or guaranteed with different modeling techniques (such as multiplicity upper bounds on objects \cite{MonC16} or resources \cite{MonR16,SidorovaS13}.  This gives potential for different directions on new analysis techniques.
As an example, some models may have a minimum bound such that its correct behavior is guaranteed above this bound, in a similar way as 1-soundness of WF-nets guarantees correctness of its EC-closure.
One can extend \tpnids by enriching objects with attributes over different datatypes, and transitions with the ability to query such attributes and express conditions and updates over them, using their datatype-specific predicates. Of particular interest are comparisons and arithmetic operations for numerical datatypes. This calls for combining the techniques studied in this paper with data abstraction techniques used to deal with numerical datatypes, possibly equipped with arithmetics \cite{FelliLM19,FeMW22}.

\medskip
We plan to provide tool support for the designer of such systems.
Although many correctness criteria are undecidable, designers should be left in the dark.
Since the ISM-suite~\cite{WerfP20} already allows to model \tpnids, 
we intend to work on extending it with verification techniques to support the modeler in designing systems where processes and data are in resonance.

\end{document}